\newtheorem{thm}{Theorem}
\newcounter{TempEqCnt}
\begin{document}

\title{Coherent FDA Radar: Transmitter and Receiver Design and Analysis}

\author{Yan Sun, Ming-jie Jia, Wen-qin Wang,~\IEEEmembership{Senior member,~IEEE}, Maria Sabrina Greco,~\IEEEmembership{Fellow,~IEEE}, Fulvio Gini,~\IEEEmembership{Fellow,~IEEE} and Shunsheng Zhang
\thanks{The work of Y. Sun, M.-J Jia, W.-Q. Wang, and S. Zhang was supported by the National Natural Science Foundation of China under Grant 62171092. The work of F. Gini and M. S. Greco has been partially supported by the Italian Ministry of Education and Research (MUR) in the framework of the FoReLab project (Departments of Excellence).~\itshape (Corresponding author: Wen-qin Wang)}
\thanks{Yan Sun, Ming-jie Jia, Wen-qin Wang and Shunsheng Zhang are with University of Electronic Science and Technology of China, Chengdu 611731, China (e-mail: sunyan\_1995@163.com; jiamingjie@std.uestc.edu.cn; wqwang@uestc.edu.cn; zhangss@uestc.edu.cn).}
\thanks{Maria Sabrina Greco and Fulvio Gini are with University of Pisa, Pisa, Italy (e-mail: maria.greco@unipi.it, fulvio.gini@unipi.it).}}



\maketitle

\begin{abstract}
The combination of frequency diverse array (FDA) radar technology with the multiple input multiple output (MIMO) radar architecture and waveform diversity techniques potentially promises a high integration gain with respect to conventional phased array (PA) radars. In this paper, we propose an approach to the design of the transmitter and the receiver of a coherent FDA (C-FDA) radar, that enables it to perform the demodulation with spectral overlapping, due to the small frequency offset. To this purpose, we derive the generalized space-time-range signal model and we prove that the proposed C-FDA radar has a higher coherent array gain than a PA radar, and at the same time, it effectively resolves the secondary range-ambiguous (SRA) problem of FDA-MIMO radar, allowing for mainlobe interference suppression and range-ambiguous clutter suppression. Numerical analysis results prove the effectiveness of the proposed C-FDA radar in terms on anti-interference and anti-clutter capabilities over conventional radars.
 
\end{abstract}

\begin{IEEEkeywords}
Frequency diverse array multiple-input multiple-output radar, phased array radar, coherent processing gain, mainlobe interference suppression, range-ambiguous clutter suppression.
\end{IEEEkeywords}

\section{Introduction}

\IEEEPARstart{M}{ultiple}-antennas and multiple-channels radar systems have been widely employed in challenging detection missions with significant developments in array signal processing. As conventional radars, phased-array (PA) radar \cite{ref1} and multiple-input multiple-output (MIMO) radar \cite{ref2}\cite{ref3} have been extensively investigated due to their advantages, especially the high coherent array gain for PA radar \cite{ref4} and the high spatial resolution brought by the virtual array for MIMO radar \cite{ref5}. Moreover, inspired by array diversity, frequency diverse array (FDA) radar \cite{ref6} has been proposed to break through the application limitations of conventional radars due to its range-dependent beam \cite{ref7}. After almost two decades of research, FDA-MIMO radar \cite{ref8}, requiring a frequency offset not less than the bandwidth to ensure waveform orthogonality, has been extensively investigated for airborne radar\cite{ref9}, vehicular radar\cite{ref10}, sonar\cite{ref11}, and integrated sensing and communication system\cite{ref12}\cite{ref13}. Apart from the benefit of range-dependency, FDA-MIMO radar enjoys a higher spatial resolution and more array structural degrees of freedom (DOFs) inherited from MIMO radar. However, such merits are achieved at the cost of losing the transmit signal coherency, resulting in a lower array gain and less robustness to noisy background than PA radar.

Aiming at increasing the coherent processing gain to yield a larger signal-to-noise (SNR) for target detection, many approaches to the design of FDA-MIMO radar have been developed in terms of subarray design, space-time coding, and transceiver optimization. Firstly, \cite{ref14} and \cite{ref15} proposed a tradeoff technique between PA and MIMO radar, called phased-MIMO radar by partitioning the transmit array into a number of subarrays that are beamformed towards a direction of interest. Focusing on the performance improvement of joint range and angle estimation, \cite{ref16} designed a subaperture scheme for FDA radar and \cite{ref17} discussed the Cram\'{e}r-Rao Bound (CRB) under the assumption of time-variant beam and additive colored noise. Secondly, \cite{ref18} designed a space-time coding approach to improve the range resolution by using the transmit diversity technique. Moreover, \cite{ref19} proposed a method of quadratic phase coding for vertical FDA radar to effectively suppress range-ambiguous clutter. Thirdly, \cite{ref20} addressed a joint transmitter and receiver optimization problem for FDA-MIMO radar to improve detection performance under diverse clutter environments. For target localization, \cite{ref21} proposed a coprime frequency offset strategy and coprime array structure for FDA-MIMO radar to enhance the accuracy of parameter estimation. Furthermore, some approaches focus on the transmit beampattern synthesis, especially range-dependent and time-variance beam for FDA-MIMO radar, to obtain focused main-beam by optimization of frequency offset \cite{ref22}\cite{ref23}\cite{ref24}\cite{ref25}. Nevertheless, these methods are addressed to some specific detection scenarios with the given optimization objectives and not an optimal combination of PA and FDA-MIMO radar in the generalized model.

The main advantages of FDA-MIMO radar can be summarized into two categories: mainlobe interference suppression and range-ambiguous clutter suppression. The first category uses the range-dependency of FDA-MIMO radar to improve the output signal-to-interference-plus-noise ratio (SINR) for target detection when the jamming has the same azimuth as the target but different ranges\cite{ref26}\cite{ref27}. In this case, apart from generating the range-dimensional zero-trapping, range-dependent transmit spatial frequency of FDA-MIMO radar can be exploited to effectively improve the range resolution by the secondary range cell\cite{ref28}, discriminate the mainlobe deceptive jamming\cite{ref29}, and suppress sidelobe deceptive jamming\cite{ref30}. In addition, for the covariance matrix estimation in adaptive filtering, \cite{ref31} and \cite{ref32} proposed the waveform optimization for MIMO radar and \cite{ref33} analyzed the adaptive power allocation for FDA-MIMO radar, emphasizing the benefits of waveform diversity. 

The second category includes range-ambiguous clutter suppression and mainlobe clutter suppression in moving target detection by exploiting the space-time-range adaptive processing (STRAP)\cite{ref25}\cite{ref27}, which is an extension of space-time adaptive processing (STAP)\cite{ref34} for FDA-MIMO radar \cite{ref35}\cite{ref36}. \cite{ref37} proposed a method to separate range-ambiguous clutter and range-unambiguous clutter by using the operation of secondary range dependence compensation (SRDC), which is also systematically summarized in the slow-time technology application for FDA-MIMO radar \cite{ref38}. Moreover, through the joint information of range and angle in transmit spatial frequency, \cite{ref39} and \cite{ref40} contributed to the clutter covariance matrix estimation for FDA-MIMO radar. However, waveform orthogonality for the aforementioned FDA-MIMO radar and signal coherency for PA radar are contradictory from the perspective of designing the frequency offset.

The technique that combines the advantages of PA and FDA-MIMO radar is naturally known as coherent FDA (C-FDA) radar. It takes a smaller frequency offset to simultaneously obtain higher coherent process gain, waveform diversity, and range-dependency. Several approaches have proposed to design C-FDA receivers by using Doppler pulse coding \cite{ref25}, angle-dependent matched filtering \cite{ref41}, and randomly assigned carrier frequency\cite{ref42}, but they do not yield large array gains and the receiver models are not generalizable to all FDA-MIMO radar scenarios because of their different design goals (\cite{ref25} aims to suppress mainlobe clutter while \cite{ref41} concerns moving target indication (MTI), and \cite{ref42} is for active sensing). To develop the C-FDA radar without the limitation on the frequency offset requirements, the following difficulties need to be overcome.
\begin{enumerate}
\item The FDA-MIMO radar receiver cannot be applied to C-FDA radar since each transmit signal spectrum is highly overlapped due to the frequency offset that is much smaller than the bandwidth. 
\item After joint transmitter and receiver processing, C-FDA radar should fulfill a coherent processing gain no less than the PA radar to be more robust against background noise.
\item The C-FDA radar receiver requires multi-channel processing to ensure waveform diversity while keeping range-dependency from the frequency offset, inheriting the capability of mainlobe interference suppression and range-ambiguous clutter suppression from FDA-MIMO radar.
\end{enumerate}

To address these issues, this paper proposes a new approach for the design of the transmitter and receiver of C-FDA radar. Furthermore, focusing on mainlobe interference suppression and range-ambiguous clutter suppression, this paper analyzes the weakness of FDA-MIMO radar due to the requirement on the frequency offset and introduces the benefits of C-FDA radar, where the limitation on frequency offset doesn't apply any more. Overall, the proposed C-FDA radar:
\begin{enumerate}
\item avoids the channel confusion of target echo by implementing multi-channel frequency mixing (MFM) and multi-channel matched filtering (MMF).
\item enjoys higher coherent processing gain than PA radar.
\item has range-dependent transmit spatial frequency as FDA-MIMO radar.
\item resolves secondary range-ambiguity (SRA) caused by a large frequency offset.
\end{enumerate}

The rest of paper is organized as follows. Generalized model preliminaries of PA and FDA-MIMO radar are presented in Section II, where some definitions required by the proposed C-FDA radar are emphasized. Then we introduce the principles of transmitter and receiver for C-FDA radar in Section III. Section IV and Section V present the performance analysis of C-FDA radar in terms of mainlobe interference suppression and range-ambiguous clutter suppression, respectively. In Section VI, we show the numerical results to verify the effectiveness of C-FDA radar. Section VII draws our conclusions at last.

\textbf{Notations}: Vectors and matrices are denoted by bolded lowercase and uppercase, respectively. $(\cdot)^T$, $(\cdot)^H$, $\odot$, $\circledast$, and $\otimes$ denote the transpose, conjugate transpose, Hadamard product, convolution operation, and Kronecker product, respectively. $\boldsymbol{I}_M$ and $\mathbf{1}_M$ stand for the M-dimentional identity matrix, the M-dimentional all 1 vector or matrix. $\mathbb{C}$ and $\mathbb{N}$ are the sets of complex numbers and integers, and $\mathbb{C}^{N\times M}$ means the Euclidean space of ($N\times M$)-dimensional complex matrices (or vectors if $M=1$ or $N=1$). $\mathrm{diag}(x_1,...,x_N)$ represents a diagonal matrix with its diagonal elements of $x_1,...,x_N$. $\lfloor \cdot \rfloor $ denotes a downward rounding operation. $\mathcal{F} \left\{ \cdot \right\}$, $\mathrm{E}\left\{ \cdot \right\}$ denote the Fourier transform operation and the expection operation, respectively. 

\section{PA and FDA-MIMO radar:\\ Generalized model}

Consider a monostatic radar system with a linear array deployed on the aircraft to detect the moving target of interest. The platform is flying with the yaw angle\footnote[1]{The yaw angle means the angle between the flight direction and the positive direction of X-axis \cite{ref35} \cite{ref39}.} $\psi$ and velocity $v_a$ at altitude $H$. $M$ and $N$ denote the number of transmit and receive elements, respectively. The transmit signals of $M$ elements can be defined by a $M$-dimensional vector.
\begin{equation}
\boldsymbol{u}\left( t \right) \triangleq \left[ \begin{matrix}
        u_1\left( t \right)&    \cdots&        u_m\left( t \right)&            \cdots&         u_M\left( t \right)\\
\end{matrix} \right] ^T
\label{eq.1}
\end{equation}
where $u_m\left( t \right)$ denotes the signal transmitted by the m-th transmit element. 
\begin{equation}
u_m\left( t \right) =\phi \left( t \right) \exp \left\{ j2\pi [f_c+\left( m-1 \right) \varDelta f]t \right\}
\label{eq.2}
\end{equation}
And $\phi \left( t \right)$ is the baseband signal with unit energy of time width $T_p$ and bandwidth $B$. $f_c$ is the carrier frequency. $\varDelta f$ denotes the frequency offset, which determines the generalized signal model for PA and FDA-MIMO radar.
\begin{equation}
\int_{T_p}{\boldsymbol{u}\left( t \right) \boldsymbol{u}^H\left( t \right) \mathrm{d}t}=\left\{ \begin{array}{c}
        \mathbf{1}_M,~\varDelta f=0\\
        \boldsymbol{I}_M,~\varDelta f\geqslant B\\
\end{array} \right. 
\label{eq.3}
\end{equation}
where PA radar requires $\varDelta f=0$ and FDA-MIMO radar requires $\varDelta f\geqslant B$ to satisfy the waveform orthogonality (See in Fig.\ref{fig_2a}). 

Assume that the azimuth, elevation, range, velocity of the target of interest are $\varphi_t$, $\theta_t$, $R_t=H/\sin\theta_t$, and $v_t$, respectively. For one fast-time snapshot sampled in the noise-domain environment, the $NK\times 1$ vector received by $N$ receive elements for $K$ pulses on slow-time dimension can be modeled as
\begin{equation}
 \boldsymbol{x}\left( t \right) =\boldsymbol{x}_t\left( t \right) +\boldsymbol{x}_d\left( t \right) +\boldsymbol{n}\left( t \right) 
\label{eq.4}
\end{equation} 
where $\boldsymbol{x}\left( t \right)$, $\boldsymbol{x}_d\left( t \right)$, and $\boldsymbol{n}\left( t \right)$ are the independent components related to the target, disturbance\footnote[2]{Disturbance includes jamming or clutter according to the specific scenario.}, and noise, respectively. Under the assumptions of far-field point target and narrow-band model \cite{ref1}, the target signal can be expressed as
\begin{equation}
\boldsymbol{x}_t\left( t \right) =\xi _t\left[ \boldsymbol{a}_{t}^{T}\left( \varphi _t,\theta _t \right) \boldsymbol{u}\left( t \right) \right] \boldsymbol{a}_r\left( \varphi _t,\theta _t \right) \otimes \boldsymbol{a}_D\left( f_D \right) 
\label{eq.5}
\end{equation}
where $\xi _t$ represents the target scattering coefficient.
\begin{subequations}
\begin{align}
\boldsymbol{a}_t\left( \varphi ,\theta \right) =&\left[ \begin{matrix}
        1&              e^{j2\pi f_{\varphi}}&          \cdots&         e^{j2\pi \left( M-1 \right) f_{\varphi}}\\
\end{matrix} \right] ^T\label{eq.6a}
\\
\boldsymbol{a}_r\left( \varphi ,\theta \right) =&\left[ \begin{matrix}
        1&              e^{j2\pi f_{\varphi}}&          \cdots&         e^{j2\pi \left( N-1 \right) f_{\varphi}}\\
\end{matrix} \right] ^T\label{eq.6b}
\\
\boldsymbol{a}_D\left(f_D \right) =&\left[ \begin{matrix}
        1&              e^{j2\pi f_D}&          \cdots&         e^{j2\pi \left( K-1 \right) f_D}\\
\end{matrix} \right] ^T\label{eq.6c}
\end{align}
\end{subequations}
And $f_{\varphi}$ and $f_D$ can be written as
\begin{subequations}
\begin{align}
f_{\varphi}=&\frac{d}{\lambda}\cos \varphi \cos \theta 
\label{eq.7a}
\\
f_D=&\frac{2\left( v_a+v_t \right) T}{\lambda}\cos \left( \varphi +\psi \right) \cos \theta \label{eq.7b}
\end{align}
\end{subequations}
where $\lambda$, $d$, and $T$ denote the wavelength, array spacing and pulse repetition interval (PRI), respectively. 

After matched filtering the receive signals, the back-scattered signal can be sampled. For PA radar, the $NK\times 1$ target vector can be obtained by the baseband waveforms $\phi \left( t \right)$ \cite{ref1}, 
\begin{align}
\boldsymbol{t}_{\mathrm{PA}}=\int_{T_p}{\boldsymbol{x}_t\left( t \right) \phi ^*\left( t \right) \mathrm{d}t}=\sqrt{M}\xi _t\boldsymbol{a}_r\left( \varphi _t,\theta _t \right) \otimes \boldsymbol{a}_D\left(f_D\right) 
\label{eq.8}
\end{align}
For FDA-MIMO radar, the $MNK\times 1$ target vector can be recovered by the waveforms $\phi \left( t \right)$ after the frequency mixing operation \cite{ref5}\cite{ref17}\cite{ref36}.
\begin{align}
\boldsymbol{t}_{\mathrm{F-M}}=&\int_{T_p}{\left[ \boldsymbol{d}\left( t \right) \otimes \boldsymbol{x}_t\left( t \right) \right] \phi ^*\left( t \right) \mathrm{d}t}
\nonumber\\
=\xi _t&\left[ \boldsymbol{a}_t\left( \varphi _t,\theta _t \right) \odot \boldsymbol{a}_R\left( R_t \right) \right] \otimes \boldsymbol{a}_r\left( \varphi _t,\theta _t \right) \otimes \boldsymbol{a}_D\left( f_D \right) 
\label{eq.9}
\end{align}
where $\boldsymbol{d}\left( t \right)$ and $\boldsymbol{a}_R\left( R \right)$ denote the frequency mixing vector and transmit range-dependent vector, respectively.
\begin{subequations}
\begin{align}
\boldsymbol{d}\left( t \right) =&\left[ \begin{matrix}
        1&              e^{-j2\pi \varDelta ft}&                \cdots&         e^{-j2\pi \left( M-1 \right) \varDelta ft}\\
\end{matrix} \right] ^T
\label{eq.10a}
\\
\boldsymbol{a}_R\left( R \right) =&\left[ \begin{matrix}
        1&              e^{j2\pi f_R}&          \cdots&         e^{j2\pi \left( M-1 \right) f_R}\\
\end{matrix} \right] ^T\label{eq.10b}
\end{align}
\end{subequations}
Particularly, $f_R=-2R\varDelta f/c$ is the range-dependent frequency related to the frequency offset $\varDelta f$ \cite{ref7}\cite{ref28}.

Furthermore, the coherent array gain\footnote[3]{The array gain reflects the improvement in SNR obtained by using the array. It is defined to be the ratio of the SNR at the output of the array to the input SNR \cite{ref1}\cite{ref14}.} can be defined as the improvement in SNR by using the receive beamforming vector $\boldsymbol{w}_{\mathrm{PA}}=\boldsymbol{t}_{\mathrm{PA}}$ and $\boldsymbol{w}_{\mathrm{F-M}}=\boldsymbol{t}_{\mathrm{F-M}}$. 
\begin{subequations}
\begin{align}
\varOmega _{\mathrm{PA}}=&\frac{\sigma _{{t}}^{2}\left| \boldsymbol{w}_{\mathrm{PA}}^{H}\boldsymbol{t}_{\mathrm{PA}} \right|^2}{\boldsymbol{w}_{\mathrm{PA}}^{H}\boldsymbol{R}_{\mathrm{n}}^{\left( \mathrm{PA} \right)}\boldsymbol{w}_\mathrm{PA}}=M^2NK\cdot\mathrm{SNR}_{\mathrm{in}}
\label{eq.11a}
\\
\varOmega _{\mathrm{F}-\mathrm{M}}=&\frac{\sigma _{{t}}^{2}\left| \boldsymbol{w}_{\mathrm{F}-\mathrm{M}}^{H}\boldsymbol{t}_{\mathrm{F}-\mathrm{M}} \right|^2}{\boldsymbol{w}_{\mathrm{F-M}}^{H}\boldsymbol{R}_{\mathrm{n}}^{\left( \mathrm{F}-\mathrm{M} \right)}\boldsymbol{w}_\mathrm{F-M}}=MNK\cdot\mathrm{SNR}_{\mathrm{in}}
\label{eq.11b}
\end{align}
\end{subequations}
where $\sigma _{t}^{2}=\mathrm{E}\left\{ \left| \xi _t \right|^2 \right\} 
$. $\boldsymbol{R}_{\mathrm{n}}^{\left( \mathrm{PA} \right)}=\sigma _{\mathrm{n}}^{2}\boldsymbol{I}_{NK}$ and $\boldsymbol{R}_{\mathrm{n}}^{\left( \mathrm{F}-\mathrm{M} \right)}=\sigma _{\mathrm{n}}^{2}\boldsymbol{I}_{MNK}$ are the noise covariance matrix with noise power $\sigma_{\mathrm{n}}^{2}$ for PA and FDA-MIMO radar, respectively. $\mathrm{SNR}_{\mathrm{in}}=\sigma _{t}^{2}/\sigma _{\mathrm{n}}^{2}$ means the input SNR. Comparing \eqref{eq.11a} and \eqref{eq.11b}, 
\begin{equation}
\varOmega _{\mathrm{PA}}=M\cdot\varOmega _{\mathrm{F-M}}
\label{eq.12}
\end{equation}
which indicates that PA radar is more robust to background noise than the FDA-MIMO radar\cite{ref14}.


\section{Coherent FDA radar: \\Transmitter \& receiver}

We propose a new design approach for the transmitter and receiver of a C-FDA radar to obtain a high coherent gain while maintaining the range-dependency. Note that the proposed radar utilizes a frequency offset much smaller than the bandwidth $B$ of transmit signal to achieve coherent gain, removing the limitation of frequency offset $\varDelta f\geqslant B$ of FDA-MIMO radar. In this section, we introduce the processing principles of the proposed transmitter and receiver through two theorems.

\begin{figure}[!t]
\centering
\includegraphics[width=3.45in]{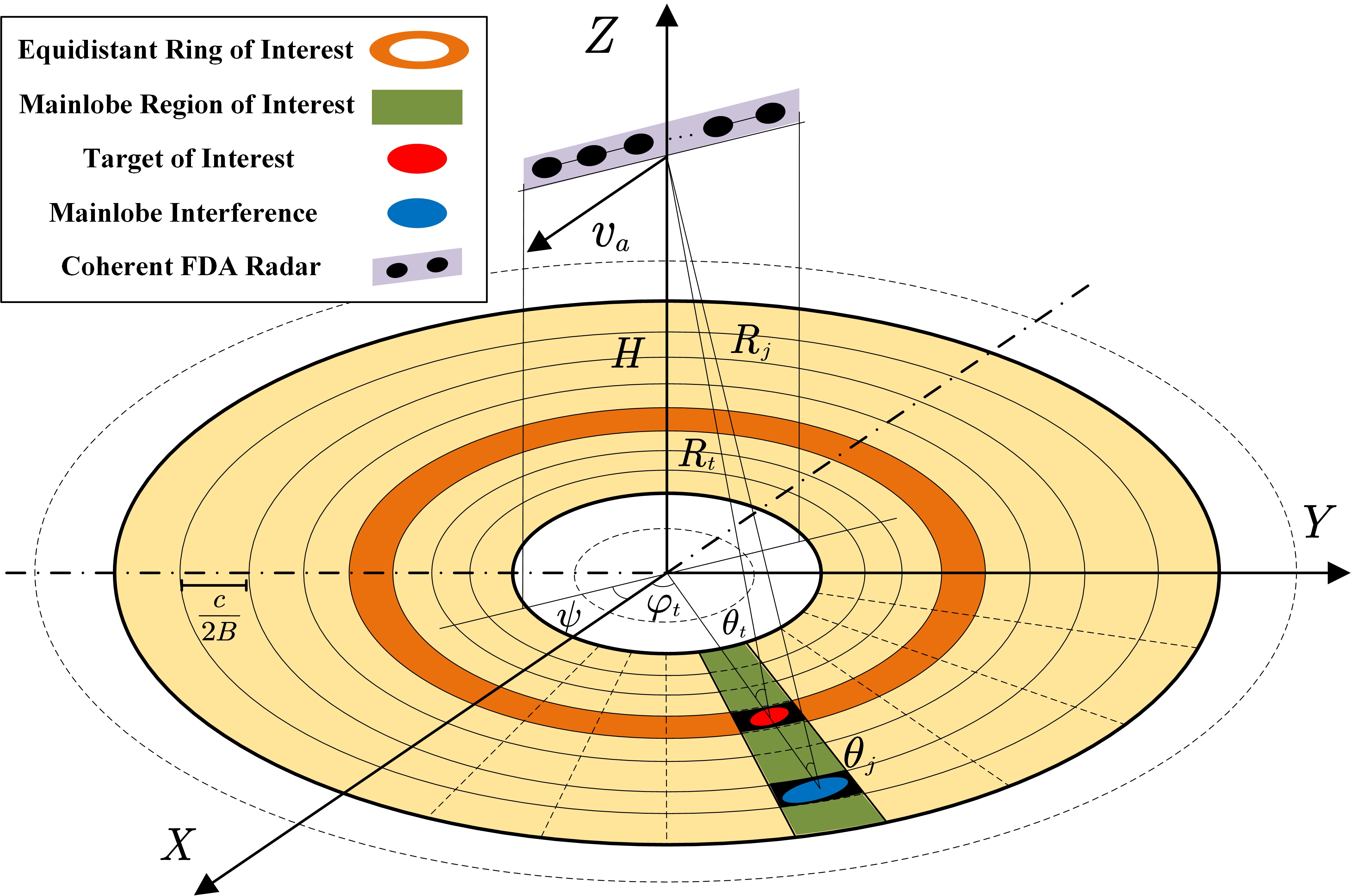}
\caption{Detection scene of C-FDA radar.}
\label{fig_1}
\end{figure}

Consider an airborne C-FDA radar with the same platform and target parameters as the aforementioned PA and FDA-MIMO radar in Fig.\ref{fig_1}. Focusing on the k-th pulse signal from the m-th transmit element to the n-th receive element, the time delay consists of the two-way propagation delay $\tau ^{\left( p \right)}$, the transmit array displacement delay $\tau _{m}^{\left( t \right)}$, the receive array displacement delay $\tau _{n}^{\left( r \right)}$, and the Doppler delay $\tau _{k}^{\left( D \right)}$.
\begin{subequations}
\begin{align}
\tau ^{\left( p \right)}=&\frac{2R_t}{c}\label{eq.13a}
\\
\tau _{m}^{\left( t \right)}=&\frac{d}{c}\left( m-1 \right) \cos \varphi_t \cos \theta_t\label{eq.13b}
\\
\tau _{n}^{\left( r \right)}=&\frac{d}{c}\left( n-1 \right) \cos \varphi_t \cos \theta_t\label{eq.13c}
\\
\tau _{k}^{\left( D \right)}=&\frac{2\left( v_a+v_t \right)}{c}T\left( k-1 \right) \cos \left( \varphi_t +\psi \right) \cos \theta_t\label{eq.13d}
\end{align}
\end{subequations}

\subsection{Transmitter}

Assume the transmit signal vector of C-FDA radar can be expressed as \eqref{eq.1} and \eqref{eq.2}. Then we design a transmit beamforming vector $\boldsymbol{v}$ to align the main-beam toward the target of interest.
\begin{equation}
\boldsymbol{v}=\left[ \begin{matrix}
        v_1&      \cdots&      v_m&            \cdots&         v_M\\
\end{matrix} \right] ^T
\label{eq.14}
\end{equation}
where $v_m$ denotes the beamforming weight for the m-th transmit element.
\begin{equation}
v_m=\exp\left\{-j2\pi \left[ f_c+\left( m-1 \right) \varDelta f \right] \left( m-1 \right) \frac{d}{c}\cos \varphi \cos \theta\right\}
\label{eq.15}
\end{equation}
Then the transmit signal of C-FDA radar after beamforming operation can be expressed as
\begin{align}
s\left( t \right) =&\boldsymbol{v}^{\mathrm{T}}\left[ \begin{matrix}
        u_1\left( t-\tau _{1}^{\left( t \right)} \right)&       \cdots&         u_M\left( t-\tau _{M}^{\left( t \right)} \right)\\
\end{matrix} \right] ^T
\nonumber\\
=&\sum_{m=1}^M{v_{m}^{*}}u\left( t \right) e^{j2\pi \left[ f_c+\left( m-1 \right) \varDelta f \right] \left( t-\tau _{m}^{\left( t \right)} \right)}
\label{eq.16}
\end{align}
Note that \eqref{eq.16} utilizes the narrow-band assumption.

\begin{figure*}[!t]
\centering
\subfloat[Transmit signal spectrum (four transmit elements)]{\includegraphics[width=7.0in]{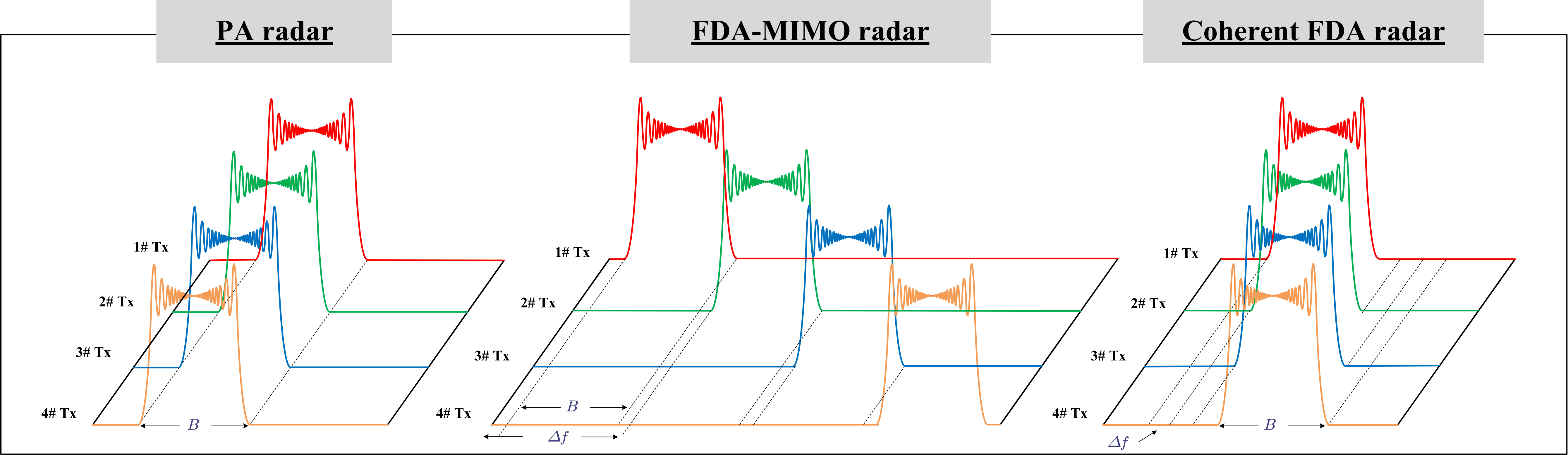}%
\label{fig_2a}}
\hfil
\\
\subfloat[Receiver processing (four channels)]{\includegraphics[width=7.0in]{C_FDArecieve_new.jpg}%
\label{fig_second_case}}
\caption{Transmit signal spectrum and receiver processing (MFM and MMF) for PA, FDA-MIMO, and C-FDA radar. (a) Transmit signal spectrum. (b) Receiver processing.}
\label{fig_2b}
\end{figure*}

\subsection{Receiver}

The back-scattered signal of the k-th pulse received by the n-th receive element can be expressed as
\begin{align}
y_{n,k}\left( t \right) =&\xi _ts\left( t-\tau _{n,k} \right)
\nonumber\\
=&\xi_t \sum_{m=1}^M{v_{m}^{*}}u\left( t \right) e^{j2\pi \left[ f_c+\left( m-1 \right) \varDelta f \right] \left( t-\tau _{m}^{\left( T \right)}-\tau _{n,k} \right)}\label{eq.17}
\end{align}
where $\tau _{n,k}=\tau ^{\left( p \right)}-\tau _{n}^{\left( r \right)}-\tau _{k}^{\left( D \right)}$ contains the information of receive spatial frequency and the Doppler frequency. The target information held in the phase of \eqref{eq.17} can be demodulated by using a special receiver processing. The proposed receiver for C-FDA radar has two essential steps, called the multi-channel frequency mixing (MFM) and the multi-channel matched filtering (MMF). For the MFM, the frequency mixing term in the m-th channel is $e^{-j2\pi \left[ f_c+\left( m-1 \right) \varDelta f \right] t}$. For the next MMF, the matched filtering function for the m-th channel is designed as 
\begin{equation}
h_m\left( t \right) =\phi^*\left( -t \right) \sum_{i=1}^M{e^{-j2\pi \left( m-i \right) \varDelta ft}}
\label{eq.18}
\end{equation}
Note that the receive channel corresponds to the transmit element due to the waveform diversity. Fig.2 explicitly shows the procedure of the proposed receiver with four channels ($M=4$) as an example.

After the proposed transmitter and receiver processing, the signal data of $K$ pulses received by $N$ elements (each receive element has $M$ channels) can be expressed as a $MNK\times 1$ space-time-range vector for one fast-time snapshot.
\begin{equation}
\boldsymbol{t}_{\mathrm{C-F}}=E\xi_t \boldsymbol{a}_R\left( R_t \right)\otimes \boldsymbol{a}_{\varphi_t}\left( \varphi_t,\theta_t \right)\otimes \boldsymbol{a}_D\left( f_D \right)
\label{eq.19}
\end{equation}
where $\boldsymbol{a}_R\left( R_t \right)$, $\boldsymbol{a}_{\varphi}\left( \varphi_t,\theta_t \right)$, and $\boldsymbol{a}_D\left( f_D \right)$ represent the transmit range-dependent vector, the receive spatial steering vector, and the Doppler vector for C-FDA radar. $E$ denotes the amplitude coefficient of the sampling snapshot associated with the frequency offset $\varDelta f$. We elaborate these vectors and scalar by proposing the following two theorems while using their proofs to mathematically fomulate the signal processing of the proposed receiver for C-FDA radar.

\begin{thm}[Phase analysis for receive processing]\label{Thm.1}
~\; ~Through the MFM and MMF of the proposed receiver in $M\times N$ receive channels, all channels of each receive element can take a range-dimensional peak at $R_\mathrm{peak}\approx R_t$ that uniquely corresponds to the back-scattered signal from the target of interest. Furthermore, the transmit range-dependent vector $\boldsymbol{a}_R\left( R_t \right)$, the receive spatial steering vector $\boldsymbol{a}_{\varphi}\left( \varphi_t,\theta_t \right)$, and the Doppler vector $\boldsymbol{a}_D\left( f_D \right)$ of the $MNK\times 1$ sampling data can be written as 
\begin{subequations}
\begin{align}
\boldsymbol{a}_R\left( R \right)=&\left[ \begin{matrix}
        1&              e^{j2\pi f_R}&          \cdots&         e^{j2\pi \left( M-1 \right) f_R}\\
\end{matrix} \right] ^{{T}}\label{eq.20a}
\\
\boldsymbol{a}_{\varphi}\left( \varphi,\theta \right)=&\left[ \begin{matrix}
        1&              e^{j2\pi f_{\varphi}}&          \cdots&         e^{j2\pi \left( N-1 \right) f_{\varphi}}\\
\end{matrix} \right] ^{{T}}\label{eq.20b}
\\
\boldsymbol{a}_D\left( f_D \right)=&\left[ \begin{matrix}
        1&              e^{j2\pi f_D}&          \cdots&         e^{j2\pi \left( K-1 \right) f_D}\\
\end{matrix} \right] ^{{T}}\label{eq.20c}
\end{align}
\end{subequations} 
\end{thm}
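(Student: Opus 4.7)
The plan is to take the channel-$m'$ input $y_{n,k}(t)$ from \eqref{eq.17}, carry out MFM followed by MMF symbolically, and collect the surviving phases into the three steering vectors. I would first multiply $y_{n,k}(t)$ by the MFM local oscillator $e^{-j2\pi[f_c+(m'-1)\Delta f]t}$. This cancels the common carrier $f_c$ in every summand and leaves, for the $m$-th summand, a baseband beat at frequency $(m-m')\Delta f$ multiplying the envelope $\phi(t-\tau_{n,k})$, together with a $t$-independent phase $e^{-j2\pi[f_c+(m-1)\Delta f](\tau_m^{(t)}+\tau_{n,k})}$ that already encodes the target's range, azimuth, elevation, and Doppler.

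Next I would convolve the MFM output with $h_{m'}(t)$ from \eqref{eq.18}. Writing this as correlation against $\phi(t-\tau)$, each pair $(m,i)\in\{1,\dots,M\}^2$ contributes an integral of the form $\int\phi(t-\tau_{n,k})\phi^*(t-\tau)e^{j2\pi(m-i)\Delta f\, t}\,dt$, i.e., the ambiguity function $\chi_\phi(\tau-\tau_{n,k},(m-i)\Delta f)$. Because the C-FDA design assumes $M\Delta f\ll B$, every cross-beat $(m-i)\Delta f$ lies well inside the baseband lobe of $\chi_\phi$, so all $M^2$ terms peak simultaneously at $\tau=\tau_{n,k}$, hence at $R_{\mathrm{peak}}\approx R_t$, and the ambiguity-function value at the peak reduces to a common real constant that is absorbed into the amplitude $E$.

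Finally, I would simplify the residual phase at the peak. The factor $e^{j2\pi[f_c+(m-1)\Delta f]\tau_m^{(t)}}$ inside $v_m^*$ from \eqref{eq.15}, evaluated at the main-beam direction $(\varphi_t,\theta_t)$, exactly cancels the corresponding propagation term, so the sum over $m$ collapses to a constant of order $M$. The sum over $i$ in the MMF filter produces the residual phase $e^{j2\pi(m'-1)f_R}$ with $f_R=-2R_t\Delta f/c$, giving the $m'$-th entry of $\boldsymbol{a}_R(R_t)$, while the $t$-independent phase's dependence on $n$ and $k$ reduces via \eqref{eq.13c}--\eqref{eq.13d} to $e^{j2\pi(n-1)f_\varphi}$ and $e^{j2\pi(k-1)f_D}$, giving \eqref{eq.20b} and \eqref{eq.20c}. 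Stacking over $m'$, $n$, $k$ then assembles the Kronecker product in \eqref{eq.19}.

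The main obstacle will be the bookkeeping in the double sum over $(m,i)$: one must verify that the cross-phase coupling the transmit index $m$ to the MMF index $i$ really decouples into a product of a range phase in $m'$ and a constant in $m$, with no leftover $e^{j2\pi(m-i)\cdot(\text{delay})}$ term corrupting the Kronecker structure. This hinges on the small-frequency-offset approximation (so that the ambiguity-function weight is essentially uniform across $(m,i)$) and on the deliberate choice of the MMF kernel sum $\sum_i e^{-j2\pi(m'-i)\Delta f t}$, which is what forces $i$ and $m'$ to appear only through the combination $i-m'$ in the final phase. A secondary issue is uniqueness of the peak, i.e., showing that for ranges far from $R_t$ the superposition of the $M^2$ sidelobes does not add constructively; this follows from standard matched-filter sidelobe arguments for $\phi$.
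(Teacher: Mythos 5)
Your plan is essentially the paper's own proof (Appendix A): demodulate per channel (MFM), correlate with the multi-tone matched filter (MMF), observe that all $M^2$ transmit/filter index pairs peak (approximately) at $\tau_{n,k}$ when $\varDelta f\ll B$, sample there, and read the residual phase $e^{-j2\pi[f_c+(m-1)\varDelta f]\tau_{n,k}}$ off into the Kronecker structure of \eqref{eq.19}. The paper simply instantiates your ambiguity-function step for the LFM waveform, computing the cross-term outputs explicitly as sinc functions and the peak-shift formula \eqref{eq.A10}, which is exactly the $(m,i)$ bookkeeping you flag as the main obstacle, and it confirms your expectation that at the sampling instant the cross terms share the common channel-indexed range phase and affect only the amplitude $E$.
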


\begin{proof}[Proof of Theorem \ref{Thm.1}]
See Appendix A.
\end{proof}

\begin{thm}[Amplitude analysis for receive processing]\label{Thm.2}
~\; Assume each transmit element emits a unit energy signal with a time width of $T_p$ and a bandwidth of $B$. When the frequency offset of C-FDA radar is $0\leqslant \varDelta f\ll B/\left( M-1 \right) $, then the amplitude coefficient $E$ of the $MNK\times 1$ samples can be restricted as follow:
\begin{equation}
\sqrt{M}<E\leqslant M
\label{eq.21}
\end{equation}
where the right-side equal symbol holds if and only if $\varDelta f= 0$. Furthermore, the comparison of the array gains for PA radar, FDA-MIMO radar, and C-FDA radar can be expressed as 
\begin{equation}
M\varOmega _{\mathrm{F}-\mathrm{M}}=\varOmega _{\mathrm{PA}}<\varOmega _{\mathrm{C}-\mathrm{F}}\leqslant M^2\varOmega _{\mathrm{PA}}
\label{eq.22}
\end{equation}
\end{thm}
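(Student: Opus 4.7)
The plan is to continue the receiver analysis from the proof of Theorem~\ref{Thm.1}, but this time track the amplitude (not only the phase) surviving at the range peak, express $E$ as a closed-form function of $\Delta f$ via the baseband autocorrelation, bound it, and feed the bounds into the array-gain definition used for (11a)--(11b) to obtain the chain of inequalities.

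First I would resume the calculation at the step where the MMF output is evaluated at the peak $\tau=\tau_{n,k}$. After the transmit beamformer (15) cancels the $\tau_m^{(t)}$ phases along the look direction, the residual oscillations $e^{j2\pi(m'-m)\Delta f\,t}$ produced by the $m'$-th transmit component meet each mixing term $e^{-j2\pi(m-i)\Delta f\,t}$ in (18) and, after baseband matched filtering against $\phi$, leave integrals of the form $\int|\phi(t)|^2 e^{j2\pi(m'-i)\Delta f\,t}\,dt$. Writing the Fourier transform of the unit-energy density $|\phi|^2$ as
\begin{equation*}
\rho(f)\triangleq\int|\phi(t)|^2 e^{j2\pi f t}\,dt,\qquad \rho(0)=1,\;|\rho(f)|\le 1,
\end{equation*}
I obtain $E$ (modulo the single-entry normalization of (19)) as the Hermitian Toeplitz quadratic form $\mathbf{1}_M^{T}\boldsymbol{\Gamma}(\Delta f)\mathbf{1}_M$, where the $M\times M$ matrix $\boldsymbol{\Gamma}(\Delta f)$ has $(m',i)$ entry $\rho((m'-i)\Delta f)$.

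Next I would bound this quadratic form. The upper bound $E\le M$ (with equality iff $\Delta f=0$) follows immediately from $|\rho(f)|\le\rho(0)=1$: every entry of $\boldsymbol{\Gamma}(\Delta f)$ is bounded by $1$ in modulus, and $\Delta f=0$ makes every entry exactly $1$, hitting the maximum. For the strict lower bound $E>\sqrt{M}$ I would exploit the hypothesis $0\le\Delta f\ll B/(M-1)$, which forces every argument $(m'-i)\Delta f$ into $[-(M-1)\Delta f,(M-1)\Delta f]$, deep inside the essential passband ($\sim 1/T_p\sim B$) of $\rho$. Because $\rho$ is the characteristic function of $|\phi|^2$ (hence positive definite by Bochner), $\boldsymbol{\Gamma}(\Delta f)$ is PSD with off-diagonal entries uniformly bounded away from zero; a Schur/Cauchy--Schwarz comparison between the all-ones form $\mathbf{1}_M^{T}\boldsymbol{\Gamma}\mathbf{1}_M$ and $\mathrm{tr}\,\boldsymbol{\Gamma}=M$ then yields precisely the $\sqrt{M}$ threshold, strict because the off-diagonals do not vanish in this regime.

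Finally I would plug the bounds back into (11): substituting $\boldsymbol{w}=\boldsymbol{t}_{\mathrm{C-F}}$ and $\boldsymbol{R}_n^{(\mathrm{C-F})}=\sigma_n^2\boldsymbol{I}_{MNK}$, and using $\|\boldsymbol{t}_{\mathrm{C-F}}\|^2\propto E^2$ from (19), the matched-beamformer output scales as $E^2$ under the same normalization convention that produced (11a)--(11b). Inserting $M<E^2\le M^2$ together with the relation $M\varOmega_{\mathrm{F-M}}=\varOmega_{\mathrm{PA}}$ already established in (12) yields the full chain $M\varOmega_{\mathrm{F-M}}=\varOmega_{\mathrm{PA}}<\varOmega_{\mathrm{C-F}}\le M^2\varOmega_{\mathrm{PA}}$.

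The step I expect to be the main obstacle is the strict lower bound $E>\sqrt{M}$, because translating the qualitative condition $\Delta f\ll B/(M-1)$ into a uniform quantitative estimate on a Toeplitz quadratic form requires more than pointwise continuity of $\rho$. The route I would pursue is to lean on the positive definiteness of $\rho$ (Bochner's theorem) and apply the Schur/Cauchy--Schwarz comparison of $\mathbf{1}_M^{T}\boldsymbol{\Gamma}\mathbf{1}_M$ against $\mathrm{tr}\,\boldsymbol{\Gamma}=M$; this is what pins the lower threshold precisely at $\sqrt{M}$ and makes the inequality strict throughout the stated frequency-offset regime.
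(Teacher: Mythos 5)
Your core strategy matches the paper's, just carried out in the Parseval-dual domain. The paper's Appendix B works with the shifted spectra $Z[f-(m-i)\varDelta f]$, writes the $m$-th channel output spectrum through $H_m(f)=\sum_i Z[f-(m-i)\varDelta f]\sum_j Z^{*}[f-(m-j)\varDelta f]$, bounds it between $\mathrm{tr}(\boldsymbol{B})$ and $M\,\mathrm{tr}(\boldsymbol{B})$, and integrates using the unit-energy normalization to place the channel output energy in $(M,M^2]$, whence $\sqrt{M}<E\leqslant M$. Your quadratic form $\mathbf{1}_M^{T}\boldsymbol{\Gamma}(\varDelta f)\mathbf{1}_M$ with entries $\rho\left((m'-i)\varDelta f\right)$ is exactly that integrated quantity, since $\int Z[f-a]Z^{*}[f-b]\,\mathrm{d}f$ is precisely the correlation $\rho$ evaluated at the shift difference; and your strictness argument for the lower end (near-total spectral overlap when $\varDelta f\ll B/(M-1)$, so the cross terms do not vanish) plays the same role as the paper's equally informal row-dominance claim.

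There is, however, a genuine bookkeeping gap as written: you identify $E$ with the quadratic form itself, which is an energy, not an amplitude. Under that identification your two bound arguments fail: the "all entries bounded by $1$" step gives $\mathbf{1}_M^{T}\boldsymbol{\Gamma}\mathbf{1}_M\leqslant M^2$, not $\leqslant M$, and at $\varDelta f=0$ all $M^2$ entries equal $1$, so the form equals $M^2$, contradicting the equality case $E=M$ in \eqref{eq.21}; likewise the trace comparison gives the form $>\mathrm{tr}(\boldsymbol{\Gamma})=M$, which becomes the $\sqrt{M}$ threshold only after taking a square root. The consistent identification, and the one the paper uses, is $E^2=\mathbf{1}_M^{T}\boldsymbol{\Gamma}(\varDelta f)\mathbf{1}_M$ (amplitude as the square root of the peak output energy); your later use of $M<E^2\leqslant M^2$ already presupposes this, so the fix is local but necessary. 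A second discrepancy is in the gain step: the paper computes $\varOmega_{\mathrm{C}-\mathrm{F}}=E^4NK\cdot\mathrm{SNR}_{\mathrm{in}}$ in \eqref{eq.B12}, for which $\sqrt{M}<E\leqslant M$ yields \eqref{eq.22} with the upper endpoint $M^2\varOmega_{\mathrm{PA}}$ attained at $\varDelta f=0$; your scaling $\varOmega_{\mathrm{C}-\mathrm{F}}\propto E^2MNK$ instead yields $\varOmega_{\mathrm{PA}}<\varOmega_{\mathrm{C}-\mathrm{F}}\leqslant M\varOmega_{\mathrm{PA}}$, which still implies the stated chain but neither reproduces the paper's normalization nor the attainability of the $M^2\varOmega_{\mathrm{PA}}$ endpoint, so you must either adopt the paper's noise/amplitude convention or justify your own explicitly.
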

\begin{proof}[Proof of Theorem \ref{Thm.2}]
See Appendix B.
\end{proof} 

Through the proposed receiver, we can obtain the space-time-range signal model of C-FDA radar with high array gain and range-dependent transmit frequency. Based on these two properties, we discuss the use of C-FDA radar for mainlobe interference suppression and range-ambiguous clutter suppression and we show that it has significant advantages over PA and FDA-MIMO radar.

\section{Mainlobe interference suppression: \\SRA problem and output SINR}

FDA-based radar, that has the transmit spatial frequency with range-dependency, has benefits in mainlobe interference suppression, which has been highlighted in \cite{ref26}\cite{ref27}\cite{ref30}. In this section, we focus on the SRA problem that deteriorates the performance of mainlobe interference suppression when $\varDelta f\geqslant B$ for FDA-MIMO radar and we show the advantages of using C-FDA radar without frequency offset constraints.

Consider a towed mainlobe jamming with the same azimuth and Doppler as the target but at different ranges, denoted by $R_j$, as shown in Fig.\ref{fig_1}. For FDA-MIMO radar and C-FDA radar, the received jamming data vector can be expressed as
\begin{subequations}
\begin{align}
\boldsymbol{j}_{\mathrm{F}-\mathrm{M}}=&\xi _j\left[ \boldsymbol{a}_t\left( \varphi _t,\theta _t \right) \odot \boldsymbol{a}_R\left( R_j \right) \right] \otimes \boldsymbol{a}_r\left( \varphi _t,\theta _t \right) \otimes \boldsymbol{a}_D\left( f_D \right) 
\label{eq.23a}
\\
\boldsymbol{j}_{\mathrm{C}-\mathrm{F}}=&E\xi _j\boldsymbol{a}_R\left( R_t \right) \otimes \boldsymbol{a}_{\varphi}\left( \varphi _t,\theta _t \right) \otimes \boldsymbol{a}_D\left( f_D \right) 
\label{eq.23b}
\end{align}
\end{subequations}
where $\xi_j$ denotes the interference scattering coefficient. Based on the minimum variance distortionless response (MVDR) beamformer, the range-dimensional jamming power can be minimized by the optimal beamforming weight vector $\boldsymbol{w}^{(j)}_{\mathrm{opt}}$.
\begin{equation}
\boldsymbol{w}^{(j)}_{\mathrm{opt}}=\eta \boldsymbol{R}_{\mathrm{j}+\mathrm{n}}^{-1}\boldsymbol{t}
\label{eq.24}
\end{equation}
where $\eta =1/\left( \boldsymbol{t}^H\boldsymbol{R}_{\mathrm{j}+\mathrm{n}}^{-1}\boldsymbol{t} \right)$. $\boldsymbol{t}$ can be replaced by $\boldsymbol{t}_{\mathrm{F-M}}$ in \eqref{eq.9} and $\boldsymbol{t}_{\mathrm{C-F}}$ in \eqref{eq.19} corresponding to FDA-MIMO radar and C-FDA radar, respectively. Similarly, $\boldsymbol{R}_{\mathrm{j}+\mathrm{n}}$ is the jamming plus noise covariance matrix that can be replaced by $\boldsymbol{R}_{\mathrm{j}}^{\left( \mathrm{F}-\mathrm{M} \right)}+\boldsymbol{R}_{\mathrm{n}}^{\left( \mathrm{F}-\mathrm{M} \right)}$ and $\boldsymbol{R}_{\mathrm{j}}^{\left( \mathrm{C}-\mathrm{F} \right)}+\boldsymbol{R}_{\mathrm{n}}^{\left( \mathrm{C}-\mathrm{F} \right)}$, where $\boldsymbol{R}_{\mathrm{j}}^{\left( \mathrm{F}-\mathrm{M} \right)}=\mathrm{E}\{ \boldsymbol{j}_{\mathrm{F}-\mathrm{M}}\boldsymbol{j}_{\mathrm{F}-\mathrm{M}}^{H} \} $ and $\boldsymbol{R}_{\mathrm{j}}^{\left( \mathrm{C}-\mathrm{F} \right)}=\mathrm{E}\{ \boldsymbol{j}_{\mathrm{C}-\mathrm{F}}\boldsymbol{j}_{\mathrm{C}-\mathrm{F}}^{H} \} $. 

By using the matrix inversion lemma\cite{ref43}, the output SINR of the C-FDA radar can be expressed as
\begin{align}
\mathrm{SINR}^{\mathrm{C-F}}_{\mathrm{o}}=&\frac{\left| \left[ \boldsymbol{w}_{\mathrm{opt}}^{(j)} \right] ^H\boldsymbol{t} \right|^2}{\left[ \boldsymbol{w}_{\mathrm{opt}}^{(j)} \right] ^H\boldsymbol{R}_{\mathrm{j}+\mathrm{n}}\boldsymbol{w}_{\mathrm{opt}}^{(j)}}
\nonumber\\
=&\frac{1}{\sigma _{\mathrm{n}}^{2}}\left[ \left| \boldsymbol{t} \right|^2-\frac{\sigma _{\mathrm{j}}^{2}}{\sigma _{\mathrm{n}}^{2}+MNK\sigma _{\mathrm{j}}^{2}}\left| \boldsymbol{t}^H\boldsymbol{j} \right|^2 \right] 
\nonumber\\
=\mathrm{SNR}_{\mathrm{in}}&\left[ E^2MNK-\frac{E^4N^2K^2\cdot \mathrm{INR}  }{1+E^2MNK\cdot \mathrm{INR}}\left| \varPhi _R \right|^2 \right] \label{eq.25}
\end{align}
where $\mathrm{SNR}_{\mathrm{in}}=\sigma _{\mathrm{t}}^{2}/\sigma _{\mathrm{n}}^{2}$, $\sigma _{\mathrm{t}}^{2}=\mathrm{E}\{ \left| \xi_t \right|^2 \}$, $\mathrm{INR}=\sigma _{\mathrm{j}}^{2}/\sigma _{\mathrm{n}}^{2}$, and $\sigma _{\mathrm{j}}^{2}=\mathrm{E}\{ \left| \xi_j \right|^2 \}$. For comparison, the output SINR of the FDA-MIMO radar can be written as
\begin{equation}
\mathrm{SINR}^{\mathrm{F-M}}_{\mathrm{o}}=\mathrm{SNR}_{\mathrm{in}}\left[ MNK-\frac{N^2K^2\cdot \mathrm{INR}  }{1+MNK\cdot \mathrm{INR}}\left| \varPhi _R \right|^2 \right] \label{eq.26}
\end{equation}
And $\varPhi _R$ in \eqref{eq.25} and \eqref{eq.26} can be expressed as
\begin{align}
\varPhi _{R}=\frac{\sin \left[ M\pi\left( 2 \varDelta R \frac{\varDelta f}{c} \right) \right]}{\sin \left[\pi\left( 2\varDelta R \frac{\varDelta f}{c} \right)\right]} 
\label{eq.27}
\end{align}
where $\varDelta R=\left| R_j-R_t \right|$. From \eqref{eq.25} and \eqref{eq.26}, the output SINR is minimum when $\varPhi _{R}$ is maximum, which reveals the relationship between $\varDelta f$ and $\varDelta R$ and point out the SRA problem. According to \eqref{eq.27}, the relationship between the maximum of $\varPhi _{R}$ and $\varDelta R$ can be expressed as
\begin{equation}
\mathop {\mathrm{arg}\max} \limits_{\varDelta R}\varPhi _R=\frac{c}{2\varDelta f}L, L\in \mathbb{N} 
\label{eq.28}
\end{equation}
where $c/(2\varDelta f)$ is called the secondary range ambiguity period and $L$ is the secondary range ambiguity number. We assume that the detection region is $\left[ R_t-{R_d}/{2},R_t+{R_d}/{2} \right] $ and the maximum unambiguous range is $R_d=c/(2T)$. When the target range $R_t$ and $R_d$ is fixed, \eqref{eq.28} indicates that the jamming range satisfying \eqref{eq.supp} can severely deteriorate the performance of mainlobe interference suppression.
\begin{equation}
R_j=R_t\pm \frac{c}{2\varDelta f}L,\,\,L\in \mathbb{N}
\label{eq.supp}
\end{equation}
Note that the number of the jamming ranges satisfying \eqref{eq.supp} is $J=\lfloor \frac{\varDelta f}{T} \rfloor $. For FDA-MIMO radar, a large $\varDelta f$ results in a large $J$, which means the mainlobe interference suppression is invalid when the jamming is located at these ranges. The performance of mainlobe interference suppression for FDA-MIMO radar significantly depends on the jamming distance relative to the target. For the proposed C-FDA radar, a much smaller frequency offset $\varDelta f$ yields a very small $J$, which means the performance of mainlobe interference suppression for C-FDA radar is much less dependent on the jamming distance relative to the target. In Section VI, we describe some numerical results.

\section{Range ambiguity clutter suppression: Strategy and Performance}

FDA-MIMO radar has advantages in range-ambiguous clutter suppression due to the range-dependency of transmit spatial frequency\cite{ref37}\cite{ref38}. In this section, we discuss the limitations encountered by FDA-MIMO radar in clutter suppression and address them by using the proposed C-FDA radar.

Consider an airborne radar disturbed by ground clutter when detecting a moving target within the range-unambiguous region, as shown in Fig.\ref{fig_1}. One fast-time sampled signal contains the range-unambiguous clutter from the cell under test (CUT) and the remote range-ambiguous clutter. Assume the ambiguity range is $R_u=cT/2$ and the total ambiguity number is $P$, where the ambiguity range related to the ambiguity number of $p$ is $R_p=R_t+pR_u,~p=1...P$. The clutter data can be expressed as a matrix $\boldsymbol{V}_c$. 
\begin{equation}
\boldsymbol{V}_c=\left[ \begin{array}{ccccc}
        \boldsymbol{V}_u&    \vline  &         \boldsymbol{V}_{a}^{\left( 1 \right)}&          \cdots&         \boldsymbol{V}_{a}^{\left( P \right)}\\
\end{array} \right]
\label{eq.29}
\end{equation}
where $\boldsymbol{V}_u$ and $\boldsymbol{V}_{a}^{\left( p \right)},~p=1...P$ are the range-unambiguous and range-ambiguous clutter, respectively. 
\begin{subequations}
\begin{align}
\boldsymbol{V}_u=&\left[ \begin{matrix}
        \xi_1\boldsymbol{c}_1&           \cdots&      \xi_i\boldsymbol{c}_i&               \cdots&         \xi_I\boldsymbol{c}_I\\
\end{matrix} \right] 
\label{eq.30a}
\\
\boldsymbol{V}_{a}^{\left( p \right)}=&\left[ \begin{matrix}
       \xi_1^{(p)} \boldsymbol{v}_{1}^{\left( p \right)}&     \cdots&       \xi_i^{(p)}\boldsymbol{v}_{i}^{\left( p \right)}&          \cdots&        \xi_I^{(p)} \boldsymbol{v}_{I}^{\left( p \right)}\\
\end{matrix} \right] 
\label{eq.30b}
\end{align}
\end{subequations}
where $I$ denotes the number of clutter pitches with different azimuth in one equidistant clutter ring. $\xi_i$ and $\xi_i^{(p)}$ denotes the scattering coefficient for range-unambiguous and range-ambiguous clutter pitch. For C-FDA radar, $\boldsymbol{c}_i$ and $\boldsymbol{v}_{i}^{\left( p \right)}$ can be expressed as
\begin{subequations}
\begin{align}
\boldsymbol{c}_i=&E\boldsymbol{a}_R\left( R_t \right) \otimes \boldsymbol{a}_{\varphi}\left( \varphi _i,\theta _t \right) \otimes \boldsymbol{a}_D\left( f_D^{(i)} \right)  \label{eq.31a}
\\
\boldsymbol{v}_{i}^{\left( p \right)}=&E\boldsymbol{a}_R\left( R_p \right) \otimes \boldsymbol{a}_{\varphi}\left( \varphi _i,\theta _p \right) \otimes \boldsymbol{a}_D\left( f_D^{(i)} \right)  \label{eq.31b}
\end{align}
\end{subequations}
For FDA-MIMO radar, then
\begin{subequations}
\begin{align}
\boldsymbol{c}_i=&\left[ \boldsymbol{a}_t\left( \varphi _i,\theta _t \right) \odot \boldsymbol{a}_R\left( R_t \right) \right] \otimes \boldsymbol{a}_r\left( \varphi _i,\theta _t \right) \otimes \boldsymbol{a}_D\left( f_D^{(i)} \right) 
\label{eq.32a}
\\
\boldsymbol{v}_{i}^{\left( p \right)}=&\left[ \boldsymbol{a}_t\left( \varphi _i,\theta _p \right) \odot \boldsymbol{a}_R\left( R_p \right) \right] \otimes \boldsymbol{a}_r\left( \varphi _i,\theta _p \right) \otimes \boldsymbol{a}_D\left( f_D^{(i)} \right) \label{eq.32b}
\end{align}
\end{subequations}
where $f_D^{(i)}=\frac{2v_aT}{\lambda}\cos \theta _t\cos \left( \varphi _i+\psi \right)$ and $\theta_p=\sin (H/R_p)$. 

Typically, the SRDC is used as pre-processing before range-ambiguous clutter suppression for FDA-MIMO radar\cite{ref37}. Constructing the compensating vectors $\boldsymbol{r}_{c}$ respect to the range of CUT, $\boldsymbol{r}_c=\boldsymbol{r}\otimes \mathbf{1}_{N}\otimes \mathbf{1}_{K}$, where
\begin{equation}
\boldsymbol{r}=\left[ \begin{matrix}  1&  e^{j2\pi \frac{2R_t\varDelta f}{c}}&  \cdots&    e^{j2\pi \left( M-1 \right) \frac{2R_t\varDelta f}{c}}\\
\end{matrix} \right] ^T
\label{eq.33}
\end{equation}
then the clutter covariance matrix can be written as
\begin{align}
\boldsymbol{R}_c=&\mathrm{E}\left\{ \boldsymbol{V}_c\boldsymbol{V}_{c}^{H} \right\} =\boldsymbol{U}_c\boldsymbol{\varSigma }_c\boldsymbol{U}_{c}^{H}
\nonumber\\
=&\boldsymbol{U}_u\boldsymbol{\varLambda }_c\boldsymbol{U}_{u}^{H}+\sum_{p=1}^P{\boldsymbol{U}_{a}^{\left( p \right)}\boldsymbol{\varLambda }_{a}^{\left( p \right)}\left( \boldsymbol{U}_{a}^{\left( p \right)} \right) ^H}
\label{eq.34}
\end{align}
where we define $\sigma _{i}^{2}=\mathrm{E}\left\{ \left| \xi _{i} \right|^2 \right\} $ and $(\sigma _{i}^{(p)})^{2}=\mathrm{E}\left\{ \left| \xi _{i}^{(p)} \right|^2 \right\} $. Then,
\begin{subequations}
\begin{align}
\boldsymbol{U}_c=&\left[ \begin{array}{ccccc}
        \boldsymbol{U}_u&    \vline  &         \boldsymbol{U}_{a}^{\left( 1 \right)}&          \cdots&         \boldsymbol{U}_{a}^{\left( P \right)}\\
\end{array} \right]
\label{eq.35a}
\\
\boldsymbol{\varSigma }_c=&\mathrm{diag}\left\{ \begin{array}{cccccc}
        \sigma _{1}^{2}&     \cdots&   \vline  &  ( \sigma _{1}^{\left( 1 \right)}) ^2& \cdots&      ( \sigma _{I}^{\left( P \right)}) ^2\\
\end{array} \right\}\label{eq.35b}
\\
\boldsymbol{U}_u=&\left[ \begin{matrix}
        \boldsymbol{c}_1\odot \boldsymbol{r}_c&               \boldsymbol{c}_2\odot \boldsymbol{r}_c&               \cdots&         \boldsymbol{c}_I\odot \boldsymbol{r}_c\\
\end{matrix} \right] \label{eq.35c}
\\
\boldsymbol{U}_{a}^{\left( p \right)}=&\left[ \begin{matrix}
        \boldsymbol{v}_{1}^{\left( p \right)}\odot \boldsymbol{r}_c&          \boldsymbol{v}_{2}^{\left( p \right)}\odot \boldsymbol{r}_c&          \cdots&         \boldsymbol{v}_{I}^{\left( p \right)}\odot \boldsymbol{r}_c\\
\end{matrix} \right] \label{eq.35d}
\\
\boldsymbol{\varLambda }_c=&\mathrm{diag}\left\{ \begin{matrix}
        \sigma _{1}^{2}&        \cdots&   \sigma _{i}^{2}&  \cdots&       \sigma _{I}^{2}\\
\end{matrix} \right\} \label{eq.35e}
\\
\boldsymbol{\varLambda }_{a}^{\left( p \right)}=&\mathrm{diag}\left\{ \begin{matrix}
        ( \sigma _{1}^{\left( p \right)} ) ^2&               \cdots&   ( \sigma _{i}^{\left( p \right)} ) ^2  \cdots&   ( \sigma _{I}^{\left( p \right)} ) ^2\\
\end{matrix} \right\} \label{eq.35f}
\end{align}
\end{subequations}
Note that $\boldsymbol{U}_c \in \mathbb{C} ^{MNK\times W}$ and $\boldsymbol{\varSigma }_c\in \mathbb{C} ^{W\times W}$, where $W =I (P+1)$. Thereby, the disturbance covariance matrix can be written as 
\begin{equation}
\boldsymbol{R}_d=\boldsymbol{R}_c+\sigma _{\mathrm{n}}^{2}\boldsymbol{I}_{MNK}
\label{eq.36}
\end{equation}
Then the optimal weight vector of STRAP filter can be calculated by the inversion of disturbance covariance matrix,
\begin{equation}
\boldsymbol{w}^{(c)}_{\mathrm{opt}}=\eta_c \boldsymbol{R}_{d}^{-1}\boldsymbol{t}
\label{eq.37}
\end{equation}
where $\eta_c=1/\left( \boldsymbol{t}^H\boldsymbol{R}_{d}^{-1}\boldsymbol{t} \right)$. The output signal-to-disturbance (clutter plus noise) ratio (SDR) can be calculated as
\begin{align}
\mathrm{SDR}_{\mathrm{o}}&=\frac{\left| \left[ \boldsymbol{w}_{\mathrm{opt}}^{(c)} \right] ^H\boldsymbol{t} \right|^2}{\left[ \boldsymbol{w}_{\mathrm{opt}}^{(c)} \right] ^H\boldsymbol{R}_d\boldsymbol{w}_{\mathrm{opt}}^{(c)}}
\nonumber\\
=\frac{1}{\sigma _{\mathrm{n}}^{2}}&\left[ \left| \boldsymbol{t} \right|^2-\boldsymbol{t}^H\boldsymbol{U}_c\left( \boldsymbol{I}_W+\frac{1}{\sigma _{\mathrm{n}}^{2}}\boldsymbol{\varSigma }_c\boldsymbol{U}_{c}^{H}\boldsymbol{U}_c \right) ^{-1}\boldsymbol{\varSigma }_c\boldsymbol{U}_{c}^{H}\boldsymbol{t} \right] 
\nonumber\\
\approx \frac{1}{\sigma _{\mathrm{n}}^{2}}&\left[ \left| \boldsymbol{t} \right|^2-\boldsymbol{t}^H\left[ \boldsymbol{U}_u\boldsymbol{\varGamma }_u\boldsymbol{U}_{u}^{H}+\sum_{p=1}^P{\boldsymbol{U}_{a}^{\left( p \right)}\boldsymbol{\varGamma }_{a}^{\left( p \right)}\left( \boldsymbol{U}_{a}^{\left( p \right)} \right) ^H} \right] \boldsymbol{t} \right] 
\label{eq.38}
\end{align}
where 
\begin{subequations}
\begin{align}
\boldsymbol{\varGamma }_u=&\mathrm{diag}\left\{ \begin{matrix}
        \varrho _1&     \cdots&   \varrho _i& \cdots&     \varrho _I\\
\end{matrix} \right\} 
\label{eq.39a}
\\
\boldsymbol{\varGamma }_{a}^{\left( p \right)}=&\mathrm{diag}\left\{ \begin{matrix}
        \varsigma _{1}^{\left( p \right)}&    \cdots&   \varsigma _{i}^{\left( p \right)}&    \cdots&   \varsigma _{I}^{\left( p \right)}\\
\end{matrix} \right\}  \label{eq.39b}
\end{align}
\end{subequations}
For C-FDA radar, $\varrho _i$ and $\varsigma _{i}^{\left( p \right)}$ can be expressed as
\begin{subequations}
\begin{align}
\varrho _i=&\frac{E^4\sigma _{\mathrm{n}}^{2}\sigma _{i}^{2}}{E^2MNK\sigma _{i}^{2}+\sigma _{\mathrm{n}}^{2}}
\label{eq.40a}
\\
\varsigma _{i}^{\left( p \right)}=&\frac{E^4\sigma _{\mathrm{n}}^{2}( \sigma _{i}^{\left( p \right)}) ^2}{E^2MNK( \sigma _{i}^{\left( p \right)}) ^2+\sigma _{\mathrm{n}}^{2}} \label{eq.40b}
\end{align}
\end{subequations}
For FDA-MIMO radar, then
\begin{subequations}
\begin{align}
\varrho _i=&\frac{\sigma _{\mathrm{n}}^{2}\sigma _{i}^{2}}{MNK\sigma _{i}^{2}+\sigma _{\mathrm{n}}^{2}}
\label{eq.41a}
\\
\varsigma _{i}^{\left( p \right)}=&\frac{\sigma _{\mathrm{n}}^{2}( \sigma _{i}^{\left( p \right)}) ^2}{MNK( \sigma _{i}^{\left( p \right)}) ^2+\sigma _{\mathrm{n}}^{2}} \label{eq.41b}
\end{align}
\end{subequations}
Thereby, the output SDR of C-FDA and FDA-MIMO radar can be expressed as 
\begin{subequations}
\begin{align}
\mathrm{SDR}_{\mathrm{o}}^{\left( \mathrm{C}-\mathrm{F} \right)}=&\mathrm{SNR}_{\mathrm{in}} \left[E^2MNK - \sum_{i=1}^I{\varrho _iM^2\left| \varPhi _{i}^{\left( \varphi \right)} \right|^2\left| \varPhi _{i}^{\left( D \right)} \right|^2}\right]
\nonumber\\
-\mathrm{SNR}_{\mathrm{in}}&\cdot\sum_{p=1}^P{\sum_{i=1}^I{\varsigma _{i}^{\left( p \right)}\left| \varPhi _{i}^{\left( p \right)} \right|^2\left| \varPhi _{i}^{\left( \varphi \right)} \right|^2\left| \varPhi _{i}^{\left( D \right)} \right|^2}}  
\label{eq.42a}
\\
\mathrm{SDR}_{\mathrm{o}}^{\left( \mathrm{F}-\mathrm{M} \right)}=&\mathrm{SNR}_{\mathrm{in}} \left[MNK - \sum_{i=1}^I{\varrho _i\left| \varPhi _{i}^{\left( \varphi \right)} \right|^4\left| \varPhi _{i}^{\left( D \right)} \right|^2}\right]
\nonumber\\
-\mathrm{SNR}_{\mathrm{in}}&\cdot\sum_{p=1}^P{\sum_{i=1}^I{\varsigma _{i}^{\left( p \right)}\left| \varPhi _{i}^{\left( p \right)} \right|^2\left| \varPhi _{i}^{\left( \varphi \right)} \right|^2\left| \varPhi _{i}^{\left( D \right)} \right|^2}}  
\label{eq.42b}
\end{align}
\end{subequations}
where
\begin{subequations}
\begin{align}
\varPhi _{i}^{\left( \varphi \right)}=&\frac{\sin \left[ N\pi \frac{d}{\lambda}\cos \theta _t\left( \cos \varphi _i-\cos \varphi _t \right) \right]}{\sin \left[ \pi \frac{d}{\lambda}\cos \theta _t\left( \cos \varphi _i-\cos \varphi _t \right) \right]}
\label{eq.43a}
\\
\varPhi _{i}^{\left( D \right)}=&\frac{\sin \left[ K\pi\left( f_{D}^{(i)}-f_D \right) \right]}{\sin \left[\pi \left(f_{D}^{(i)}-f_D \right)\right]} \label{eq.43b}
\\
\varPhi _{i}^{\left( p \right)}=&\frac{\sin \left[ M\pi \frac{2\varDelta f}{c}\left(pR_u-R_t\right) \right]}{\sin \left[ \pi \frac{2\varDelta f}{c}\left(pR_u-R_t\right) \right]}\label{eq.43c}
\end{align}
\end{subequations}
\eqref{eq.43a} and \eqref{eq.43b} suggest that $\varPhi _{i}^{\left( p \right)}$ associated with range is strongly correlated with the performance of range ambiguity clutter suppression. The output SDR is minimum when $\varPhi _{i}^{\left( p \right)}$ is maximum. According to \eqref{eq.43c}, the relationship between the maximum of $\varPhi _{i}^{\left( p \right)}$ and $R_t$ can be expressed as 
\begin{equation}
\mathop {\mathrm{arg}\max} \limits_{R_t}\varPhi _{i}^{\left( p \right)}=p\frac{c}{2T}-\frac{c}{2\varDelta f}L,~L\in \mathbb{N} 
\label{eq.44}
\end{equation}
Assume that the target is located at the range-unambiguous region, $R_t\in \left( 0, R_d \right] $, where $R_d=\frac{c}{2T}$. Then the secondary range ambiguity number $L$ satisfies
\begin{equation}
\left( p-1 \right) \frac{\varDelta f}{T}\leqslant L<p\frac{\varDelta f}{T}
\label{eq.suppp}
\end{equation}
\eqref{eq.suppp} indicates that there are $Z=\lfloor \frac{\varDelta f}{T} \rfloor$ secondary range ambiguity between the $(p-1)$-th ambiguity range and $p$-th ambiguity range. For FDA-MIMO radar, a large $\varDelta f$ determines a large $L$, which means that the performance of range-ambiguous clutter suppression deteriorates by a large number of the secondary range ambiguity. The performance of range-ambiguous clutter suppression significantly depends on the range of CUT. For C-FDA radar, a much smaller frequency offset determines a small number of secondary range ambiguity, which means that the range of CUT has a weak impact on the performance of range-ambiguous clutter suppression. We describe some numerical results in Section VI.


\begin{figure*}[!t]
\centering
\subfloat[PA]{\includegraphics[width=1.78in, height=1.35in]{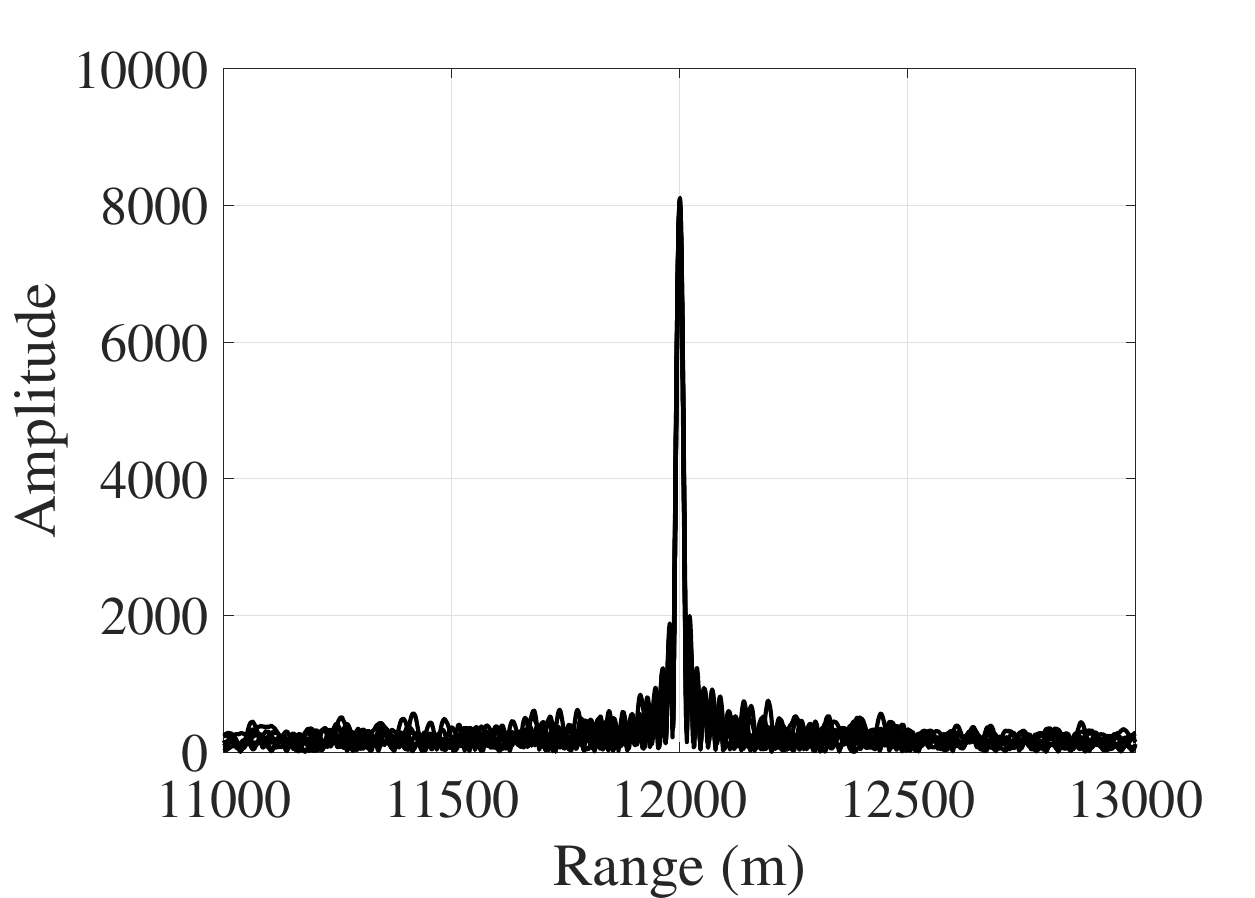}%
\label{fig_3a}}
\hfil
\subfloat[FDA-MIMO]{\includegraphics[width=1.78in, height=1.35in]{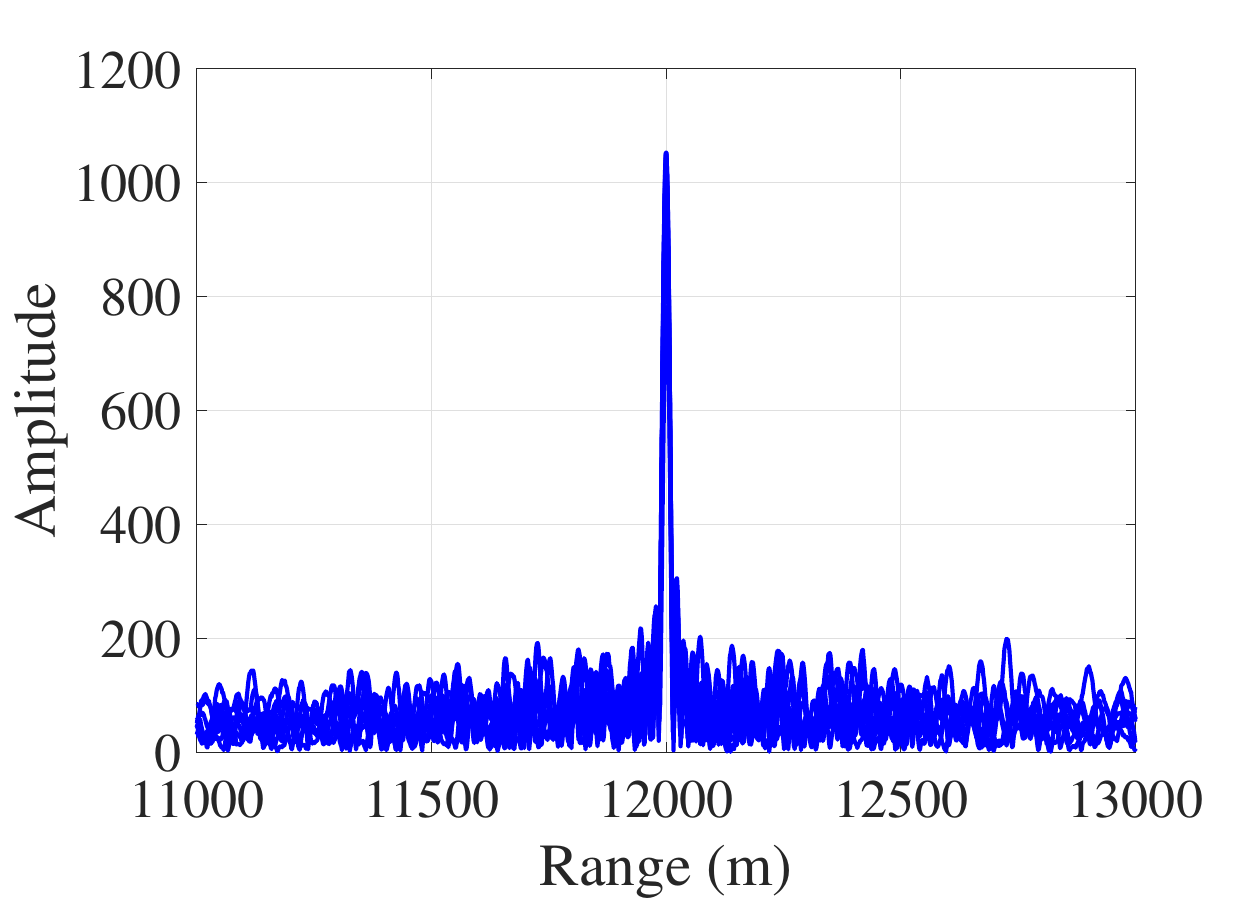}%
\label{fig_3b}}
\hfil
\subfloat[C-FDA ($\varDelta f=100$ kHz)]{\includegraphics[width=1.78in, height=1.35in]{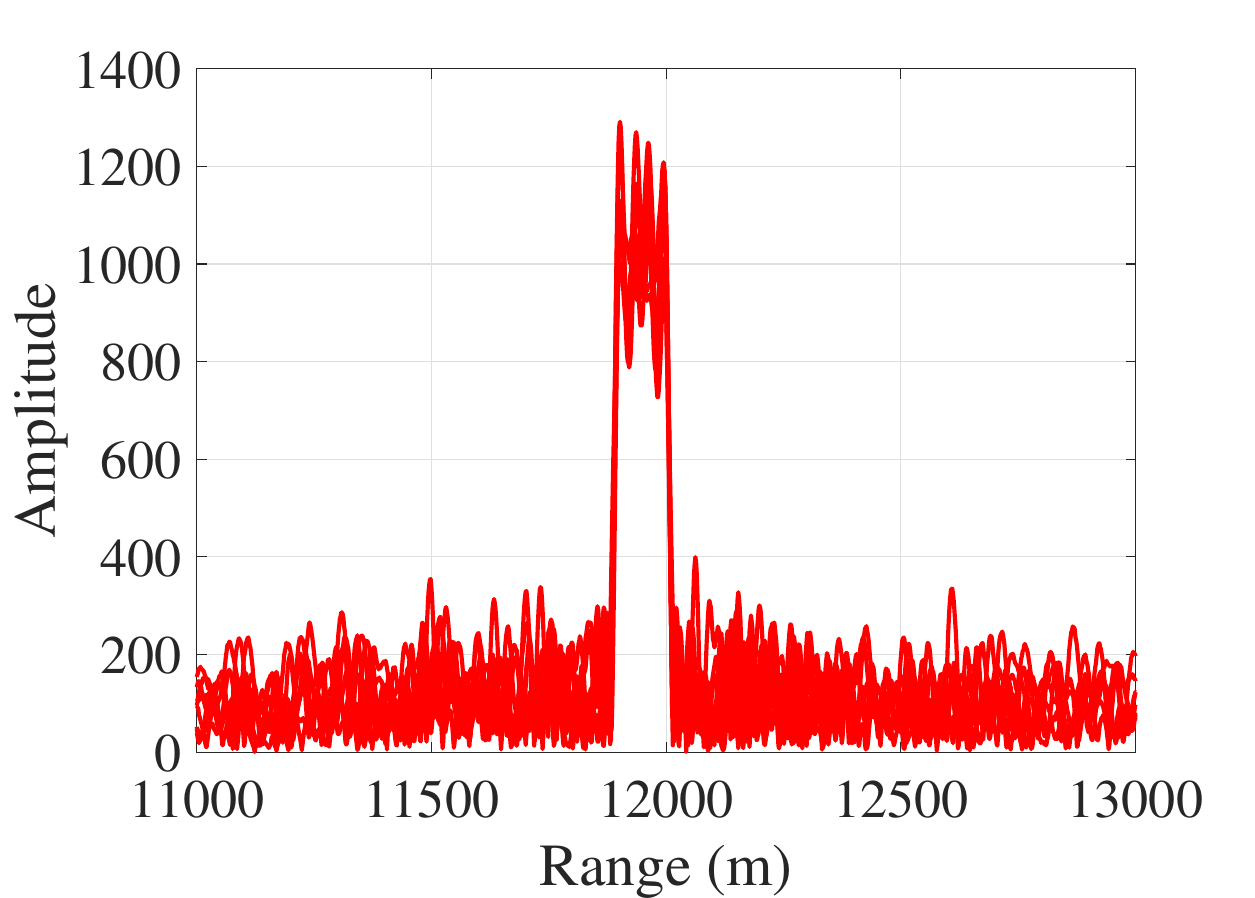}%
\label{fig_3c}}
\hfil
\subfloat[C-FDA ($\varDelta f=50$ kHz)]{\includegraphics[width=1.78in, height=1.35in]{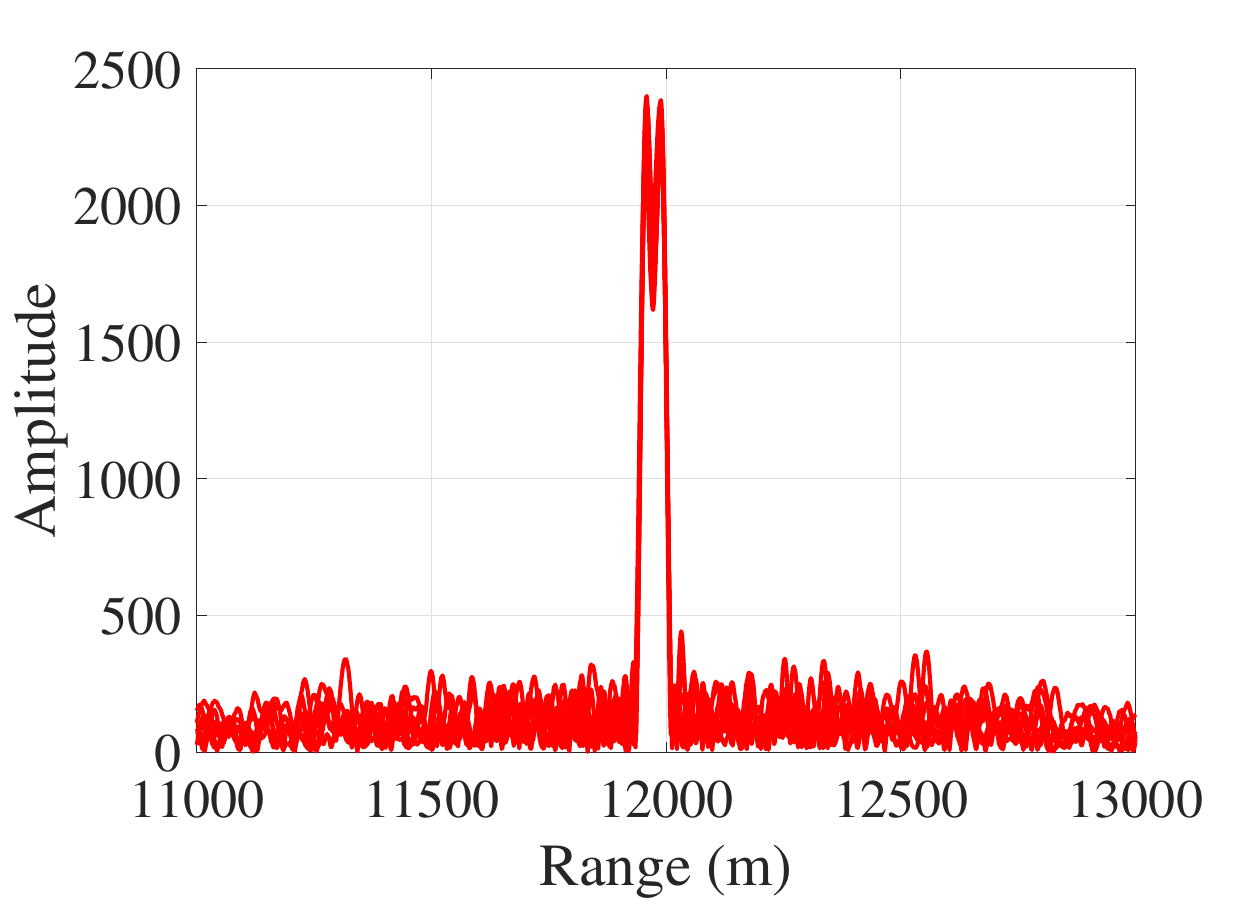}%
\label{fig_3d}}
\hfil
\caption{The range-dimensional output of a single target echo signal of different radar with different radar receiver processing. (a) PA radar receiver processing. (b) FDA-MIMO radar receiver processing. (c) FDA-MIMO radar receiver processing. (d) FDA-MIMO radar receiver processing.}
\label{fig_3}
\end{figure*}

\begin{figure*}[!t]
\centering
\subfloat[C-FDA ($\varDelta f=100$ kHz)]{\includegraphics[width=1.78in, height=1.35in]{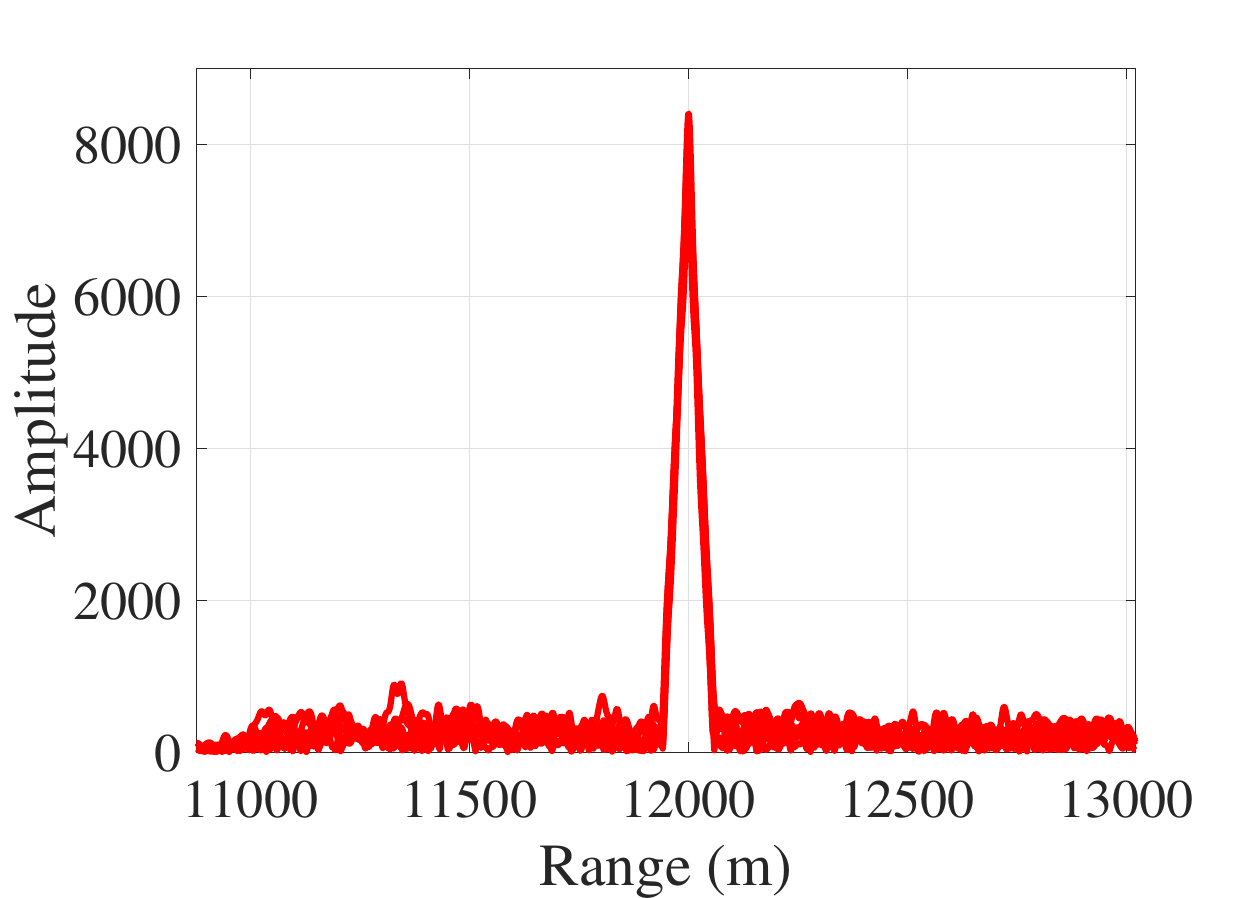}%
\label{fig_4a}}
\hfil
\subfloat[C-FDA ($\varDelta f=50$ kHz)]{\includegraphics[width=1.78in, height=1.35in]{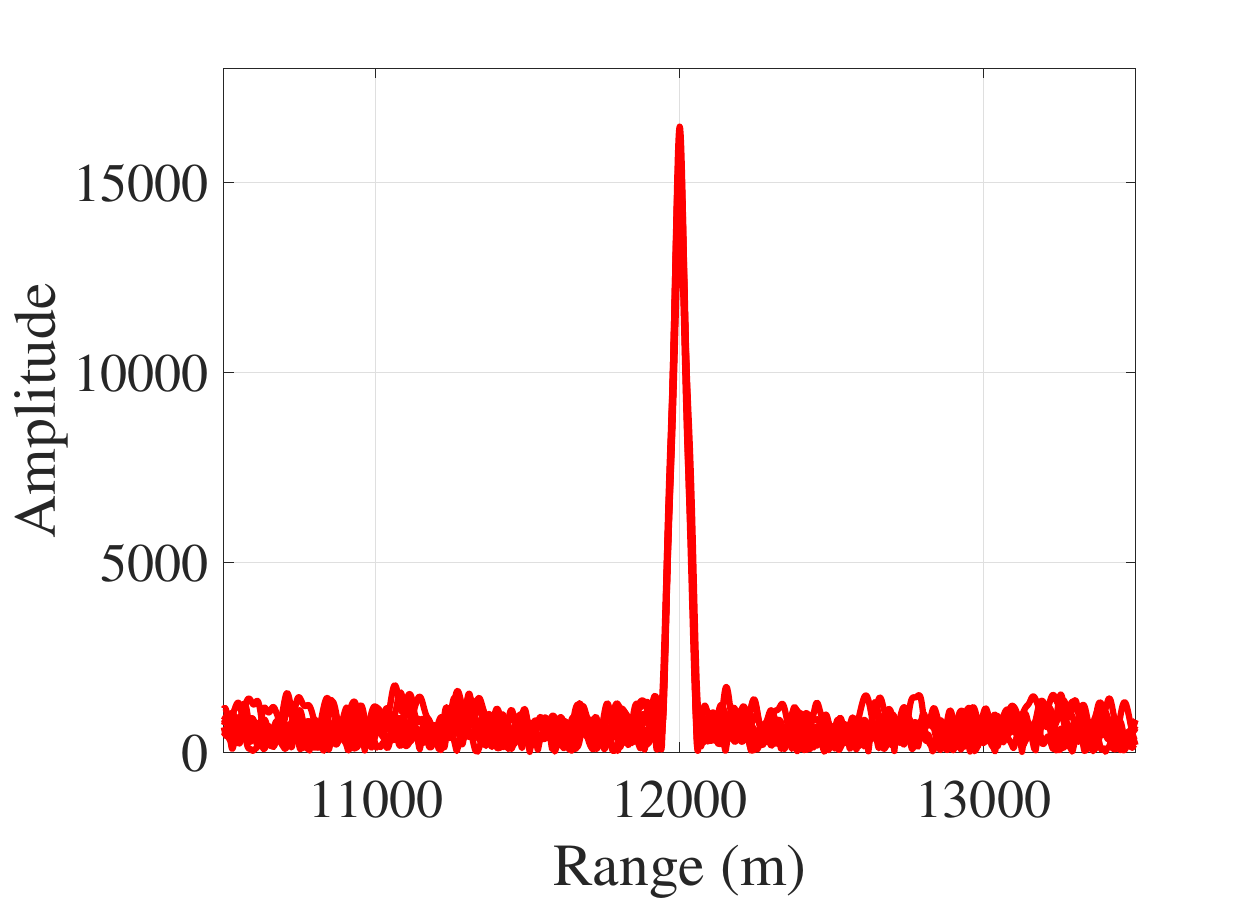}%
\label{fig_4b}}
\hfil
\subfloat[C-FDA ($\varDelta f=10$ kHz)]{\includegraphics[width=1.78in, height=1.35in]{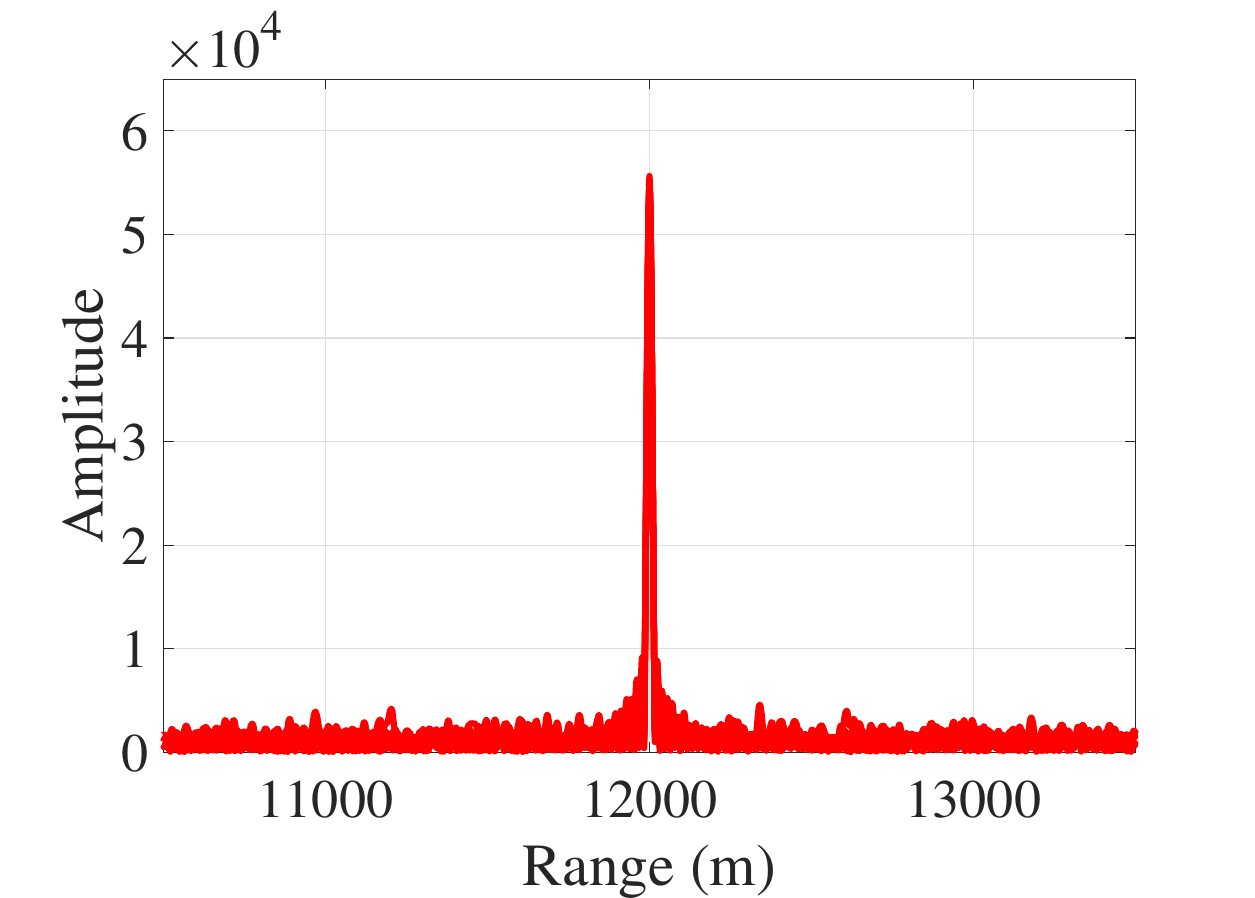}%
\label{fig_4c}}
\hfil
\subfloat[C-FDA ($\varDelta f=0$ kHz)]{\includegraphics[width=1.78in, height=1.35in]{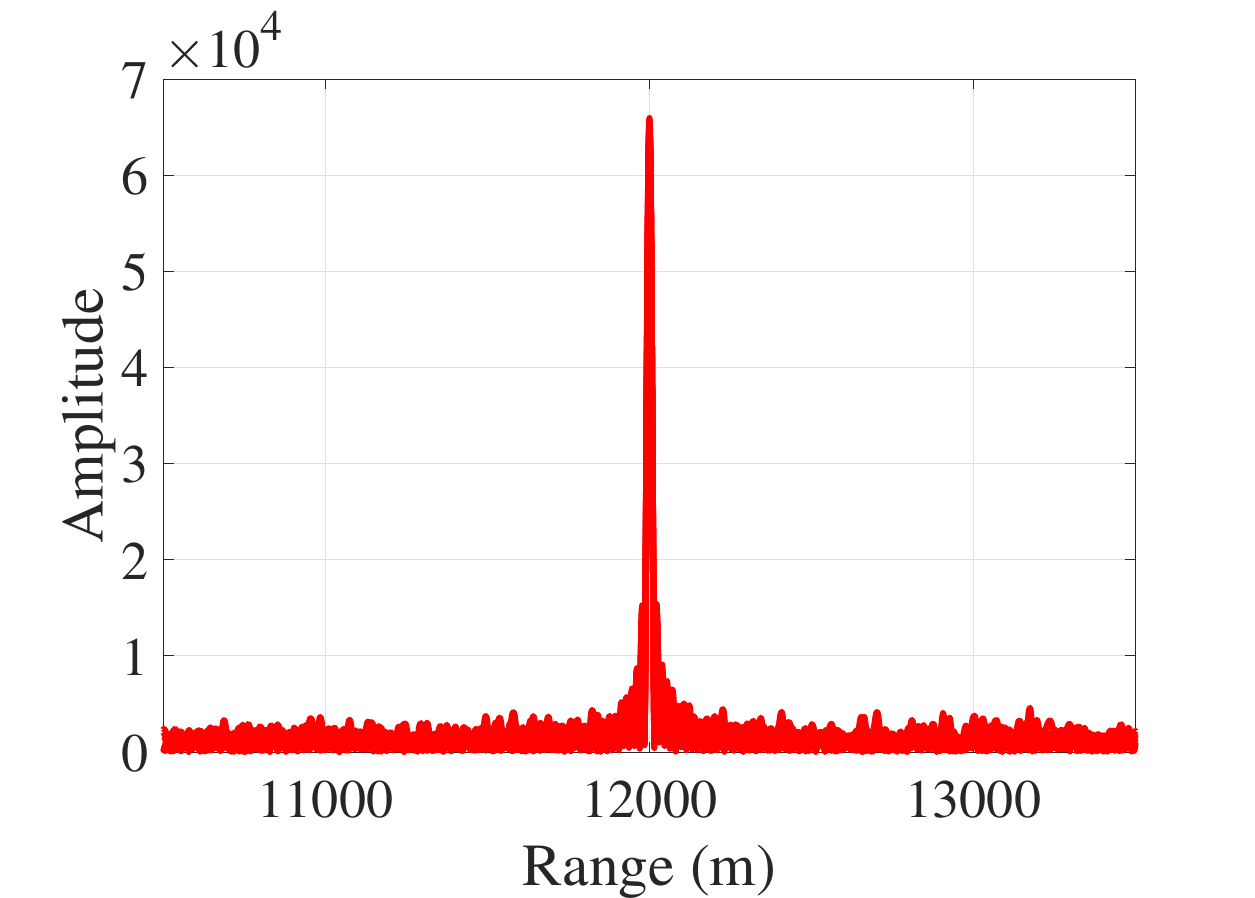}%
\label{fig_4d}}
\caption{The range-dimensional output of a single target echo signal of C-FDA radar with different frequency offset by using the proposed receiver processing. (a) $\varDelta f=100$ kHz. (b) $\varDelta f=50$ kHz. (c) $\varDelta f=10$ kHz. (d) $\varDelta f=0$ kHz.}
\label{fig_4}
\end{figure*}

\section{Numerical results}

\begin{table}[!t]
 \renewcommand{\arraystretch}{1.3}
 \small
 \centering
 \caption{Simulation Parameters}
 \label{tab.1}
 \begin{tabular}{lll}
  \toprule
 Parameter & Symbol & Value  \\
  \midrule
  Carrier frequency & ~~$f_0$ & 10~GHz \\
  Platform height & ~~$H$ & 3000~m\\
  Platform velocity & ~~$v_a$ & 75~m/s\\
  Yaw angle & ~~$\psi$ & 90$^\circ$\\
  Baseband signal bandwidth & ~~$\varDelta f$ & 1~MHz\\
  Radar antenna spacing & ~~$d$ & 15~mm  \\
  Pulse duration & ~~$T_p$ & 10~us\\
  Pulse repetition interval & ~~$T$ & 100~us\\
  Number of transmitting antennas & ~~$M$ & 8\\
  Number of receiving antennas & ~~$N$ & 8\\
  Number of coherent pulses  & ~~$K$ & 8\\
  Azimuth of target & ~~$\varphi_t $ & 0$^\circ$\\
  Range of target & ~~$R_t$ & 12~km\\
  Velocity of target & ~~$v_t$ & 25~m/s\\
  Number of clutter pitches & ~~$I$ & 360\\
  Number of range ambiguity & ~~$P$ & 5\\
  Input signal-to-noise ratio & ~~$\mathrm{SNR}_\mathrm{in}$ & 10~dB\\
  Interference-to-noise ratio & ~~$\mathrm{INR}$ & 30~dB\\
  Clutter-to-noise ratio & ~~$\mathrm{CNR}$ & 50~dB\\
  \bottomrule
 \end{tabular}
\end{table}

In this section, we show the numerical results in terms of receiver processing output, mainlobe interference suppression, and range-ambiguous clutter suppression, to illustrate the performance enhancement of the proposed C-FDA radar compared with the conventional PA radar, MIMO radar, and FDA-MIMO radar. The simulation parameters and corresponding symbols of radar, target, jamming, and clutter is presented in Table \ref{tab.1}. In addition, we use the LFM signal with frequency modulation ratio $\kappa=B/T_p=10^{10}$ as the radar baseband waveform signal and set a frequency offset of $\varDelta f=1$ MHz for FDA-MIMO radar to satisfy waveform orthogonality. Note that the radar range resolution is $c/(2B)=150$ m for the bandwidth of $1$ MHz and the ambiguity range is $R_u=cT/2=15$ km. For FDA-MIMO radar, the secondary ambiguity range is $c/(2\varDelta f)=150$ m.

\subsection{Output of receiver processing}

In this subsection, we present the range-dimensional output of a single target echo signal from a noisy environment by using different radar receiver processing. The PA radar receiver processing is described in \cite{ref1} and\cite{ref4} and the multiple-channel receiver processing of FDA-MIMO radar is described in \cite{ref17},\cite{ref36} and\cite{ref37}. Note that all simulations illustrate the range-dimensional output in the first channel of the first receive element for FDA-MIMO radar and C-FDA radar, and the results of the remaining channels and receive elements are similar. Note that the output amplitude of one transmit signal is related to the Fast Fourier Transform (FFT) sample rate $f_s=100$ MHz and pulse width $T_p=10$ us, namely $f_s\cdot T_p=1000$.

In Fig.\ref{fig_3}, we show the range-dimensional output of a mono-target echo signal of different radars with different radar receiver processing. Firstly, the output result of PA radar with its receiver processing is presented in Fig.\ref{fig_3a}. The target peak can be obtained at $12$ km with an amplitude of 8000, which is consistent with the array gain of the 8-array PA radar. Secondly, the output result of FDA-MIMO radar with its own receiver processing is shown in Fig.\ref{fig_3b}. The target peak can be obtained at $12$ km with an amplitude of 1000, which is consistent with the analysis of array gain between FDA-MIMO and PA radar in \eqref{eq.12}. Thirdly, we illustrate the range-dimensional output of C-FDA radar of $\varDelta f=100$ kHz and $\varDelta f=50$ kHz with FDA-MIMO radar receiver processing in Fig.\ref{fig_3c} and Fig.\ref{fig_3d}, respectively. In Fig.\ref{fig_3c}, multiple peaks appear at different ranges associated with the target signal and the peak amplitudes are roughly 1000 to 1200. In Fig.\ref{fig_3d}, multiple peaks are closer to $12$ km and their amplitudes increase to around 2200 as the C-FDA radar frequency offset decreases to $50$ kHz. However, these two output results are unsatisfactory for further target detection and parameter estimation. 

\begin{figure*}[!t]
\centering
\subfloat[FDA-MIMO before suppression]{\includegraphics[width=1.78in, height=1.35in]{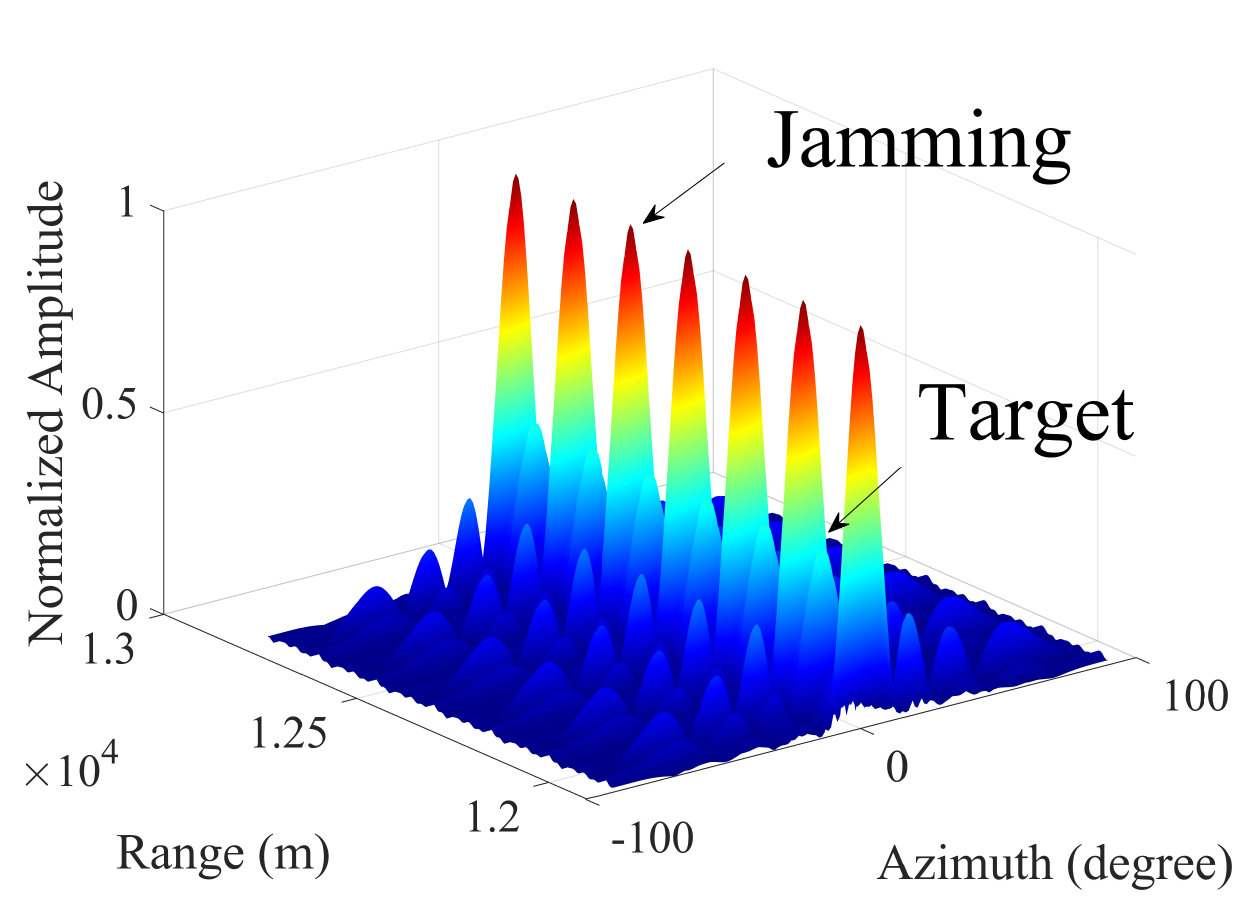}%
\label{fig_5a}}
\hfil
\subfloat[FDA-MIMO after suppression]{\includegraphics[width=1.78in, height=1.35in]{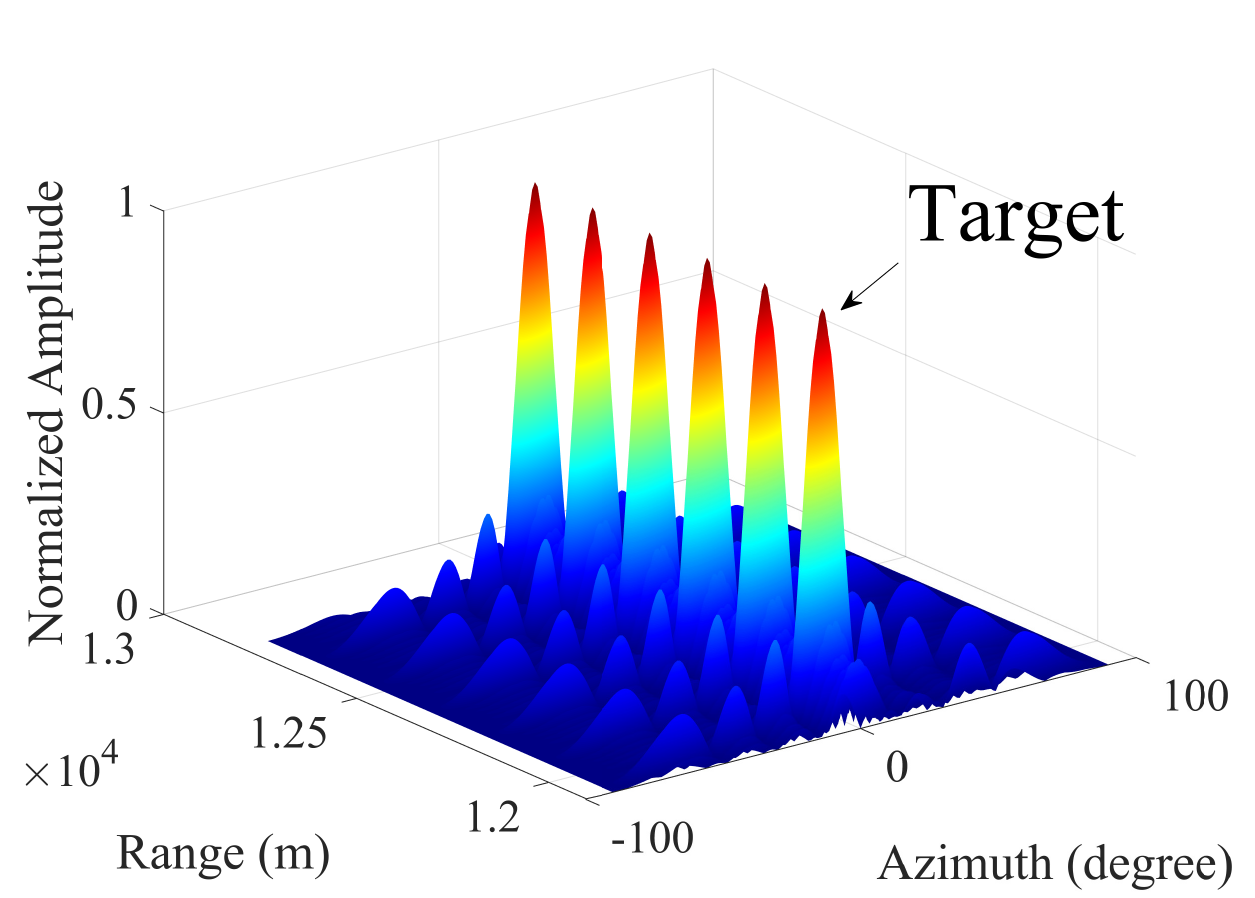}%
\label{fig_5b}}
\hfil
\subfloat[C-FDA before suppression]{\includegraphics[width=1.78in, height=1.35in]{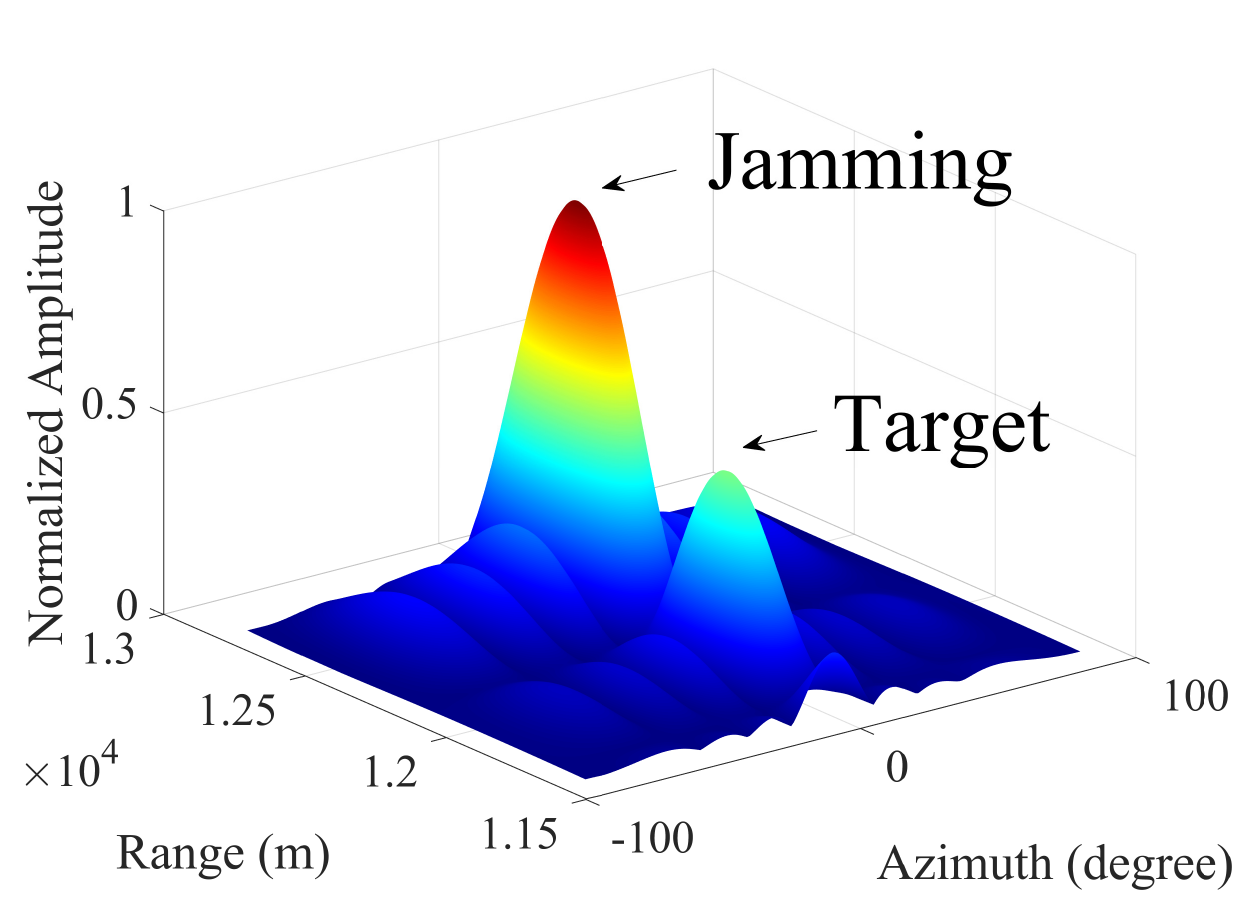}%
\label{fig_5c}}
\hfil
\subfloat[C-FDA after suppression]{\includegraphics[width=1.78in, height=1.35in]{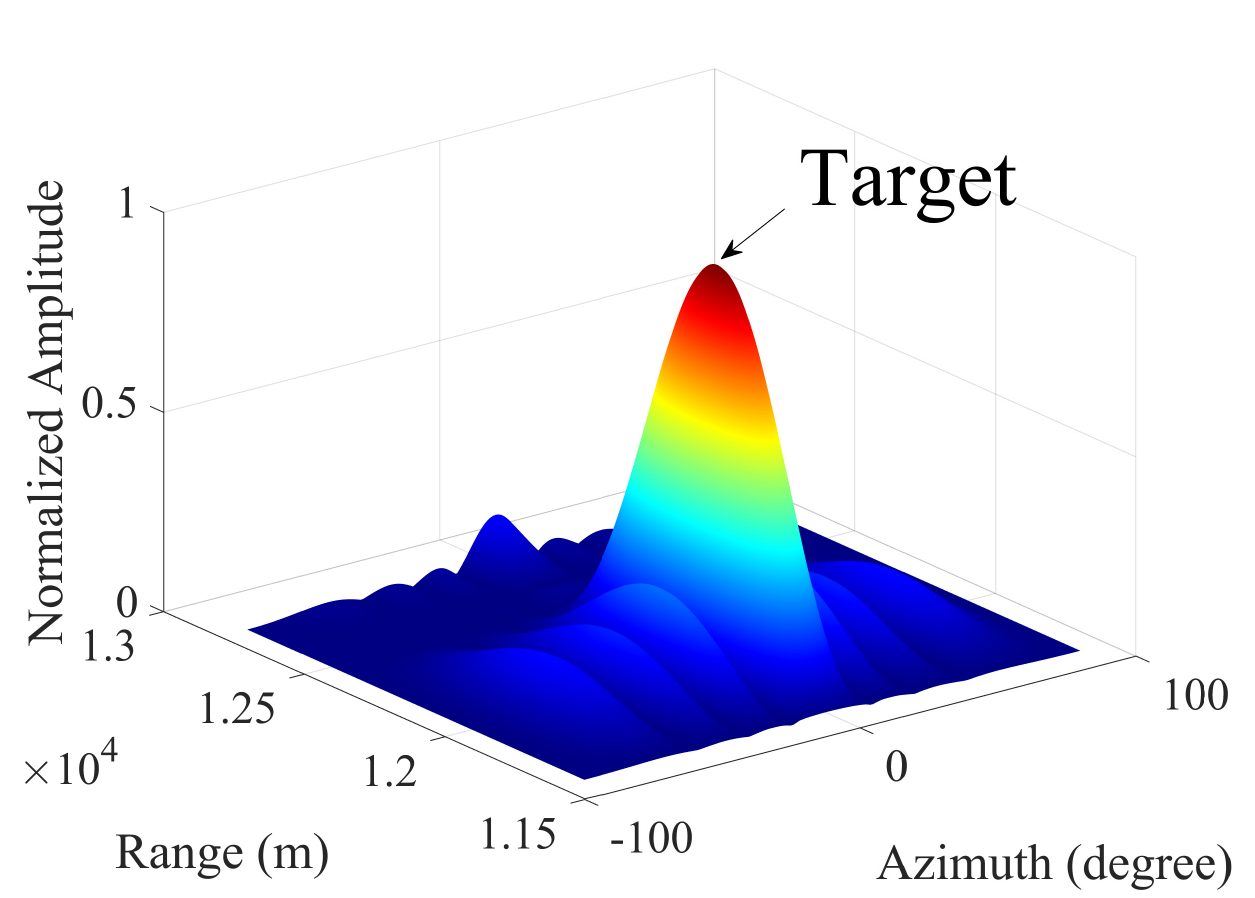}%
\label{fig_5d}}
\caption{The range-azimuth three-dimensional spectrum before and after interference suppression when jamming is at 12.5 km. (a) before suppression for FDA-MIMO. (b) after suppression for FDA-MIMO. (c) before suppression for C-FDA ($50$ kHz). (d) after suppression for C-FDA ($50$ kHz).}
\label{fig_5}
\end{figure*}

\begin{figure*}[!t]
\centering
\subfloat[FDA-MIMO before suppression]{\includegraphics[width=1.78in, height=1.35in]{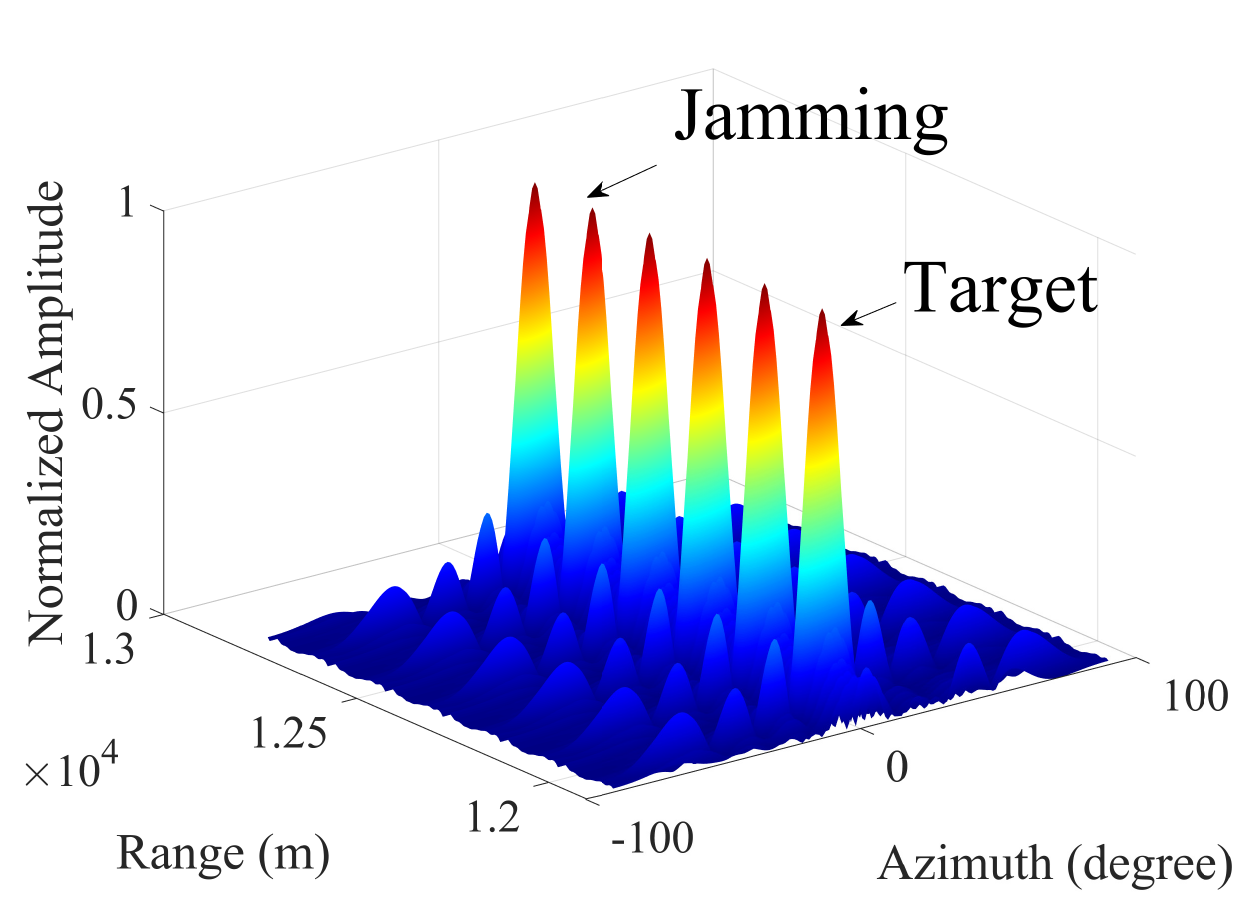}%
\label{fig_6a}}
\hfil
\subfloat[FDA-MIMO after suppression]{\includegraphics[width=1.78in, height=1.35in]{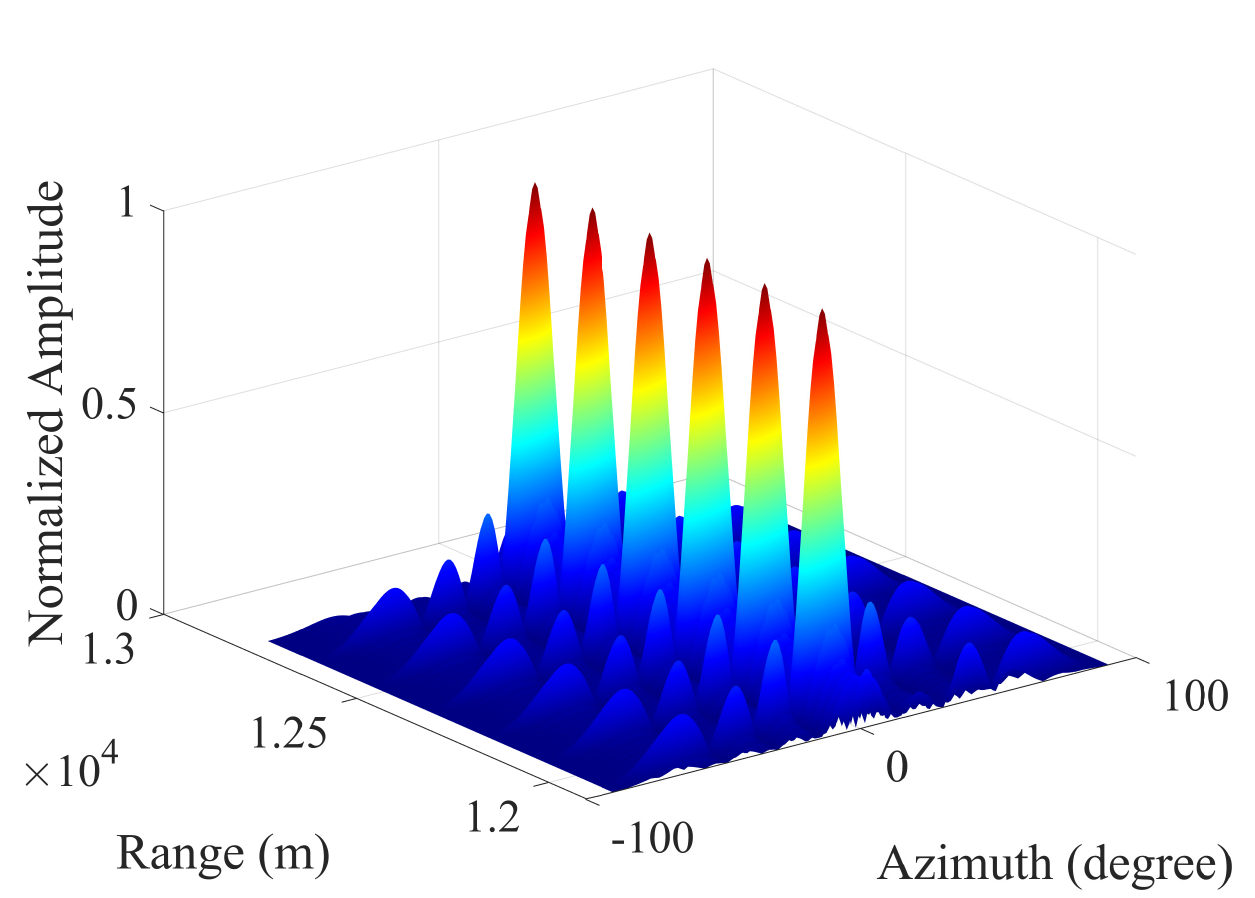}%
\label{fig_6b}}
\hfil
\subfloat[C-FDA before suppression]{\includegraphics[width=1.78in, height=1.35in]{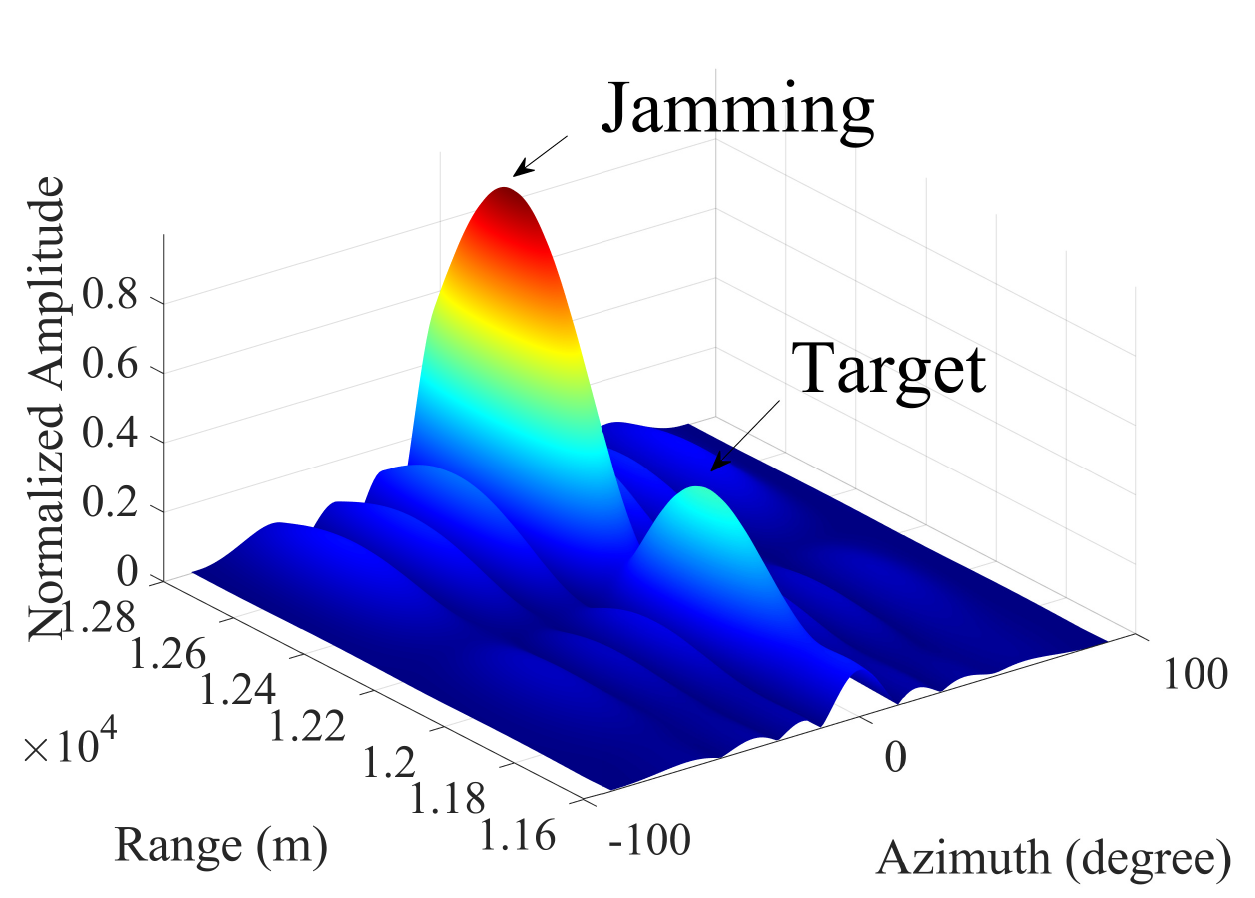}%
\label{fig_6c}}
\hfil
\subfloat[C-FDA after suppression]{\includegraphics[width=1.78in, height=1.35in]{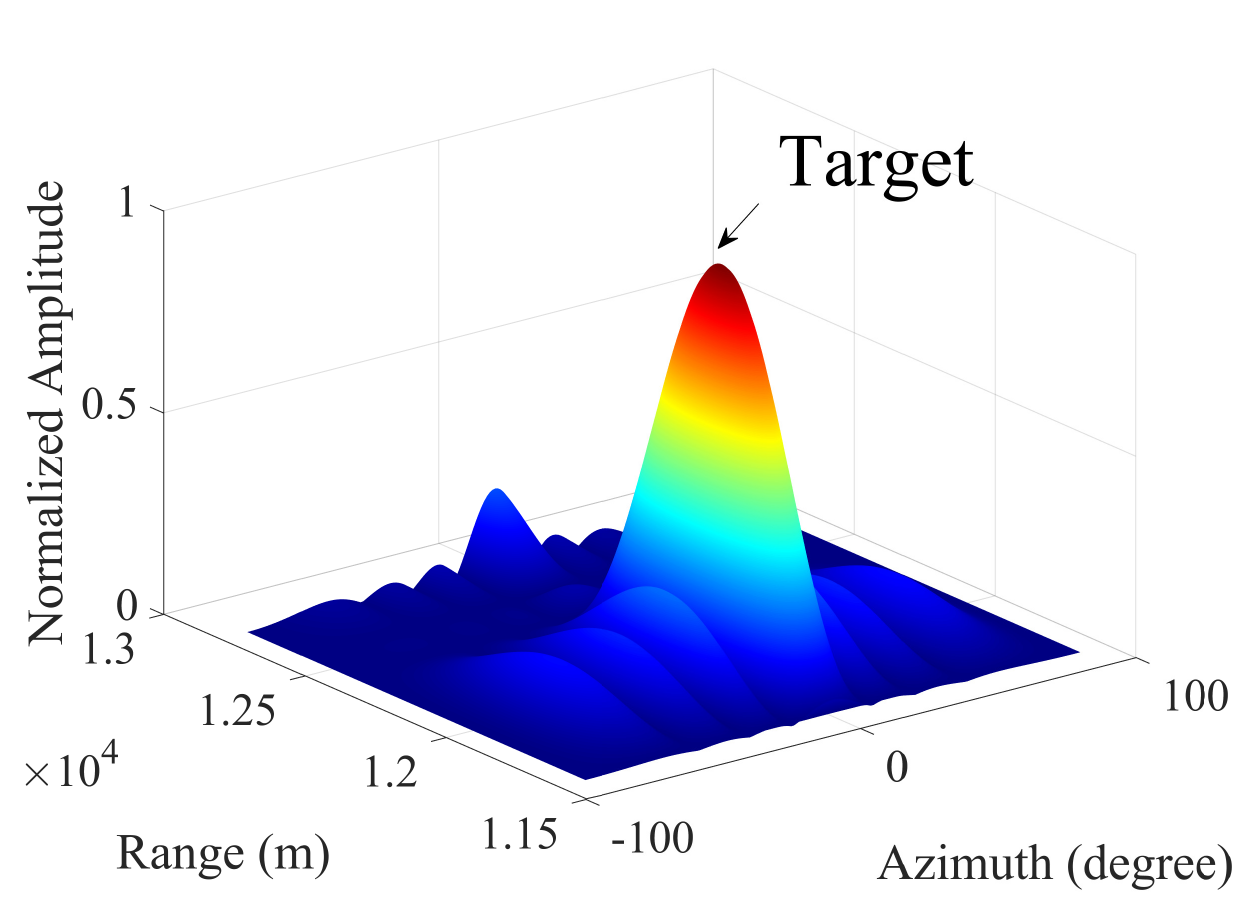}%
\label{fig_6d}}
\hfil
\caption{The range-azimuth three-dimensional spectrum before and after interference suppression when jamming is at 12.6 km. (a) before suppression for FDA-MIMO. (b) after suppression for FDA-MIMO. (c) before suppression for C-FDA ($50$ kHz). (d) after suppression for C-FDA ($50$ kHz).}
\label{fig_6}
\end{figure*}

In Fig.\ref{fig_4}, we show the range-dimensional output of the same target echo in Fig.\ref{fig_3} for C-FDA radar of different frequency offset with the proposed receiver processing. In Fig.\ref{fig_4a}, we show the output result of C-FDA radar with a frequency offset of $\varDelta f=$100 kHz, appearing one target peak at 12 km with amplitude over 8000. In Fig.\ref{fig_4b}, we show the output result of C-FDA radar with a frequency offset of $\varDelta f=$50 kHz, appearing one target peak at 12 km with amplitude over 15000. Compared with Fig.\ref{fig_3c} and Fig.\ref{fig_3d}, the proposed C-FDA receiver processing effectively solves the problem of multiple peaks and low amplitudes. In Fig.\ref{fig_4c} and Fig.\ref{fig_4d}, we present the output result of C-FDA radar with a frequency offset of $\varDelta f=$10 kHz and $\varDelta f=$0 kHz, where the peak amplitudes at 12 km exceed $5.5\times 10^4$ and $6\times 10^4$, respectively. Moreover, the output amplitudes of C-FDA radar with different frequency offset in Fig.\ref{fig_4} are larger than the amplitude of PA radar in Fig.\ref{fig_3a} and the output amplitude of the C-FDA radar with frequency offset of $\varDelta f=$0 kHz in Fig.\ref{fig_4d} is near $6\times 10^4$, which is consistent with the comparison of array gains in \eqref{eq.22} of Theorem \ref{Thm.2}.

\subsection{Mainlobe interference suppression}

In this section, we show the performance comparison of mainlobe interference suppression. Target and jamming are designed to have the same azimuth of $0^{\circ}$ and velocity of $75$ m/s but different ranges. We use the spectrum to describe the distribution of target and jamming energy in a range-azimuth two-dimensional plane before and after mainlobe interference suppression.
\begin{equation}
P_F=\frac{1}{\boldsymbol{s}^H\boldsymbol{Q}^{-1}\boldsymbol{s}}
\label{eq.45}
\end{equation}
where $\boldsymbol{Q}$ is the target plus jamming covariance matrix and $\boldsymbol{s}$ is a copy of target steering vector $\boldsymbol{t}$ for matching in the range-azimuth three-dimensional space \cite{ref4}. In addition, we use the aforementioned output SINR to evaluate the performance of mainlobe interference suppression for the proposed C-FDA radar and those conventional radars.

In Fig.\ref{fig_5}, we show the range-azimuth three-dimensional spectrum before and after mainlobe interference suppression when the target is at $12$ km but the jamming is at $12.5$ km, comparing the FDA-MIMO and C-FDA radar. For FDA-MIMO radar, Fig.\ref{fig_5a} illustrates the target and jamming appear periodically in the range dimension with the secondary ambiguity range $c/(2\varDelta f)=150$ m as a period. In Fig.\ref{fig_5b}, the jamming energy is suppressed due to $\varDelta R=\left| R_j-R_t \right|\ne {c}L/{2\varDelta f}, L\in \mathbb{N}$, which is discussed in \eqref{eq.28}. For C-FDA radar with a frequency offset of $50$ kHz, Fig.\ref{fig_5c} illustrates that the target and jamming are explicitly presented at 12 km and 12.5 km without secondary range-ambiguous and Fig.\ref{fig_5d} shows the jamming energy is suppressed and the target remains at 12 km.

\begin{figure}[t]
\centering
\includegraphics[width=3.7in]{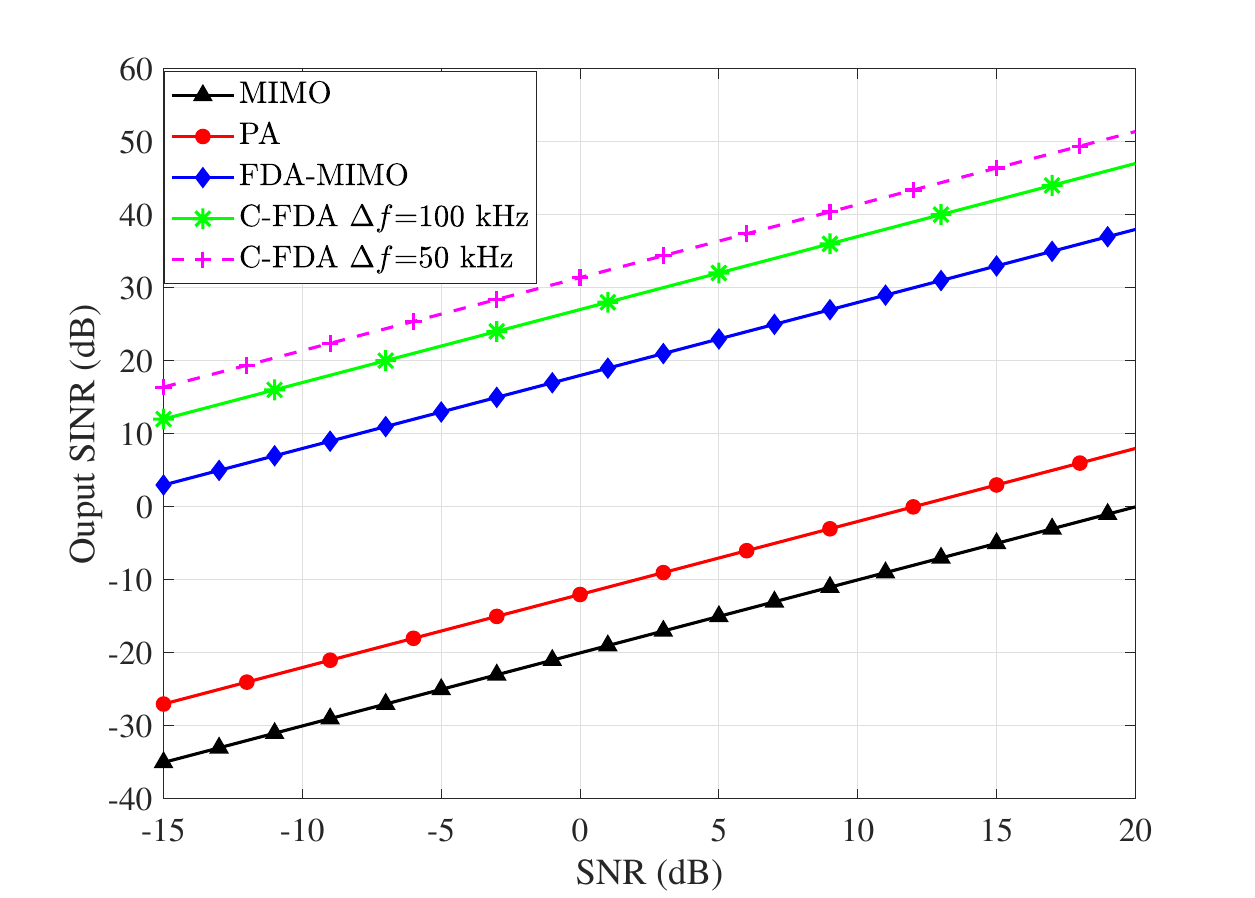}
\caption{Output SINR versus SNR when the jamming is at 12.5 km.}
\label{fig_7}
\end{figure}

As a comparison, we present the range-azimuth spectrum before and after interference suppression when the jamming is at $12.6$ km in Fig\ref{fig_6}, where $\varDelta R=\left| R_j-R_t \right|= 4\times{c}/{(2\varDelta f)}$. In Fig.\ref{fig_6a}, jamming is overlapped by the periodic repetition of target energy due to the secondary range-ambiguous of FDA-MIMO radar. Thereby the jamming energy can not be suppressed in the Fig.\ref{fig_6b}. For C-FDA radar with a frequency offset of $50$ kHz, Fig.\ref{fig_6c} illustrates that the target and jamming are explicitly presented at 12 km and 12.6 km without secondary range-ambiguous and Fig.\ref{fig_6d} shows the jamming energy is suppressed and the target remains at 12 km. 

Fig.\ref{fig_7} shows the output SINR versus input SNR when the jamming is at 12.5 km, which is consistent with Fig.\ref{fig_5}, comparing the proposed C-FDA radar with conventional radars, namely PA radar, MIMO radar, and FDA-MIMO radar. Note that the output SINRs of PA and MIMO radar are significantly lower because of their range-independency and the output SINR of the PA radar is larger than that of the MIMO radar due to its coherent array gain. For the case where the jamming is at 12.5 km, both FDA-MIMO and C-FDA radar have a high output SINR after mainlobe interference suppression, while the output SINR of the proposed C-FDA radar is higher with the decrease of frequency offset due to its higher coherent array gain as discussed in Theorem.\ref{Thm.2}.

Fig.\ref{fig_8} illustrates the output SINR versus SNR when the jamming is at 12.6 km, which is consistent with Fig.\ref{fig_6}. Focusing on FDA-MIMO radar, the output SINR is reduced to the same level as PA and MIMO radar, indicating that the mainlobe interference cannot be suppressed when the jamming locates at the secondary ambiguity range of the target. Fig.\ref{fig_7} and Fig.\ref{fig_8} suggest that the proposed C-FDA can effectively suppress the mainlobe interference located at different ranges, without deteriorating from the secondary range-ambiguous and with a high coherent processing gain.

\begin{figure}[t]
\centering
\includegraphics[width=3.7in]{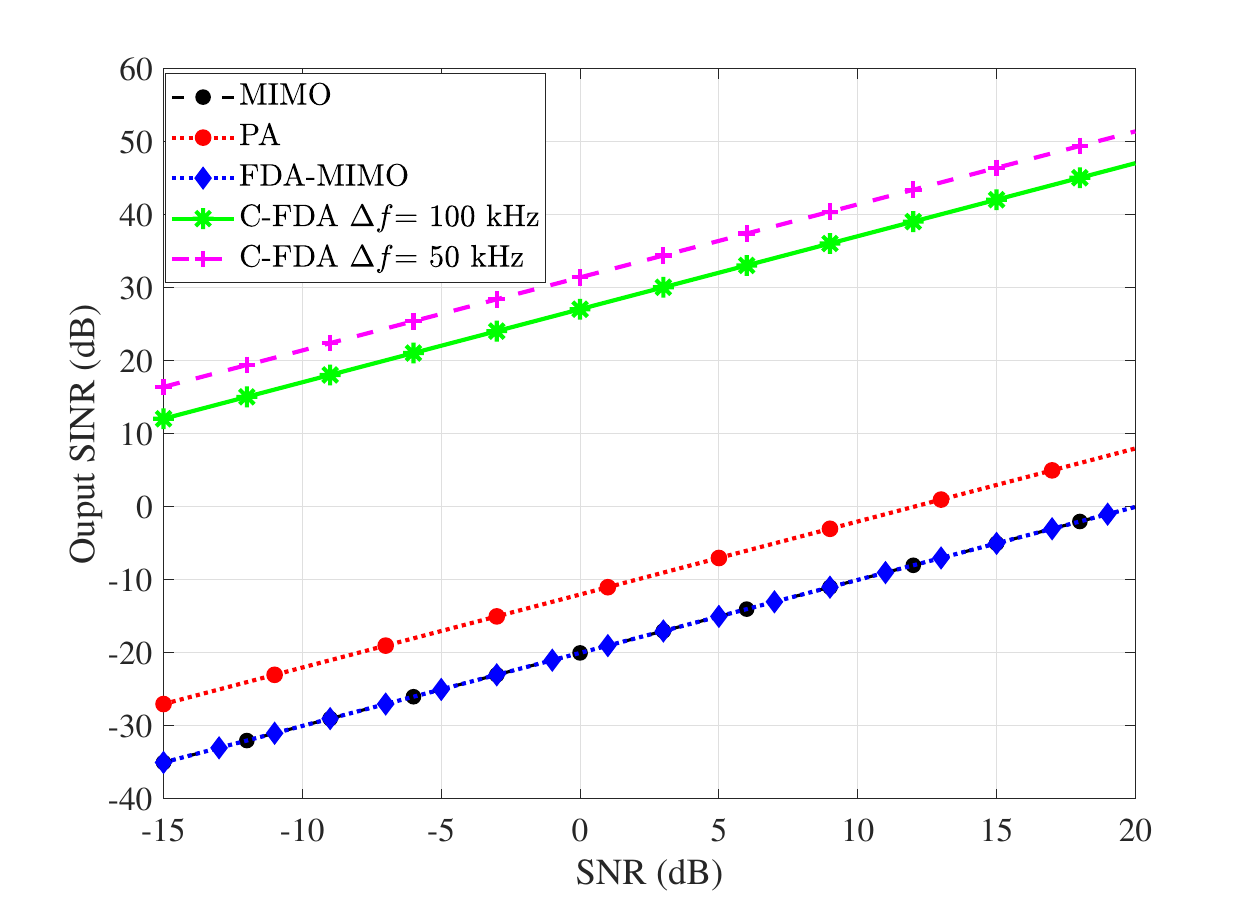}
\caption{Output SINR versus SNR when the jamming is at 12.6 km.}
\label{fig_8}
\end{figure}

\subsection{Range-ambiguous clutter suppression}

In this subsection, we present the range-ambiguous clutter suppression results to show the performance of the proposed C-FDA radar with respect to the FDA-MIMO radar. Note that the target is located at the range-unambiguous region $R_t=12~\mathrm{km}<15~\mathrm{km}$ and the clutter-to-noise ratio (CNR) of CUT is defined as
\begin{equation}
\mathrm{CNR}\triangleq \frac{\mathrm{tr}\left( \boldsymbol{R}_d \right)}{\sigma _{\mathrm{n}}^{2}\cdot MNK}
\label{eq.46}
\end{equation}
where $\boldsymbol{R}_d$ is referred in \eqref{eq.36}. Furthermore, we use SDR loss to evaluate the performance of range-ambiguous clutter suppression, which is calculated by the output SDR to the input SDR measured in the CUT.
\begin{equation}
\mathrm{SDR}_{\mathrm{loss}}=\frac{\mathrm{SDR}_{\mathrm{o}}}{\mathrm{SDR}_{\mathrm{i}}}=\mathrm{SDR}_{\mathrm{o}}\frac{\mathrm{SNR}_{\mathrm{in}}}{1+\mathrm{CNR}}
\label{eq.47}
\end{equation}
where $\mathrm{SDR}_{\mathrm{o}}$ can be replaced by \eqref{eq.42a} for C-FDA radar and \eqref{eq.42b} for FDA-MIMO radar. According to the radar resolution range of $150$ m, the target is located at the 800th range bin.

\begin{figure*}[!t]
\centering
\subfloat[FDA-MIMO radar]{\includegraphics[width=1.78in, height=1.35in]{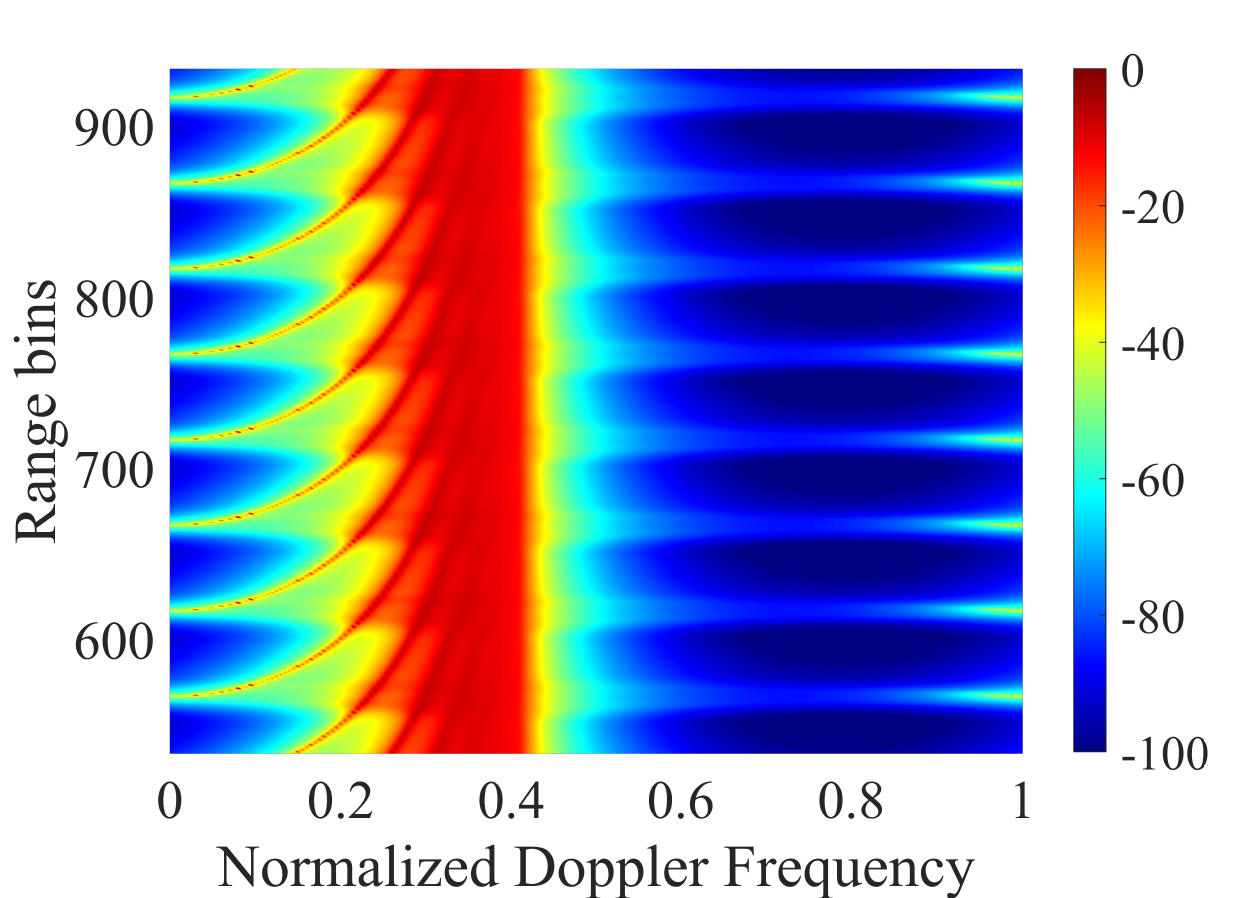}%
\label{fig_9a}}
\hfil
\subfloat[C-FDA ($200$ kHz)]{\includegraphics[width=1.78in, height=1.35in]{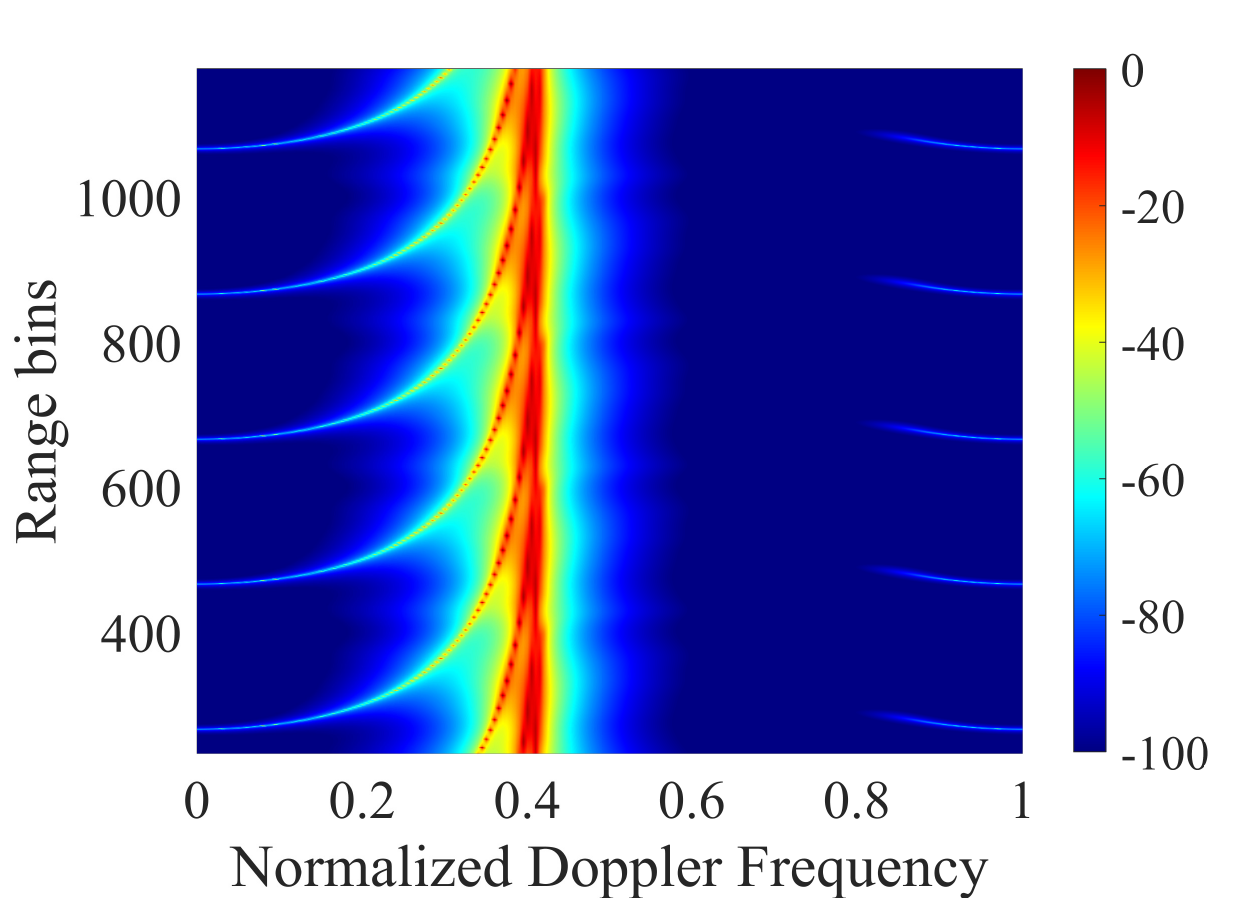}%
\label{fig_9b}}
\hfil
\subfloat[C-FDA ($100$ kHz)]{\includegraphics[width=1.78in, height=1.35in]{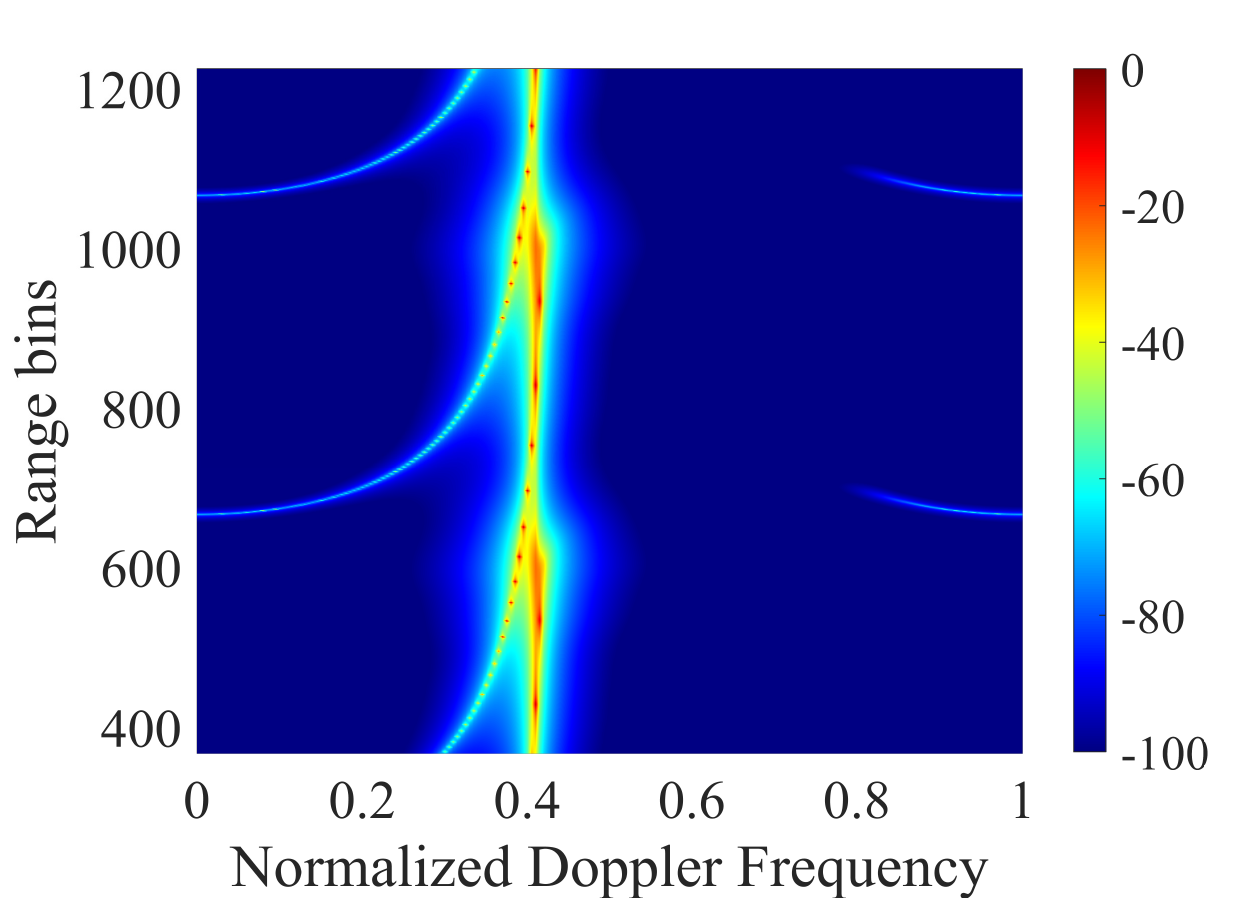}%
\label{fig_9c}}
\hfil
\subfloat[C-FDA ($50$ kHz)]{\includegraphics[width=1.78in, height=1.35in]{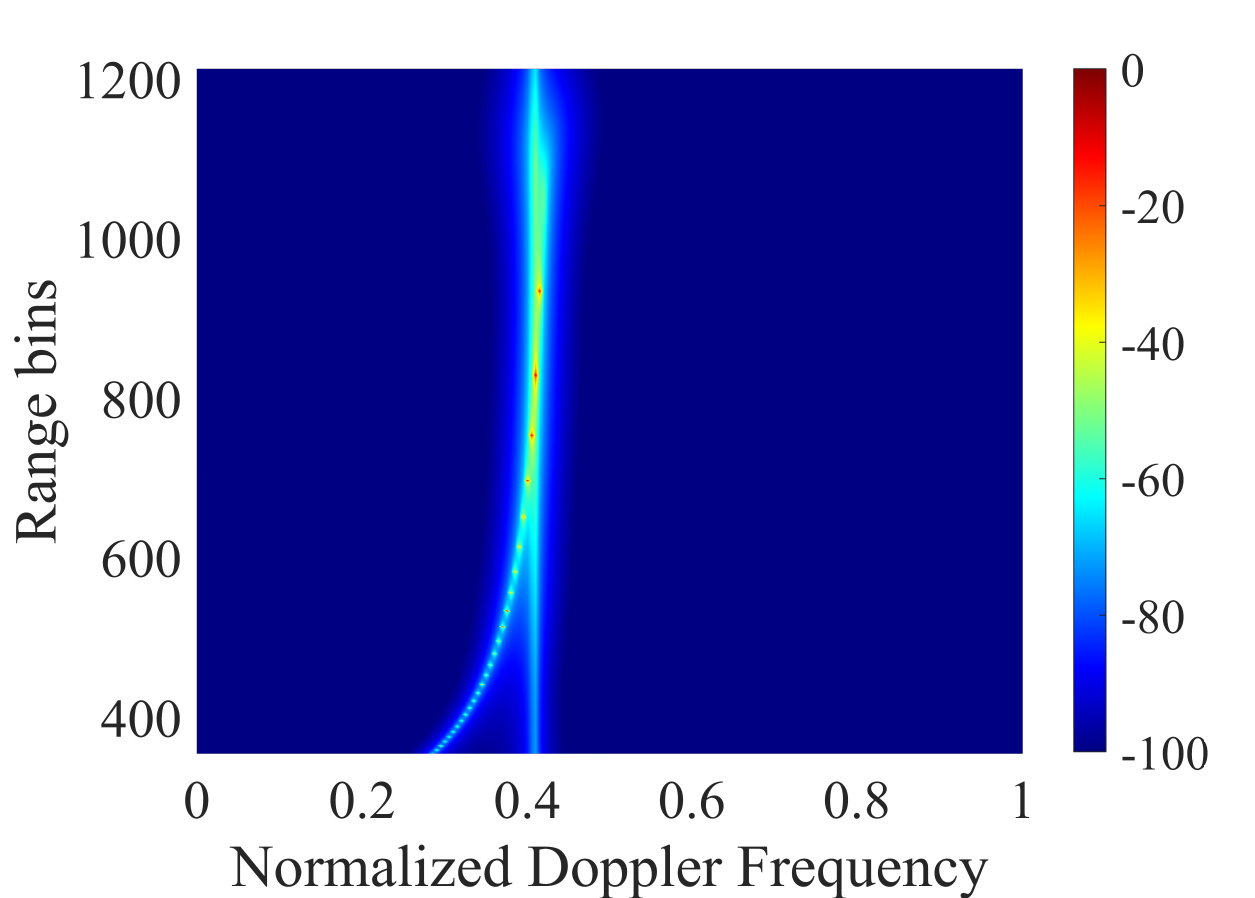}%
\label{fig_9d}}
\hfil
\caption{The range-Doppler two-dimensional clutter spectrum for FDA-MIMO radar and C-FDA radar with different frequency offset. (a) FDA-MIMO radar. (b) C-FDA ($200$ kHz). (c) C-FDA ($100$ kHz). (d) C-FDA ($50$ kHz).}
\label{fig_9}
\end{figure*}

\begin{figure*}[!t]
\centering
\subfloat[FDA-MIMO]{\includegraphics[width=1.78in, height=1.35in]{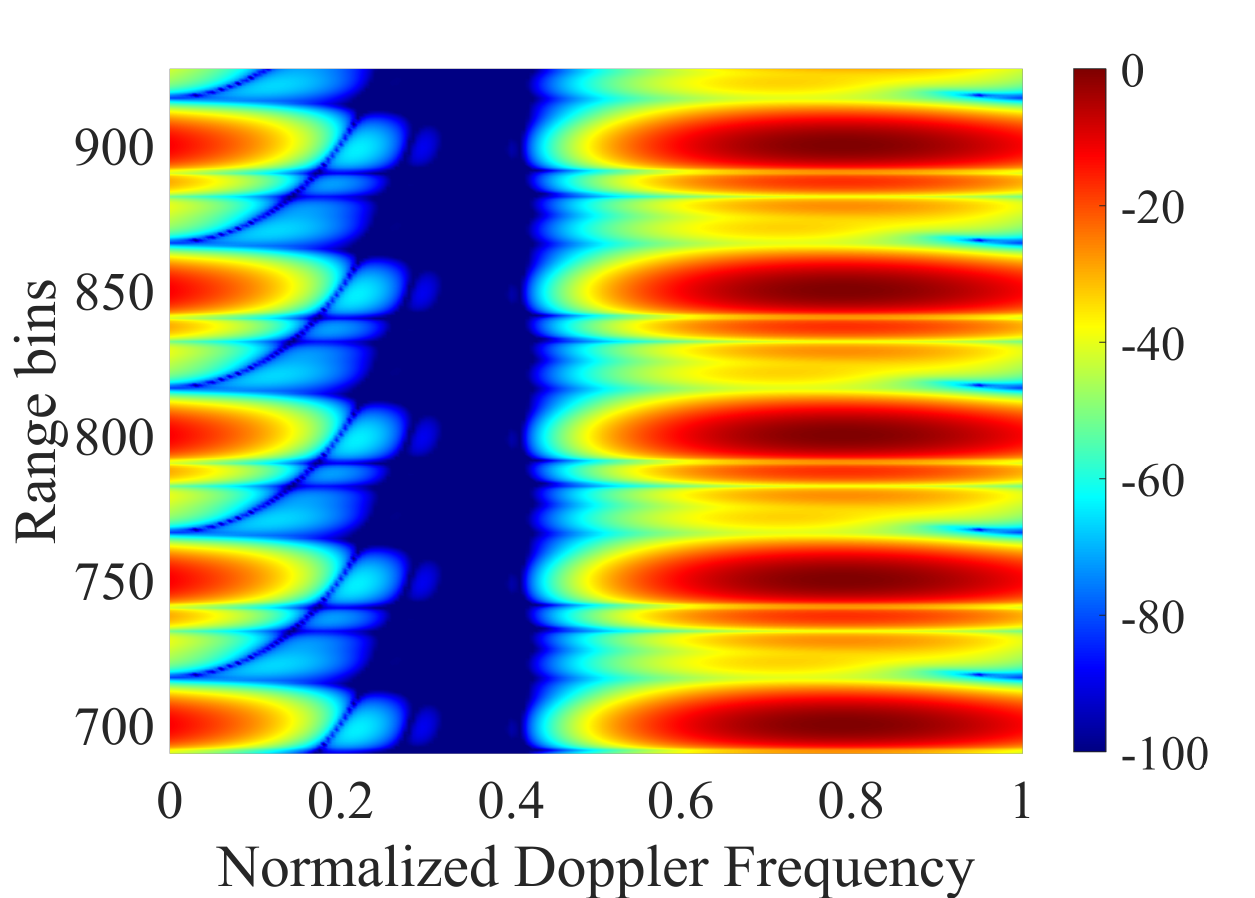}%
\label{fig_10a}}
\hfil
\subfloat[C-FDA ($200$ kHz)]{\includegraphics[width=1.78in, height=1.35in]{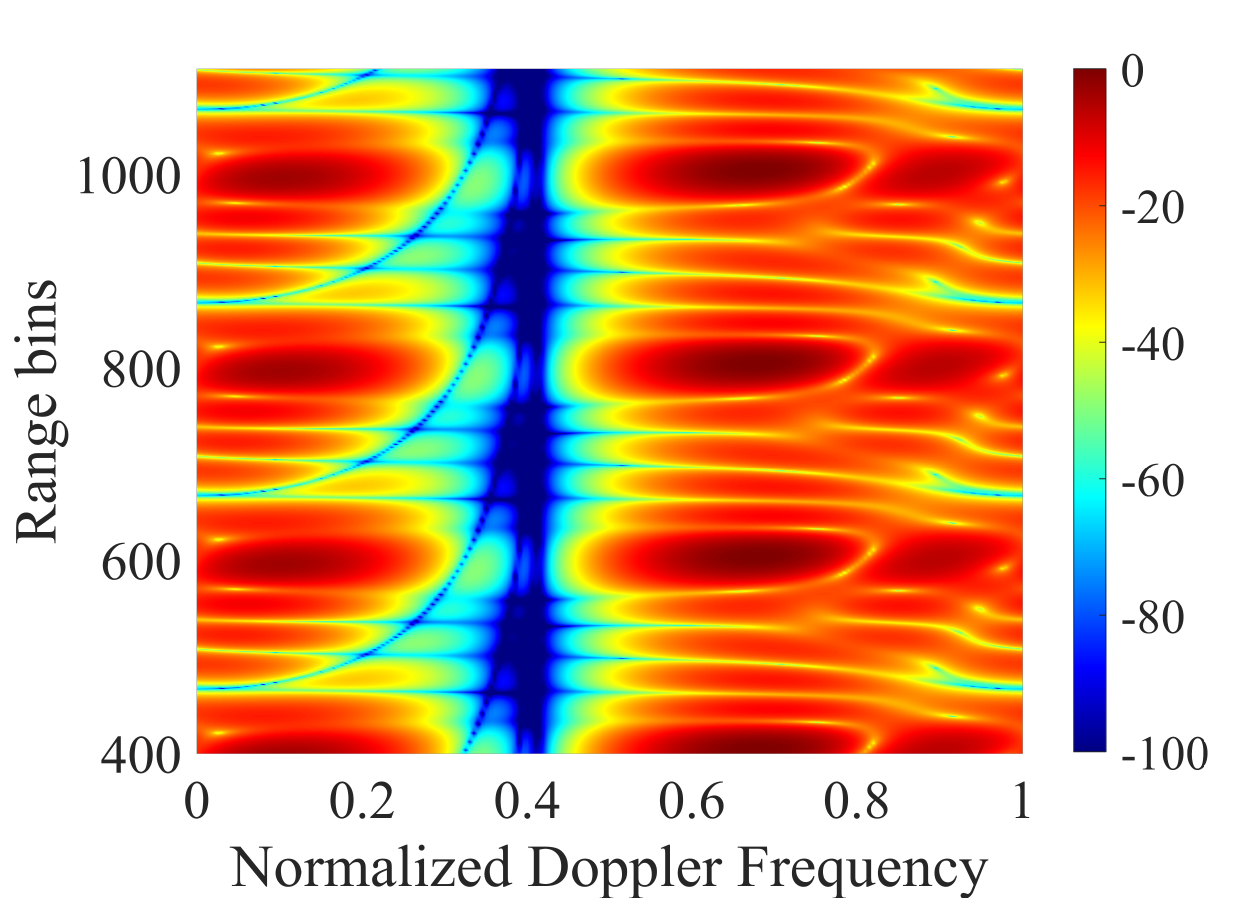}%
\label{fig_10b}}
\hfil
\subfloat[C-FDA ($100$ kHz)]{\includegraphics[width=1.78in, height=1.35in]{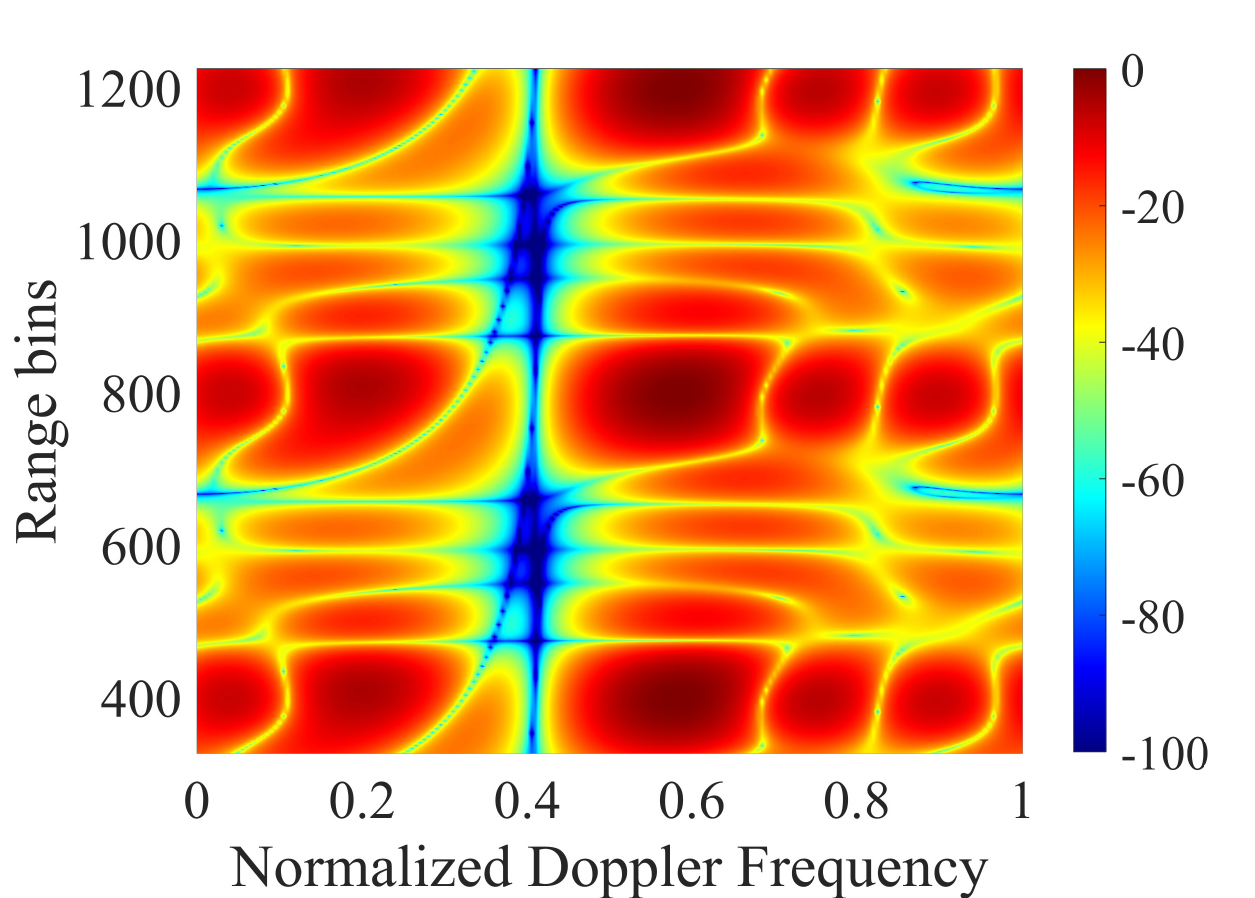}%
\label{fig_10c}}
\hfil
\subfloat[C-FDA ($50$ kHz)]{\includegraphics[width=1.78in, height=1.35in]{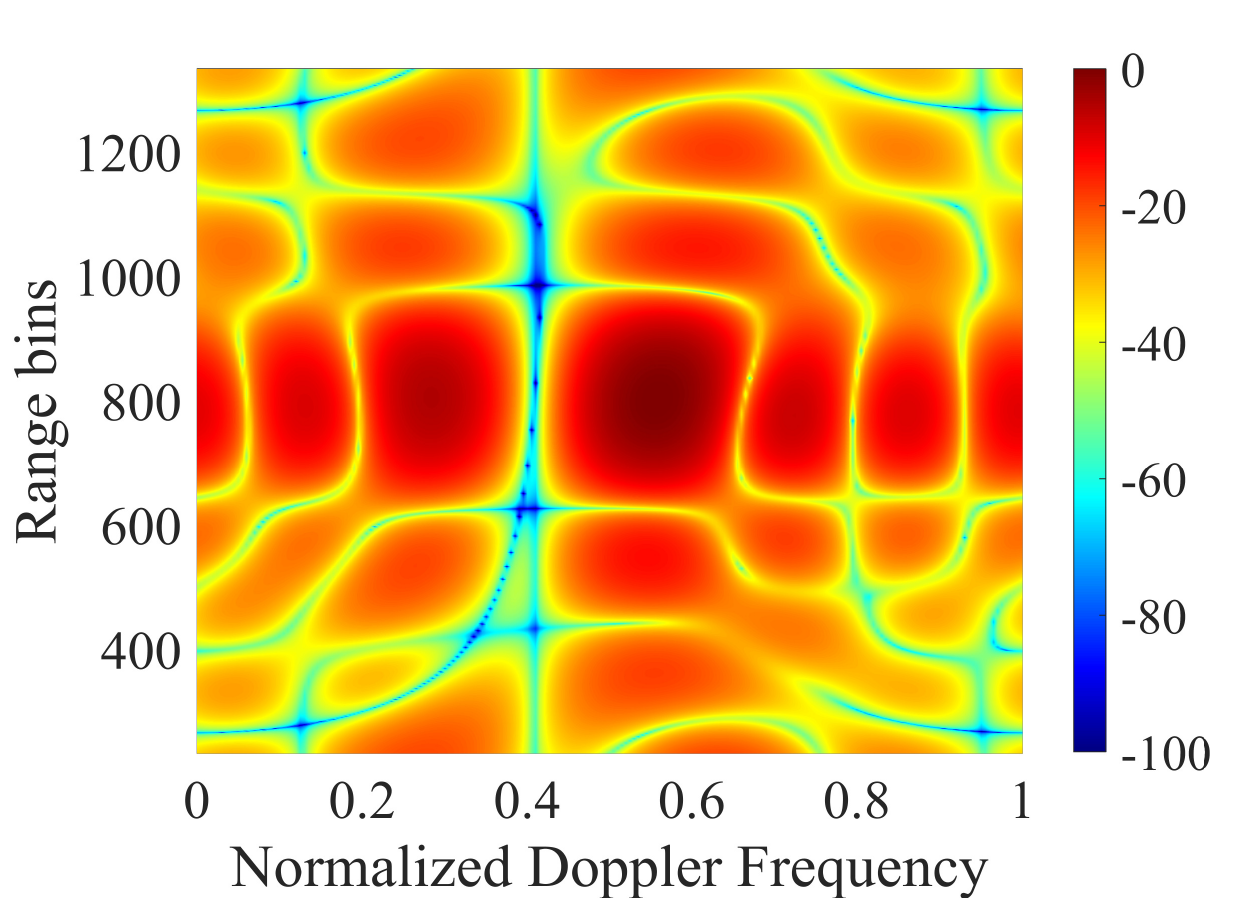}%
\label{fig_10d}}
\hfil
\caption{The range-Doppler two-dimensional spectrum after STAP for FDA-MIMO radar and C-FDA radar with different frequency offset. (a) FDA-MIMO radar. (b) C-FDA ($200$ kHz). (c) C-FDA ($100$ kHz). (d) C-FDA ($50$ kHz).}
\label{fig_10}
\end{figure*}

In Fig.\ref{fig_9}, we show the range-Doppler two-dimensional clutter spectrum for FDA-MIMO radar and C-FDA radar with different frequency offsets. For FDA-MIMO radar, Fig.\ref{fig_9a} illustrates that the range-ambiguous clutter is mixed with secondary range-ambiguous clutter, resulting in a large amount of clutter energy near the range bin indexed at 800. For C-FDA radar, Fig.\ref{fig_9b}, Fig.\ref{fig_9c}, and Fig.\ref{fig_9d} present the clutter spectrum when the frequency offset is $200$ kHz, $100$ kHz, and $50$ kHz. With the decrease of frequency offset, the range ambiguity level of clutter is degraded, resulting in lower clutter energy near the range bin indexed at 800, especially improving the range-dependency of clutter Doppler.

\begin{figure}[!t]
\centering
\includegraphics[width=3.7in]{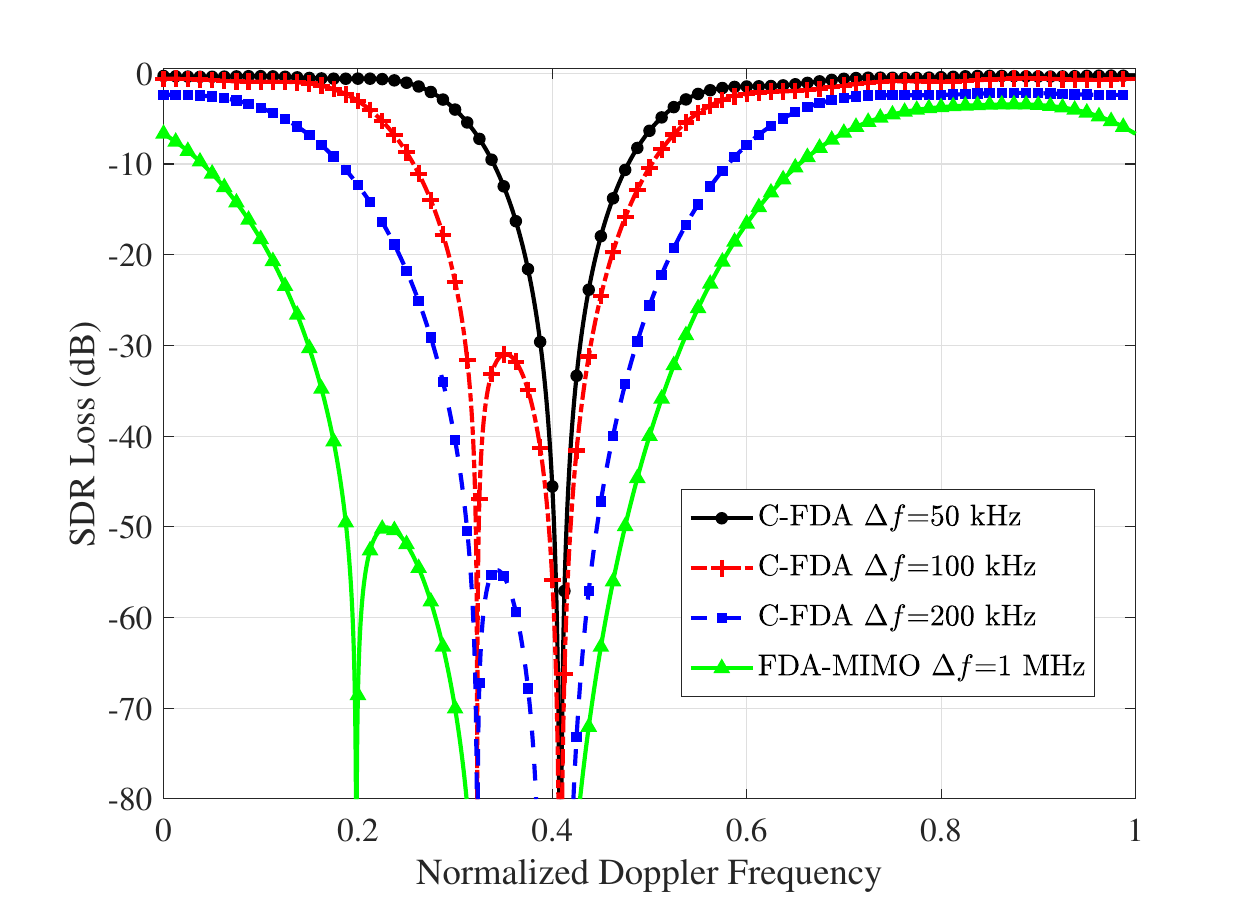}
\caption{Comparison of SDR loss for FDA-MIMO and C-FDA with different frequency offset.}
\label{fig_11}
\end{figure}

In Fig.\ref{fig_10}, we present the range-Doppler two-dimensional spectrum after STAP for FDA-MIMO radar and C-FDA radar with different frequency offsets. Fig.\ref{fig_10a} shows that the clutter notch is significantly wide which can worsen the detection performance of different range bins. With the decrease of frequency offset, Fig.\ref{fig_10b}, Fig.\ref{fig_10c}, and Fig.\ref{fig_10d} demonstrate that the proposed C-FDA radar can effectively resolve the secondary range-ambiguous problem and suppress range-ambiguous clutter while remaining the target energy.

Fig.\ref{fig_11} plots the SDR loss corresponding to the target detection for FDA-MIMO radar and C-FDA radar with different frequency offsets, respectively. It can be indicated that the proposed C-FDA radar outperforms the FDA-MIMO radar on range-ambiguous clutter suppression. In addition to the primary notch at the normalized Doppler frequency of 0.4, FDA-MIMO radar has multiple secondary notches caused by the mixing of range-ambiguous clutter and secondary range-ambiguous clutter, which validates the analysis of \eqref{eq.44}. With the frequency offset decreasing, C-FDA radar has a substantially improved performance on range-ambiguous clutter suppression, which is consistent with Fig.\ref{fig_10}.

\begin{figure}[!t]
\centering
\includegraphics[width=3.7in]{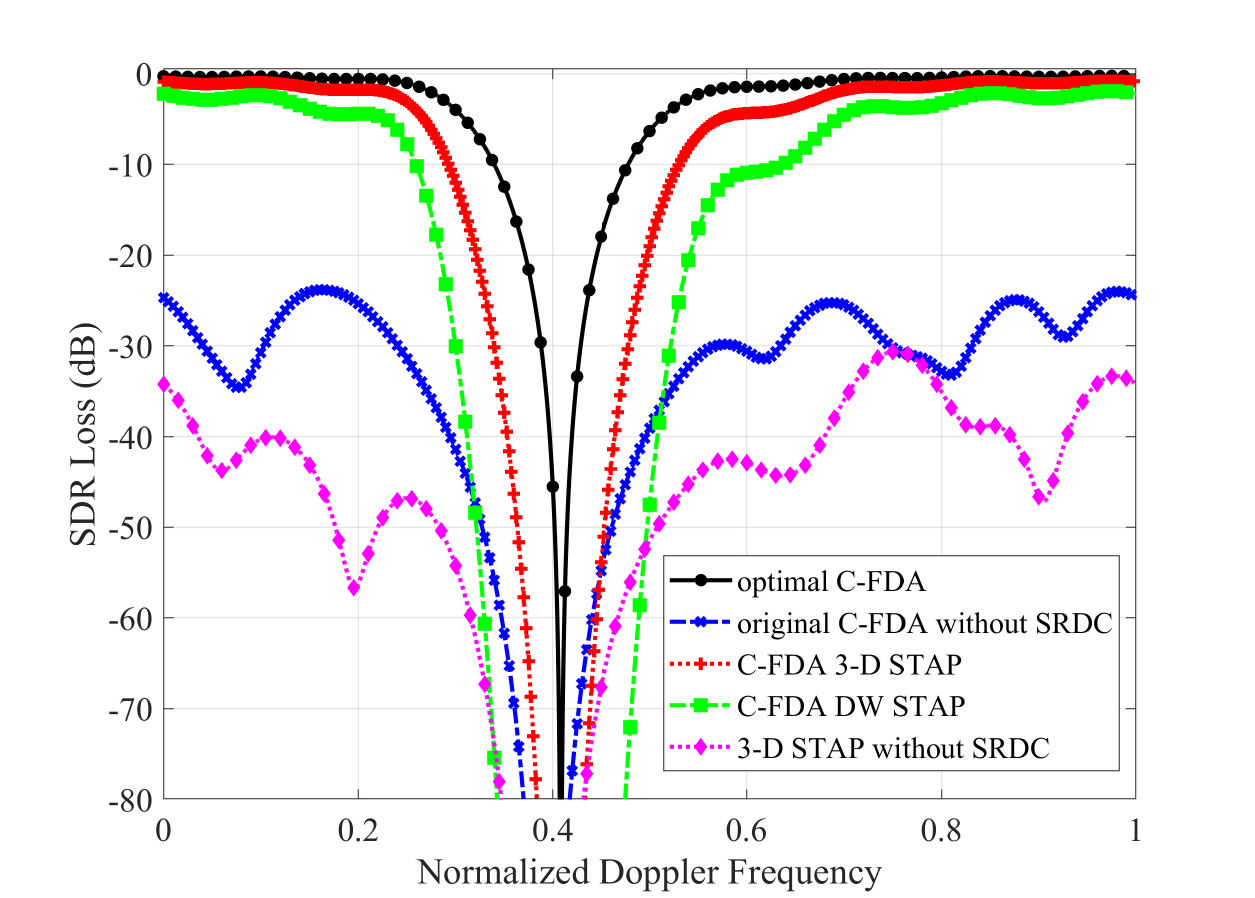}
\caption{Comparison of SDR loss for C-FDA radar with different STAP method.}
\label{fig_12}
\end{figure}

In Fig.\ref{fig_12}, we use typical STAP methods to evaluate the performance of C-FDA radar with a frequency offset of $50$ kHz on range-ambiguous clutter suppression, e.g. azimuth-Doppler-range three-dimensional STAP (3-D STAP) in \cite{ref37} and Doppler warping (DW) STAP \cite{ref38}. In addition, we emphasize the SRDC for C-FDA radar in Fig.\ref{fig_12}. It can be seen that the performance of 3-D STAP is better than DW STAP and the C-FDA-STAP combined with SRDC outperforms C-FDA-STAP without SRDC.

\section{Conclusions}

This paper introduces the concept of C-FDA radar, which combines some benefits of PA and FDA-MIMO radars. The transmitter performs an azimuth beamforming and the receiver performs multi-channel frequency mixing (MFM) and multi-channel matched filtering (MMF). Thanks to this, the C-FDA radar achieves higher coherent processing gain than a PA radar and maintain the range-dependency under the condition of small frequency offset. A generalized space-time-range signal model is derived for the C-FDA radar. By exploiting this model, we show that the C-FDA radar has superior performance than FDA-MIMO radar in terms of interference and clutter cancellation. In particular, we observe that when the frequency offset is much smaller than the signal bandwidth, the C-FDA radar can effectively resolve the secondary range-ambiguous (SRA) problem, whereas the FDA-MIMO radar clutter suppression capabilities are strongly affected by the SRA. The advantages of the C-FDA radar over the FDA-MIMO radar are also demonstrated through the analysis of the output SINR and the SDR loss. Numerical results confirm the goodness of the proposed C-FDA radar in terms of receiver processing output, mainlobe interference suppression, and range-ambiguous clutter suppression.


\setcounter{TempEqCnt}{\value{equation}} 
\setcounter{equation}{52}
\begin{figure*}[ht]
\begin{align}
z_{m,n,k}\left( t \right) =\xi_t \varGamma _{m,n,k}\exp \left\{ j\pi \kappa\left( \tau _{n,k}^{2}-t^2 \right) \right\} \int_{-\infty}^{\infty}{A\left( \frac{x-\tau _{n,k}}{T_p} \right) A\left( \frac{t-x}{T_p} \right) \exp \left\{ j2\pi \kappa x\left( t-\tau _{n,k} \right) \right\}}P_m\left( x \right) \mathrm{d}x
\label{eq.A6}
\end{align}
\hrulefill
\end{figure*}
\setcounter{equation}{\value{TempEqCnt}} 
\setcounter{equation}{38}

\setcounter{TempEqCnt}{\value{equation}} 
\setcounter{equation}{55}
\begin{figure*}[ht]
\begin{subequations}
\begin{align}
z^{(1)}_{m,n,k}\left( t \right) =&\xi_t \varGamma _{m,n,k}\exp \left\{ j\pi \kappa\left( \tau _{n,k}^{2}-t^2 \right) \right\} \int_{-\infty}^{\infty}{A\left( \frac{x-\tau _{n,k}}{T_p} \right) A\left( \frac{t-x}{T_p} \right) \exp \left\{ j2\pi \kappa x\left( t-\tau _{n,k} \right) \right\}}P^{(1)}_m\left( x \right) \mathrm{d}x
\nonumber\\
=&\xi_t \varGamma _{m,n,k}\exp \left\{ j\pi \kappa\left( \tau _{n,k}^{2}-t^2 \right) \right\} \sum_{i=1}^M{\exp \left\{ j2\pi \left( i-m \right) \varDelta f\left( t-\tau _{n,k} \right) \right\}}
\nonumber\\
&\times
\int_{-\infty}^{\infty}{A\left( \frac{x-\tau _{n,k}}{T_p} \right) A\left( \frac{t-x}{T_p} \right) \exp \left\{ j2\pi \kappa x\left( t-\tau _{n,k} \right) \right\}}\mathrm{d}x
\nonumber\\
\approx &\xi_t \varGamma _{m,n,k}\sum_{i=1}^M{\exp \left\{ j2\pi \left( i-m \right) \varDelta f\left( t-\tau _{n,k} \right) \right\} \times E\mathrm{sinc}\left[ \pi \kappa T_p\left( t-\tau _{n,k} \right) \right]}\label{eq.A9a}
\\
z_{m,n,k}^{\left( 2 \right)}\left( t \right) =&\xi_t \varGamma _{m,n,k}\exp \left\{ j\pi \kappa\left( \tau _{n,k}^{2}-t^2 \right) \right\} \int_{-\infty}^{\infty}{A\left( \frac{x-\tau _{n,k}}{T_p} \right) A\left( \frac{t-x}{T_p} \right) \exp \left\{ j2\pi \kappa x\left( t-\tau _{n,k} \right) \right\}}P_{m}^{\left( 2 \right)}\left( x \right) \mathrm{d}x
\nonumber\\
=&\xi_t \varGamma _{m,n,k}\exp \left\{ j\pi \kappa\left( \tau _{n,k}^{2}-t^2 \right) \right\} 
\nonumber\\
\times \int_{-\infty}^{\infty}A&\left( \frac{x-\tau _{n,k}}{T_p} \right) A\left( \frac{t-x}{T_p} \right) \sum_{i=1\&i\ne j}^M{\sum_{j=1}^M{\exp \left\{ j2\pi \left\{ \kappa \left( t-\tau _{n,k} \right) x+\varDelta f\left[ \left( i-j \right) x-\left( i-m \right) \tau _{n,k}-\left( m-j \right) t \right] \right\} \right\}}}\mathrm{d}x
\nonumber\\
\approx &\xi_t \varGamma _{m,n,k}\sum_{i=1\&i\ne j}^M{\sum_{j=1}^M{E}}\mathrm{sinc}\left\{ \pi T_p\left[ \kappa\left( t-\tau _{n,k} \right) +\left( i-j \right) \varDelta f \right] -\pi \varDelta f\left[ \left( i-m \right) \tau _{n,k}-\left( m-j \right) t \right] \right\} 
\label{eq.A9b}
\end{align}
\end{subequations}
\hrulefill
\end{figure*}
\setcounter{equation}{\value{TempEqCnt}} 
\setcounter{equation}{47}




{\appendices
\section{Proof of Theorem 1}
This appendix shows the proof of Theorem \ref{Thm.1}.

Considering that the baseband signal is typical linear frequency modulation (LFM) signal, then
\begin{equation}
\phi\left( t \right) =A\left( \frac{t}{T_p} \right) \exp \left\{ j\pi \kappa t^2 \right\} 
\label{eq.A1}
\end{equation}
where $A\left( {t}/{T_p} \right)$ is a rectangular envelope signal with the time width $T_p$, the bandwidth of $B$, and the unit energy $\int_{T_p}{\left| A\left( {t}/{T_p} \right) \right|}^2\mathrm{d}t=1$. $\kappa=B/T_p$ denotes the frequency modulation ratio. According to the proposed receiver structure mentioned in Section III, the k-th pulse signal received by the n-th element can be expressed as
\begin{align}
y_{n,k}\left( t \right) =&\xi_t A\left( \frac{t-\tau _{n,k}}{T_p} \right) \exp \left\{ j\pi \kappa\left( t-\tau _{n,k} \right) ^2 \right\} 
\label{eq.A2}\\
&\times\sum_{i=1}^M{\exp \left\{ j2\pi \left[ f_c+\left( i-1 \right) \varDelta f \right] \left( t-\tau _{n,k} \right) \right\}}
\nonumber
\end{align}
where $\tau _{n,k}$ is mentioned after \eqref{eq.17}. After the aforementioned MFM, the k-th pulse signal in the m-th channel of the n-th receive element can be written as 
\begin{align}
\tilde{y}_{m,n,k}\left( t \right) =&\xi_t A\left( \frac{t-\tau _{n,k}}{T_p} \right) \exp \left\{ j\pi \kappa\left( t-\tau _{n,k} \right) ^2 \right\} 
\nonumber\\
&\times \exp \left\{ -j2\pi \left[ f_c+\left( m-1 \right) \varDelta f \right] \tau _{n,k} \right\} 
\nonumber\\
&\times \sum_{i=1}^M{\exp \left\{ j2\pi \left( i-m \right) \varDelta f\left( t-\tau _{n,k} \right) \right\}}
\label{eq.A3}
\end{align}
According to \eqref{eq.18} of the MMF, the matched filter function of the m-th channel can be expressed as
\begin{align}
h_m\left( t \right) =A\left( \frac{t}{T_p} \right) \exp \left\{ -j\pi \kappa t^2 \right\} \sum_{i=1}^M{\exp \left\{j2\pi \left( i-m \right) \varDelta ft \right\}}
\label{eq.A4}
\end{align}

The output of MFM-then-MMF in the m-th channel can be calculated by convolution with \eqref{eq.A3} and \eqref{eq.A4}.
\begin{align}
z_{m,n,k}\left( t \right) =&\tilde{y}_{m,n,k}\left( t \right) \circledast h_m\left( t \right) 
\nonumber\\
=&\int_{-\infty}^{\infty}{\tilde{y}_{m,n,k}\left( x \right) \times}h_m\left( t-x \right) \mathrm{d}x
\label{eq.A5}
\end{align}
Substituting \eqref{eq.A3} and \eqref{eq.A4} into \eqref{eq.A5}, we get \eqref{eq.A6}, where
\setcounter{TempEqCnt}{\value{equation}} 
\setcounter{equation}{53}
\begin{subequations}
\begin{align}
\varGamma _{m,n,k}=&\exp \left\{ -j2\pi \left[ f_c+\left( m-1 \right) \varDelta f \right] \tau _{n,k} \right\} \label{eq.A7a}
\\
P_m\left( x \right) =&\sum_{i=1}^M{e^{ j2\pi \left( i-m \right) \varDelta f\left( x-\tau _{n,k} \right)}}\sum_{j=1}^M{e^{ j2\pi \left( j-m \right) \varDelta f\left( t-x \right)}}
\label{eq.A7b}
\end{align}
\end{subequations}
Focusing on $P_m\left( x \right)$, we divide into $P^{(1)}_m(x)$ when $i=j$ and $P^{(2)}_m(x)$ when $i\ne j$, corresponding to $z^{(1)}_{m,n,k}\left( t \right)$ and $z^{(2)}_{m,n,k}\left( t \right)$, respectively.
\begin{subequations}
\begin{align}
&P_{m}^{\left( 1 \right)}\left( x \right) =\sum_{i=1}^M{e^{ j2\pi \left( i-m \right) \varDelta f\left( t-\tau _{n,k} \right)}}
\label{eq.A8a}
\\
&P_{m}^{\left( 2 \right)}\left( x \right) =\sum_{i=1,i\ne j}^M{\sum_{j=1}^M{e^{j2\pi \varDelta f\left[ \left( i-j \right) x-\left( i-m \right) \tau _{n,k}-\left( m-j \right) t \right]}}}\label{eq.A8b}
\end{align}
\end{subequations}
Note that the baseband waveform signal $A(t/T_p)$ within the integral restricts the interval of $t$ to two cases, $\tau _{n,k}\leqslant t\leqslant \tau _{n,k}+T_p$ or $\tau _{n,k}-T_p\leqslant t\leqslant \tau _{n,k}$. Integrating the integration results for these two cases, we explicitly write $z^{(1)}_{m,n,k}\left( t \right)$ and $z^{(2)}_{m,n,k}\left( t \right)$ in \eqref{eq.A9a} and \eqref{eq.A9b}, where $E$ denotes the amplitude coefficient that is discussed in Appendix B.

Precisely, $z^{(1)}_{m,n,k}\left( t \right)$ is summed by $M$ polynomials, where each term can get a peak at $t_{\text{peak}}=\tau_{n,k}$ according to the properties of the sinc function\cite{ref4}. And $z^{(2)}_{m,n,k}\left( t \right)$ is summed by $M\times(M-1)$ polynomials. Thereby, the peak of the ($i$,$j$)-th term can be expressed as 
\setcounter{TempEqCnt}{\value{equation}} 
\setcounter{equation}{56}
\begin{align}
\tau _{\mathrm{peak}}=&\frac{\left( i-m \right) \varDelta f\tau _{n,k}-\left( i-j \right) \varDelta fT_p+\kappa \tau _{n,k}T_p}{\kappa T_p-\left( m-j \right) \varDelta f}
\nonumber\\
=&\frac{\left[ \left( i-m \right) \tau _{n,k}-\left( i-j \right) T_p \right] \varrho +\tau _{n,k}}{1-\left( m-j \right) \varrho}
\label{eq.A10}
\end{align}
where $\varrho =\varDelta f/B$. When $\varDelta f\ll B$, the peaks of each polynomials in $z^{(2)}_{m,n,k}\left( t \right)$ is also at $t_{\text{peak}}=\tau_{n,k}$. Therefore, $z_{m,n,k}\left( t \right)$ can be sampled at $t=\tau_{n,k}$ to get the peak value, containing information related to the echo signal. Integrating \eqref{eq.A7a}, \eqref{eq.A9a}, and \eqref{eq.A9b}, 
\begin{align}
z_{m,n,k}= E\xi_t \exp \left\{ -j2\pi \left[ f_c+\left( m-1 \right) \varDelta f \right] \tau _{n,k} \right\} 
\label{eq.A11}
\end{align}
Then substituting \eqref{eq.13a}, \eqref{eq.13c}, and \eqref{eq.13d} associated with $\tau _{n,k}$ and ignoring the quadratic term,
\begin{align}
z_{m,n,k}\approx & E\xi_t e^{ j2\pi \left( m-1 \right) f_R }e^{ j2\pi \left( n-1 \right) f_{\varphi}} e^{ j2\pi \left( k-1 \right) f_D } 
\label{eq.A12}
\end{align}
Thereby, for the proposed C-FDA radar, the $MNK$-dimensional target data vector $\boldsymbol{t}_{\mathrm{C-F}}$ with $z_{m,n,k}$ as the $(m,n,k)$-th element can be expressed as 
\begin{align}
\boldsymbol{t}_{\mathrm{C}-\mathrm{F}}=&\left[ \begin{matrix}
        z_{1,1,1}&        \cdots&      z_{m,n,k}&       \cdots&     z_{M,N,K}\\
\end{matrix} \right] ^{{T}}
\nonumber\\
=&E\xi _t\boldsymbol{a}_R\left( R_t \right) \otimes \boldsymbol{a}_{\varphi}\left( \varphi _t,\theta _t \right) \otimes \boldsymbol{a}_D\left( f_D \right) 
\label{eq.A13}
\end{align}
where $\boldsymbol{a}_R\left( R_t \right)$, $\boldsymbol{a}_{\varphi}\left( \varphi _t,\theta _t \right)$, and $\boldsymbol{a}_D\left( f_D \right)$ are the transmit range-dependent frequency, the receive spatial frequency, and the Doppler frequency, respectively, easily derived by $f_R$, $f_\varphi$, and $f_D$ as shown in \eqref{eq.20a}, \eqref{eq.20b}, and \eqref{eq.20c}.

\section{Proof of Theorem 2}
This appendix shows the proof of Theorem \ref{Thm.2}, especially the explaination of $E$ in \eqref{eq.19}.

Assume the Fourier transform of $\phi \left( t \right)$ is $Z(f)$. Accooding to Parseval - Gutzmer theorem\cite{ref1}\cite{ref4}, 
\begin{equation}
\int_{-\infty}^{+\infty}{\left| Z\left( f \right) \right|^2\mathrm{d}f}=\int_{T_p}{\left| \phi \left( t \right) \right|^2\mathrm{d}t}=1
\label{eq.B1}
\end{equation}
The Fourier transform of the signal transmitted by the m-th element in \eqref{eq.2} can be expressed as
\begin{equation}
\mathcal{F} \left\{ u_m\left( t \right) \right\} =Z\left[ f-\left( m-1 \right) \varDelta f-f_c \right] 
\label{eq.B2}
\end{equation}
Then the Fourier transform of the received signal related to the n-th receive element and k-th pulse can be expressed as 
\begin{align}
\mathcal{F} \left\{ y_{n,k}\left( t \right) \right\} =&\xi_t\exp \left\{ -j2\pi f\tau _{n,k} \right\}
\nonumber\\ 
&\times \sum_{i=1}^M{Z\left[ f-\left( i-1 \right) \varDelta f -f_c\right]}
\label{eq.B3}
\end{align}
After the MFM in the m-th channel, then
\begin{align}
\mathcal{F} \left\{ \tilde{y}_{m,n,k}\left( t \right) \right\} =&\xi\exp \left\{ -j2\pi \left[ f_c+\left( m-1 \right) \varDelta f-f \right] \tau _{n,k} \right\} 
\nonumber\\
&\times\sum_{i=1}^M{Z\left[ f-\left( m-i \right) \varDelta f \right]}
\label{eq.B4}
\end{align}
Note that the Fourier transform of the matched filtering function in \eqref{eq.A4} is
\begin{equation}
\mathcal{F} \left\{ h_m\left( t \right) \right\} =\sum_{j=1}^M{Z^*\left[f-\left( m-j \right) \varDelta f \right]}
\label{eq.B5}
\end{equation}
The output of the matched filtering in the m-th channel can be expressed as 
\begin{equation}
Q_m\left( f \right) =\mathcal{F} \left\{ \tilde{y}_{m,n,k}\left( t \right) \right\} \times \mathcal{F} \left\{ h_m\left( t \right) \right\} =G_m\left( f \right) H_m\left( f \right) 
\label{eq.B6}
\end{equation}
where 
\begin{subequations}
\begin{align}
&G_m\left( f \right)=\exp \left\{ j2\pi \left[ f-f_c+\left( m-1 \right) \varDelta f \right] \tau _{n,k} \right\} 
\label{eq.B7a}
\\
&H_m\left( f \right)=\sum_{i=1}^M{Z\left[ f-\left( m-i \right) \varDelta f \right]}\sum_{j=1}^M{Z^*\left[ f-\left( m-j \right) \varDelta f \right]}\label{eq.B7b}
\end{align}
\end{subequations}
We focus on $H_m(f)$ that strongly influences the amplitude of the m-th channel output. 
\begin{equation}
\int_{-\infty}^{+\infty}{\left| Q_m\left( f \right) \right|^2\mathrm{d}f}=\int_{-\infty}^{+\infty}{\left| H_m\left( f \right) \right|^2\mathrm{d}f}
\label{eq.B8}
\end{equation}
Define an auxiliary vector $\boldsymbol{b}$ and $\boldsymbol{B}=\boldsymbol{b}\boldsymbol{b}^T$,
\begin{equation}
\boldsymbol{b}=\left[ \begin{matrix}
        Z\left[ f-\left( m-1 \right) \varDelta f \right]&               \cdots&            Z\left[ f-\left( m-M \right) \varDelta f \right]\\
\end{matrix} \right] ^T
\label{eq.B9}
\end{equation}
then $H_m\left( f \right)$ is the sum of all entries in $\boldsymbol{B}$. Apparently, the diagonal entries in $\boldsymbol{B}$ is the maximum of each row, thereby
\begin{equation}
\mathrm{tr}\left( \boldsymbol{B} \right) <H_m\left( f \right) \leqslant M\mathrm{tr}\left( \boldsymbol{B} \right) 
\label{eq.B10}
\end{equation}
where the condition for the equal to hold is $\varDelta f=0$. Therefore, the m-th channel output energy is 
\begin{equation}
M<\int_{-\infty}^{+\infty}{\left| Q_m\left( f \right) \right|^2\mathrm{d}f}\leqslant M^2
\label{eq.B11}
\end{equation}
and the amplitude of the m-th channel output is $\sqrt{M}<E\leqslant M$. 

Similar to \eqref{eq.11a} and \eqref{eq.11b}, the coherent array gain of C-FDA radar can be calculated as
\begin{equation}
\varOmega _{\mathrm{C}-\mathrm{F}}=\frac{\sigma _{\mathrm{t}}^{2}\left| \boldsymbol{w}_{\mathrm{C}-\mathrm{F}}^{H}\boldsymbol{t}_{\mathrm{C}-\mathrm{F}} \right|^2}{\boldsymbol{w}_{\mathrm{C}-\mathrm{F}}^{H}\boldsymbol{R}_{\mathrm{n}}^{\left( \mathrm{C}-\mathrm{F} \right)}\boldsymbol{w}_{\mathrm{C}-\mathrm{F}}}=E^4NK\cdot \mathrm{SNR}_{\mathrm{in}}
\label{eq.B12}
\end{equation}
then the comparison between PA radar, FDA-MIMO radar, and C-FDA radar is 
\begin{equation}
M\varOmega _{\mathrm{F}-\mathrm{M}}=\varOmega _{\mathrm{PA}}=M^2NK\cdot \mathrm{SNR}_{\mathrm{in}}<\varOmega _{\mathrm{C}-\mathrm{F}}\leqslant M^2\varOmega _{\mathrm{PA}}
\label{eq.B13}
\end{equation}
}


\begin{thebibliography}{1}
\bibliographystyle{IEEEtran}

\bibitem{ref1}
Van Trees, H L. ``Detection, estimation, and modulation theory, part IV: Optimum array processing,'' John Wiley \& Sons, 2004.

\bibitem{ref2}
J. Li and P. Stoica, ``MIMO Radar with Colocated Antennas,'' \textit{IEEE Signal Processing Magazine}, vol. 24, no. 5, pp. 106--114, Sep. 2007.

\bibitem{ref3}
J. Li, P. Stoica and X. Zheng, ``Signal Synthesis and Receiver Design for MIMO Radar Imaging,'' \textit{IEEE Trans. on Signal Processing}, vol. 56, no. 8, pp. 3959-3968, Aug. 2008.

\bibitem{ref4}
B. B. Mahafza, ``Radar system analysis and design using MATLAB (4th Edition),'' Taylor \& Francis Group, New York, pp. 690, Mar. 2022.

\bibitem{ref5}
J. Bergin and J. R. Guerci, ``MIMO radar: Theory and application,'' Artech House, 2018.

\bibitem{ref6}
P. Antonik, M. C. Wicks, H. D. Griffiths and C. J. Baker, ``Frequency diverse array radars,'' in \textit{Proc. 2006 IEEE Conf.on Radar}, Verona, NY, USA, 2006.

\bibitem{ref7}
W. -Q. Wang, ``Frequency diverse array antenna: New opportunities,'' \textit{IEEE Antennas and Propagation Magazine}, vol. 57, no. 2, pp. 145-152, pp. 770--772, Apr. 2015.

\bibitem{ref8}
P. F. Sammartino, C. J. Baker, H. D. Griffiths, ``Frequency diverse MIMO techniques for radar,'' \textit{IEEE Trans. on Aerospace and Electronic Systems}, vol. 49, no. 1, pp. 201--222, Jan. 2013.

\bibitem{ref9}
L. Lan, M. Rosamilia, A. Aubry, A. De Maio, G. Liao and J. Xu, ``Adaptive target detection with polarimetric FDA-MIMO radar,'' \textit{IEEE Trans. on Aerospace and Electronic Systems}, vol. 59, no. 3, pp. 2204--2220, Sep. 2022.

\bibitem{ref10}
Q. Liu, J. Xu, Z. Ding and H. C. So, ``Target Localization With Jammer Removal Using Frequency Diverse Array,'' \textit{IEEE Trans. on Vehicular Technology}, vol. 69, no. 10, pp. 11685--11696, Oct. 2020.

\bibitem{ref11}
M. Wu, C. Hao, Q. Hu and D. Orlando, ``Sparsity-Based Processing to Enhance the Reverberation Suppression for FDA-MIMO Sonars,'' \textit{IEEE Trans. on Aerospace and Electronic Systems}, vol. 60, no. 2, pp. 1556--1569, Apr. 2024.

\bibitem{ref12}
J. Jian, W.-Q. Wang, H. Chen and B. Huang, ``Physical-Layer security for multi-user communications with frequency diverse array-based directional modulation,'' \textit{IEEE Trans. on Vehicular Technology}, vol. 72, no. 8, pp. 10133--10145, Feb. 2023.

\bibitem{ref13}
M. Tan, C. Wang and Z. Li, ``Correction Analysis of Frequency Diverse Array Radar About Time,'' \textit{IEEE Trans. on Antennas and Propagation}, vol. 69, no. 2, pp. 834--847, Feb. 2021.

\bibitem{ref14}
A. Hassanien and S. A. Vorobyov, ``Phased-MIMO radar: a tradeoff between phased-array and MIMO radars,'' \textit{IEEE Trans. on Signal Processing}, vol. 58, no. 6, pp. 3137--3151, Feb. 2010.

\bibitem{ref15}
W.-Q. Wang, ``Phased-MIMO radar with frequency diversity for range-dependent beamforming,'' \textit{IEEE Sensors Journal}, vol. 13, no. 4, pp. 1320--1328, Dec. 2012.

\bibitem{ref16}
W.-Q. Wang and H. C. So, ``Transmit subaperturing for range and angle estimation in frequency diverse array radar,'' \textit{IEEE Trans. on Signal Processing}, vol. 62, no. 8, pp. 2000--2011, Feb. 2014.

\bibitem{ref17}
R. Gui, W.-Q. Wang, C. Cui and H. C. So, ``Coherent pulsed-FDA Radar receiver design with time-variance consideration: SINR and CRB analysis,'' \textit{IEEE Trans. on Signal Processing}, vol. 66, no. 1, pp. 200-214, Dec. 2017.

\bibitem{ref18}
H. Wang, Y. Quan, G. Liao, S. Zhu, J. Xu and L. Huang, ``Space-time coding technique for coherent frequency diverse array,'' \textit{IEEE Trans. on Signal Processing}, vol. 69, pp. 5994--6008, Oct. 2021.

\bibitem{ref19}
Z. Liu, S. Zhu, J. Xu, X. He, K. Duan and L. Lan, ``Range-Ambiguous Clutter Suppression for STAP-Based Radar With Vertical Coherent Frequency Diverse Array,'' \textit{IEEE Trans. on Geoscience and Remote Sensing}, vol. 61, pp. 1-17, Jul. 2023.

\bibitem{ref20}
L. Lan, M. Rosamilia, A. Aubry, A. De Maio and G. Liao, ``FDA-MIMO transmitter and receiver optimization,'' \textit{IEEE Trans. on Signal Processing}, vol. 72, pp. 1576--1589, Feb. 2024.

\bibitem{ref21}
S. Qin, Y. D. Zhang, M. G. Amin and F. Gini, ``Frequency Diverse Coprime Arrays With Coprime Frequency Offsets for Multitarget Localization,'' \textit{IEEE Journal of Selected Topics in Signal Processing}, vol. 11, no. 2, pp. 321--335, Mar. 2017.

\bibitem{ref22}
Y. Xu, X. Shi, J. Xu and P. Li, ``Range-Angle-Dependent Beamforming of Pulsed Frequency Diverse Array,'' \textit{IEEE Trans. on Antennas and Propagation}, vol. 63, no. 7, pp. 3262--3267, Jul. 2015.

\bibitem{ref23}
B. Chen, X. Chen, Y. Huang and J. Guan, ``Transmit beampattern synthesis for the FDA radar,'' \textit{IEEE Antennas and Wireless Propagation Letters}, vol. 17, no. 1, pp. 98--101, Jan. 2018.

\bibitem{ref24}
A. -M. Yao, W. Wu and D. -G. Fang, ``Frequency Diverse Array Antenna Using Time-Modulated Optimized Frequency Offset to Obtain Time-Invariant Spatial Fine Focusing Beampattern,'' \textit{IEEE Trans. on Antennas and Propagation}, vol. 64, no. 10, pp. 4434--4446, Oct. 2016.

\bibitem{ref25}
W. Jia, A. Jakobsson and W. -Q. Wang, ``Coherent FDA Receiver and Joint Range–Space–Time Processing,'' \textit{IEEE Trans. on Antennas and Propagation}, vol. 72, no. 1, pp. 745-755, Jan. 2024.

\bibitem{ref26}
L. Lan, J. Xu, G. Liao, Y. Zhang, F. Fioranelli and H. C. So, ``Suppression of mainbeam deceptive jammer with FDA-MIMO radar,'' \textit{IEEE Trans. on Vehicular Technology}, vol. 69, no. 10, pp. 11584--11598, Oct. 2020.

\bibitem{ref27}
L. Lan, Y. Zhang, R. Liao, G. Liao, J. Xu and H. C. So, ``Mainlobe Deceptive Jammer Mitigation With Multipath in EPC-MIMO Radar Exploiting Matrix Decomposition,'' \textit{IEEE Trans. on Vehicular Technology}, Early Access, doi: 10.1109/TVT.2024.3403147.

\bibitem{ref28}
Y. Sun, W.-Q. Wang, and C. Jiang, ``Space–time-range clutter suppression via tensor-based STAP for airborne FDA-MIMO radar,'' \textit{Signal Processing}, vol. 214, 109235, Jan. 2024.

\bibitem{ref29}
J. Zhu, S. Zhu, J. Xu, L. Lan and G. Liao, ``Simultaneous detection and discrimination of mainlobe deceptive jammers in FDA-MIMO radar,'' \textit{IEEE Trans. on Aerospace and Electronic Systems}, vol. 59, no. 6, pp. 7499--7513, Dec. 2023.

\bibitem{ref30}
Y. Liao, H. Tang, W.-Q. Wang and M. Xing, ``A low sidelobe deceptive jamming suppression beamforming method with a frequency diverse array,'' \textit{IEEE Trans. on Antennas and Propagation}, vol. 70, no. 6, pp. 4884--4889, Jun. 2022.

\bibitem{ref31}
B. Tang, J. Tang and Y. Peng, ``MIMO radar waveform design in colored noise based on information theory,'' \textit{IEEE Trans. on Signal Processing}, vol. 58, no. 9, pp. 4684--4697, Sep. 2010.

\bibitem{ref32}
J. Li, L. Xu, P. Stoica, K. W. Forsythe and D. W. Bliss, ``Range compression and waveform optimization for MIMO radar: a CramÃ‰râ€“Rao bound based study,'' \textit{IEEE Trans. on Signal Processing}, vol. 56, no. 1, pp. 218--232, Dec. 2007.

\bibitem{ref33}
L. Wang, W.-Q. Wang and H. C. So, ``Covariance matrix estimation for FDA-MIMO adaptive transmit power allocation,'' \textit{IEEE Trans. on Signal Processing}, vol. 70, pp. 3386--3399, Jun. 2022.

\bibitem{ref34}
C.-Y. Chen and P. P. Vaidyanathan, ``MIMO radar spaceâ€“time adaptive processing using prolate spheroidal wave functions,'' \textit{IEEE Trans. on Signal Processing}, vol. 56, no. 2, pp. 623--635, Jan. 2008.

\bibitem{ref35}
L. Lan, A. Marino, A. Aubry, A. De Maio, G. Liao, J. W. Xu and Y. Zhang, ``GLRT-based adaptive target detection in FDA-MIMO radar,'' \textit{IEEE Trans. on Aerospace and Electronic Systems}, vol. 57, no. 1, pp. 597--613, Feb. 2021.

\bibitem{ref36}
R. Gui, W.-Q. Wang, A. Farina and H. C. So, ``FDA radar with doppler-spreading consideration: Mainlobe clutter suppression for blind-doppler target detection,'' \textit{Signal Processing}, vol. 179, 107773, Feb. 2021.

\bibitem{ref37}
J. Xu, S. Zhu and G. Liao, ``Range Ambiguous Clutter Suppression for Airborne FDA-STAP Radar,'' \textit{IEEE Journal of Selected Topics in Signal Processing}, vol. 9, no. 8, pp. 1620-1631, Dec. 2015.

\bibitem{ref38}
C. Wen, Y. Huang, J. Peng, J. Wu, G. Zheng and Y. Zhang, ``Slow-time FDA-MIMO technique with application to STAP radar,'' \textit{IEEE Trans. on Aerospace and Electronic Systems}, vol. 58, no. 1, pp. 74--95, Feb. 2022.

\bibitem{ref39}
J. Xu, G. Liao, S. Zhu, L. Huang and H. C. So, ``Joint range and angle estimation using MIMO radar with frequency diverse array,'' \textit{IEEE Trans. on Signal Processing}, vol. 63, no. 13, pp. 3396--3410, Apr. 2015.

\bibitem{ref40}
J. Zhu, S. Zhu, J. Xu, L. Lan and X. He, ``Cooperative range and angle estimation with PA and FDA radars,'' \textit{IEEE Trans. on Aerospace and Electronic Systems}, vol. 58, no. 2, pp. 907--921, Apr. 2022.




\bibitem{ref41}
J. Xu, Y. Zhang, G. Liao, Y. Xu and W. Wang, ``Angle-Dependent Matched Filtering for Moving Target Indication With Coherent FDA Radar,'' \textit{IEEE Trans. on Aerospace and Electronic Systems}, vol. 60, no. 3, pp. 3523-3536, Jun. 2024.

\bibitem{ref42}
Y. Liu, H. Ruan, L. Wang and A. Nehorai, ``The Random Frequency Diverse Array: A New Antenna Structure for Uncoupled Direction-Range Indication in Active Sensing,'' \textit{IEEE Journal of Selected Topics in Signal Processing}, vol. 11, no. 2, pp. 295--308, Mar. 2017.

\bibitem{ref43}
K. B. Petersen, M. S. Pedersen, ``The matrix cookbook,'' Technical University of Denmark, vol. 7, no. 15, pp. 510, 2008.


































\end{thebibliography}
%

\newpage

\vfill

\end{document}